\theoremstyle{thmstyleone}%
\newtheorem{theorem}{Theorem}
\newtheorem{proposition}[theorem]{Proposition}%
\newtheorem{lemma}[theorem]{Lemma}
\newtheorem{corollary}[theorem]{Corollary}
\theoremstyle{thmstyletwo}%
\newtheorem{remark}{Remark}%
\theoremstyle{thmstylethree}%
\newtheorem{definition}{Definition}%
\DeclareMathOperator{\curv}{curv}
\DeclareMathOperator{\vol}{vol}
\DeclareMathOperator{\ev}{ev}
\DeclareMathOperator{\Poly}{Poly}
\DeclareMathOperator{\SU}{SU}
\DeclareMathOperator{\SO}{SO}
\DeclareMathOperator{\orr}{or}
\DeclareMathOperator{\Hol}{Hol}
\DeclareMathOperator{\Rre}{Re}
\newcommand{\z}{\mathrm{z}}
\newcommand{\tildegamma}{{\tilde{\gamma}}}
\newcommand{\tildesigma}{{\tilde{\sigma}}}
\begin{document}

\title{Coherent loop states and angular momentum}


\author[1,2]{\fnm{Bruce} \sur{Bartlett}}

\author[1,3]{\fnm{Nzaganya} \sur{Nzaganya}}

\affil[1]{\orgdiv{Mathematics Division}, \orgname{Stellenbosch University}}


\affil[2]{\orgname{National Institute of Theoretical and Computational Sciences (NITHECS)}, \country{South Africa}}

\affil[3]{\orgdiv{Department of Mathematics, Mkwawa University College of Education}, \orgname{University of Dar es Salaam}, \country{Tanzania}}


\abstract{We study Bohr-Sommerfeld states in the context of the irreducible representations of SU(2). These states offer a precise bridge between the classical and quantum descriptions of angular momentum. We show that they recover the usual basis of angular momentum eigenstates used in physics, and give a self-contained proof in this setting of the formula of Bothwick, Paul and Uribe for the asymptotics of the inner product of arbitrary coherent loop states. As an application, we use these states to derive Littlejohn and Yu's geometric formula for the asymptotics of the Wigner matrix elements.
}

\keywords{Geometric quantization, Bohr-Sommerfeld, angular momentum, Wigner matrix}



\maketitle

\section{Introduction}
The purpose of this paper is to study the {\em Bohr-Sommerfeld states} \cite{borthwick1995legendrian} (which we will call {\em coherent loop states} in our setting) in the context of the irreducible representations of $\SU(2)$, and to use these states to derive the `spherical area' formula stated in \cite{littlejohn2009uniform} for the asymptotics of the matrix elements of these representations. We will see that the general theory in \cite{borthwick1995legendrian} takes a particularly simple and elegant form in this context, where the geometry of the Hopf fibration $S^3 \rightarrow S^2$ will play a central role. 

From the viewpoint of physics, the key feature of coherent loop states is that they allow one to actually make rigorous many of the intuitive classical mental images we have for spin angular momentum (such as in \cite{Ponzano1968, brussaard1957classical, biedenharn1981racah}), since they offer a precise and convenient bridge from the classical to the quantum world.

\subsection*{Borthwick, Paul and Uribe's asymptotic formula}

Recall that in geometric quantization of K\"{a}hler manifolds, one starts with a compact holomorphic manifold $M$ and a Hermitian line bundle $L$ over $M$ which is positive in the sense that its  associated connection $\nabla$ induces a symplectic form $\omega = i \text{curv}(\nabla)$ and Riemannian metric on $M$. We think of $M$ as a classical phase space, and we associate to it the quantum Hilbert space
\[
 V_k = H^0(M, L^k)
\]
of holomorphic sections of $L^k$ (note that $V_k$ is finite-dimensional since $M$ is compact). 

Let $\Lambda \subset M$ be a Lagrangian submanifold of $M$ (i.e. $\dim \Lambda = \frac{1}{2} \dim M$ and $\omega|_\Lambda = 0$) which is {\em Bohr-Sommerfeld of order $k$} in the sense that the holonomy of $L^k$ restricted to $\Lambda$ is trivial. In \cite{borthwick1995legendrian}, Borthwick, Paul and Uribe showed how to construct a vector (the `Bohr-Sommerfeld state')
\[
 \Psi^{(k)}_{\tilde{\Lambda}} \in V_k
\]
associated with a parallel-transported lift $\tilde{\Lambda}$ of $\Lambda$ to the unit circle bundle $P \subset L^\vee$, and computed the leading asymptotics as $k\rightarrow \infty$ of inner products 
\[
 \left\langle \Psi_{\tilde{\Lambda}}^{(k)}, \Psi_{\tilde{\Lambda'}}^{(k)} \right\rangle
\]
of such states as an integral over $\Lambda \cap \Lambda'$. 

In the special case where $M$ is a Riemann surface, $\Lambda$ and $\Lambda'$ are simply loops $\gamma$ and $\sigma$ in $M$ where the holonomy of $L^k$ vanishes, and their formula reads as follows in the notation which we introduce in this paper.

\begin{theorem}[\cite{borthwick1995legendrian}] \label{BPUTheorem}Let $L$ be a positive holomorphic Hermitian line bundle over a compact Riemann surface $M$. Let $\gamma$ and $\sigma$ be Bohr-Sommerfeld loops in $M$, parameterized by arclength and intersecting transversely, with parallel-transported lifts $\tildegamma$ and $\tildesigma$ in the unit circle bundle $P \subset L^\vee$ respectively. Then as $k\rightarrow \infty$ through joint Bohr-Sommerfeld values for $\gamma$ and $\sigma$,
\[
 \left\langle \Psi_{\tildegamma}^{(k)}, \Psi_{\tildesigma}^{(k)} \right\rangle \sim \sqrt{2} \sum_{x \in \gamma \cap \sigma} \frac{\omega_x^k e^{i \orr_x (\theta_x /2 - \pi/4)}}{\sqrt{\sin \theta_x}}
\]
where $\theta_x \in (0, \pi)$ is the angle between $\gamma$ and $\sigma$ at $x \in M$, $\orr_x = \pm 1$ is the orientation of that angle (it equals $+1$ if rotating $\gamma'$ to $\sigma'$ agrees with the orientation of $M$, and $-1$ otherwise), and $\omega_x = \tildegamma_x / \tildesigma_x \in U(1)$.
\end{theorem}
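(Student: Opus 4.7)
The plan is to represent each Bohr-Sommerfeld state $\Psi^{(k)}_{\tildegamma}$ as a one-dimensional superposition of coherent states along the loop, convert the inner product into a double oscillatory integral over $\gamma \times \sigma$, and apply the method of stationary phase. Concretely, I would write
\[
 \Psi^{(k)}_{\tildegamma} \;=\; \int_{\gamma} e^{(k)}_{\tildegamma(t)}\, dt,
\]
where $e^{(k)}_{p} \in V_k$ is the coherent state at $p \in P$ with its $U(1)$ phase locked to the unit-modulus lift $p$. The Bohr-Sommerfeld condition at level $k$ ensures that $\tildegamma$ closes up in $P$, so the integrand is single-valued and $\Psi^{(k)}_{\tildegamma}$ is a well-defined element of $V_k$. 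Expanding then gives
\[
 \bigl\langle \Psi^{(k)}_{\tildegamma}, \Psi^{(k)}_{\tildesigma}\bigr\rangle = \iint_{\gamma \times \sigma} \bigl\langle e^{(k)}_{\tildegamma(t)}, e^{(k)}_{\tildesigma(s)}\bigr\rangle\, dt\, ds.
\]

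The second step is to analyze these pairings semiclassically. For a positive line bundle on a K\"ahler manifold, $\langle e^{(k)}_p, e^{(k)}_q\rangle$ is of WKB form: its modulus is at most $1$, attaining that bound exactly when $\pi(p)=\pi(q)\in M$, where the pairing reduces to the $k$th power of the $U(1)$ ratio of $p$ and $q$ in the same fibre of $P$; off that diagonal the modulus decays like $e^{-k d(\pi(p),\pi(q))^2/2}$, with leading Hessian determined by the K\"ahler metric. This puts the double integral into the standard oscillatory form
\[
 \iint a(t,s)\, e^{k \Phi(t,s)}\, dt\, ds
\]
with a smooth $O(1)$ amplitude $a$ and a phase $\Phi$ whose real part vanishes exactly at intersection points $x\in\gamma\cap\sigma$ and is strictly negative elsewhere.

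The third step is stationary phase. Transversality of $\gamma$ and $\sigma$ means each intersection point is an isolated nondegenerate critical point of $\Phi$. At such a point $x$, I would choose a holomorphic Darboux coordinate in which $\gamma'$ and $\sigma'$ are unit vectors making angle $\theta_x$; the complex Hessian of $\Phi$ then takes an explicit form involving $e^{\pm i\theta_x}$. The peak value of the pairing contributes $\omega_x^k = (\tildegamma_x/\tildesigma_x)^k$, while the Gaussian--Fresnel evaluation produces the amplitude $\sqrt{2}/\sqrt{\sin\theta_x}$ together with the phase $e^{i\orr_x(\theta_x/2-\pi/4)}$: the $-\pi/4$ is the standard Fresnel shift, the $\theta_x/2$ comes from the argument of $\det \Hess \Phi$, and $\orr_x$ records which branch of the square root is selected, i.e.\ the orientation of the rotation $\gamma'\mapsto\sigma'$ relative to that of $M$.

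The main obstacle will be the phase bookkeeping: checking that the sign conventions in the stationary phase formula, the parallel-transport choices for $\tildegamma$ and $\tildesigma$, and the branch of $\sqrt{\det \Hess \Phi}$ combine to give exactly $\omega_x^k\, e^{i\orr_x(\theta_x/2-\pi/4)}$ rather than a conjugate or shifted variant. The delicate point is verifying that the off-diagonal quadratic part of $\Phi$ really has the complex argument $\pm\theta_x$ claimed and that the chosen branch of its determinant-square-root yields the orientation sign $\orr_x$; this reduces to a careful local computation in a K\"ahler normal chart together with the standard sign rule for Fresnel integration. The global inputs---exponential decay of the coherent-state overlap off the diagonal, smoothness of the amplitude, and independence of the answer from the amplitude normalization---are standard and let the whole argument be localized at each intersection point.
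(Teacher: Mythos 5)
Your skeleton coincides with the paper's: write $\Psi^{(k)}_{\tildegamma}$ as an integral of coherent states over the parallel-transported lift, exchange the order of integration so the inner product becomes a double integral over $S^1\times S^1$ of coherent-state overlaps (the Bergman kernel evaluated at $(\gamma(s),\sigma(t))$), and apply complex stationary phase at the intersection points. Two concrete issues, though. First, the paper only executes this for $M=S^2$ (Theorem \ref{mainthm1}), where the overlap is \emph{exactly} $\frac{k+1}{2\pi}\langle q,p\rangle_{\mathbb{C}^2}^{k}$, so the integrand is globally of the form $f\,e^{ikS}$ with $S$ analytic and the Pemantle--Wilson theorem applies verbatim. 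For a general compact Riemann surface, the ``WKB form'' of $\langle\psi^{(k)}_p,\psi^{(k)}_q\rangle$ you invoke is the near-diagonal Bergman kernel expansion with uniform error control; that is a substantial analytic theorem in its own right (it is exactly where \cite{borthwick1995legendrian} resorts to Hermite Fourier integral operators), and the resulting phase is only almost-analytic, so you must either import that expansion explicitly and control the remainders through the stationary-phase argument, or restrict to $S^2$ as the paper does. Second, your normalization is off by a power of $k$: the peak value of $\langle\psi^{(k)}_p,\psi^{(k)}_q\rangle$ on the diagonal is not $1$ but the on-diagonal Bergman kernel, $\sim\frac{k}{2\pi}$ in complex dimension one (Lemma \ref{coh_state_S2}). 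With your ``$O(1)$ amplitude'', the two-dimensional stationary-phase prefactor $2\pi/k$ would make each intersection point contribute $O(1/k)$, contradicting the $O(1)$ right-hand side of the theorem; the $\frac{k}{2\pi}$ amplitude is precisely what cancels the $2\pi/k$.

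Where you genuinely diverge from the paper is at the Hessian. You propose a direct local computation in a K\"ahler normal chart at each intersection point; the paper instead computes the Hessian once, explicitly, for two great circles meeting at angle $\beta$ (the warm-up Proposition \ref{warmupprop}), and then observes that criticality and the quadratic data of the phase depend only on the tangent vectors at the intersection, so the geodesic computation already determines the contribution $\sqrt{2}\,\omega_x^k\,e^{i\orr_x(\theta_x/2-\pi/4)}/\sqrt{\sin\theta_x}$ for arbitrary transverse loops. Your route is more self-contained and generalizes beyond $S^2$, but it concentrates all of the risk in exactly the ``phase bookkeeping'' you flag --- the argument of the off-diagonal Hessian entry and the branch of the square root --- which the paper's symmetry reduction sidesteps. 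If you carry out your plan, do the normal-chart computation against the known answer for two great circles as a consistency check.
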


In \cite{borthwick1995legendrian} this result was applied to study modular forms. In this case, $M$ is the quotient of the upper half plane $\mathbb{H}$ by a Fuchsian group of the first kind, $L$ is the holomorphic tangent bundle, and $P \cong \mathbb{H} \times S^1$ is the unit circle bundle of the upper half plane. The coherent loop states $\Psi^{(k)}_{\tildegamma}$ turn out to be well-known Poincar\'{e} series.

In this paper, we will study instead the case relevant to angular momentum. In our case, $M$ is $S^2 \cong \mathbb{CP}^1$, $L$ is the dual of the tautological line bundle, and $P=S^3$ is the Hopf fibration over $S^2$. The coherent loop states $\Psi^{(k)}_\tildegamma$ turn out to be the well-known angular momentum eigenstates $|jm\rangle$ familiar to physicists.

In \cite{borthwick1995legendrian}, Theorem \ref{BPUTheorem} is proved as a corollary of a more general result which holds for all dimensions of $M$. The proof of this more general result relies heavily on analysis, in particular the machinery of Fourier integral operators of Hermite type developed by De Monvel and Guillemin in \cite{de1981spectral}. In contrast, in Theorem \ref{mainthm1} (our first main result), we will give an elementary and mostly self-contained proof of Theorem \ref{BPUTheorem} in our setting, which only requires as input the the complex version (due to Pemantle and Wilson \cite{pemantle2010asymptotic}) of the well-known stationary phase principle.

\subsection*{Angular momentum eigenstates as coherent loop states}

It is known that the Bohr-Sommerfeld loops $\gamma_m$ on $S^2$ of order $k = 2j$, having constant height $\z$, are precisely those occurring at the discrete heights
\[
 \z_m = m/j, \quad m \in \{-j, -j+1, \dots, j\}\, .
\]
Indeed, this goes back to the roots of quantum mechanics, the old quantum theory \cite{sommerfeld, wilson}. The intuition behind the classical to quantum correspondence is that such a loop is to be thought of as the directions in space where the spin vector of the quantum state $|jm\rangle$ are localized, for large $j$.

\begin{figure}
	\centering
	\raisebox{-0.5\height}{\includegraphics[width=0.3\textwidth]{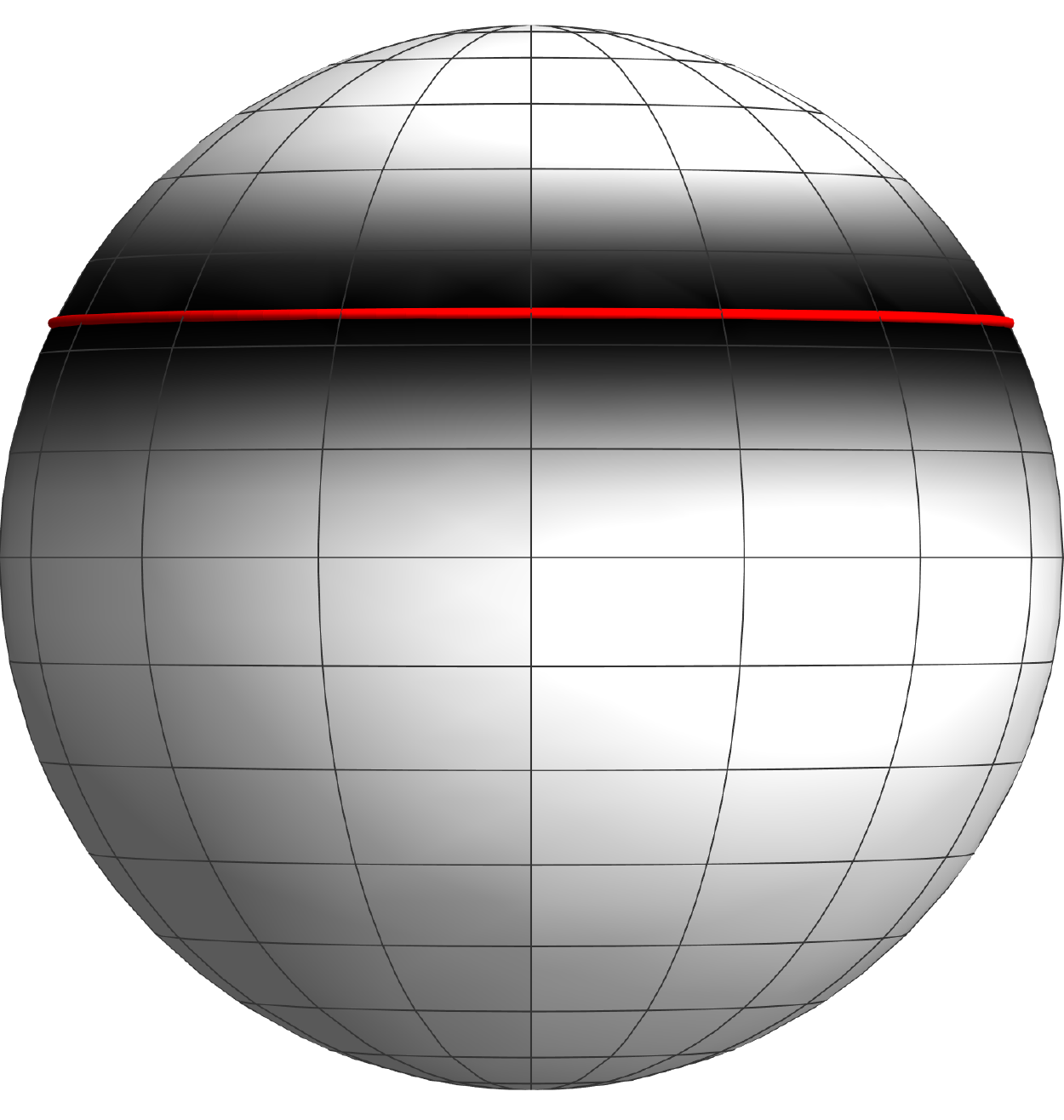}}
	\raisebox{-0.5\height}{\includegraphics[width=0.03\textwidth]{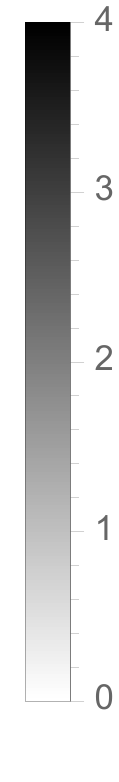}}

	\caption{\label{local_state} A plot of the fibrewise norm $|\Psi^{k}_{\tildegamma_m}(x)|$ for $x \in S^2$ when $k=50$ and $m=11$. This corresponds to the angular momentum eigenstate $|j=25, m=11 \rangle$. }
\end{figure}

The point about coherent loop states is that they make this intuition rigorous, even for finite $j$. Each loop $\gamma_m$ in $S^2$ has an associated lift $\tildegamma_m$ to $S^3$, and the associated coherent loop state is indeed a multiple of $|jm\rangle$, as we will see in Section \ref{loopsareangs}:
\[
  \Psi^{(k)}_{\tildegamma_m} \,\, \propto \,\, |jm\rangle .
\]
Now, by definition, $\Psi^{(k)}_{\tildegamma_m}$ is an integral over the loop $\gamma(t)$ in $S^2$, parameterized by arclength, of the {\em coherent state} $\psi_{\tildegamma_m}^{(k)} \in V_k$ associated to the parallel-transported lift $\tildegamma(t) \in S^3$:
\begin{equation} \label{intdefn}
 \Psi^{(k)}_{\tildegamma_m} := \int_0^T \psi_{\tildegamma(t)} dt.
\end{equation}
The support of a coherent state $\psi_p^{(k)}$ where $p \in S^3$ is Gaussian-shaped and localizes around its basepoint $x = \pi(p) \in S^2$ as $k \rightarrow \infty$. Thus the support of a coherent loop state $\Psi^{(k)}_\tildegamma$ localizes on the loop $\gamma \subset S^2$. See Figure \ref{local_state} . 

Indeed, by their definition \eqref{intdefn} as integrals of coherent states over the loop, we see at a glance why $\Psi^{(k)}_{\tildegamma_m}$ is an eigenstate of the rotation operator $U_z(\Delta \phi)$: such a rotation simply advances the loop parameter by $\Delta \phi$, and a phase factor of $e^{i m \Delta \phi}$ is picked up to account for parallel transport. We also see why $\gamma_m$ can only occur at a quantized height: it is to ensure that the coherent states $\psi^{(k)}_{\tildegamma(t)}$ form a loop of vectors in $V_k$ (otherwise this argument would break down). Thus coherent loop states indeed build an elegant bridge between the classical and quantum worlds. 

\subsection*{An exchange of integrals}
From the viewpoint of asymptotics, the key technical contribution that coherent loop states provide is the following. As we have discussed, a common problem in geometric quantization is that two sequences of holomorphic sections 
\[
  s_k, s'_k \in H^0 (M, L^k)
\] 
are given and the task is to derive an asymptotic formula for their global inner product
\begin{equation} \label{globalip}
 \langle s_k, s'_k \rangle := \int_{x \in M} \left(s_k(x), s'_k(x)\right)_x \vol_x
\end{equation}
where $\vol_x = \omega^n / n!$ is the Liouville form of $\omega$, and $\left(s_k(x), s'_k(x)\right)_x$ is the fibrewise inner product in $L_x^k$. To proceed, express the integrand in exponential form:
\[
 \left(s_k(x), s'_k(x)\right)_x = e^{ikS(x)}.
\]
The complex stationary phase principle tells us that the main contributions to the integral are from the critical points $x \in M$ where $dS(x) = 0$. This approach was used for example to great success in \cite{roberts1999classical} to compute the asymptotics of the classical 6j symbols, which served as inspiration for this paper.

Now, suppose that the sections $s_k, s'_k$ are {\em coherent loop states} associated to parallel-transported lifts $\tildegamma$ and $\tildesigma$ in $P$ of Bohr-Sommerfeld loops $\gamma$ and $\sigma$ in $M$:
\[
 s_k = \Psi^{(k)}_\tildegamma, \quad s'_k = \Psi^{(k)}_\tildesigma .
\]
As we saw in \eqref{intdefn}, this means that each state is {\em in itself} an integral, namely of the coherent state attached to each point of the parallel-transported lift in $P$. Therefore, we can {\em exchange the order of integration} in \eqref{globalip} and rewrite the inner product $\langle \Psi_\tildegamma^{(k)}, \Psi_\tildesigma^{(k)} \rangle$ of the coherent loop states as an integral of coherent states over the {\em torus} $S^1 \times S^1$:
\begin{equation} \label{exchangeintegrals}
\int_{x \in S^2} \left( \Psi_\tildegamma (x), \, \Psi_\tildesigma(x) \right)_x \, \vol_x = \int_{s=0}^S \int_{t=0}^T \left\langle \psi_{\tildegamma(s)}^{(k)}, \psi_{\tildesigma(t)}^{(k)} \right\rangle \, ds dt.
\end{equation}
\begin{figure}[t!]
       \centering
    \begin{minipage}[c]{0.388\textwidth}
        \centering
        \raisebox{-0.5\height}{\includegraphics[width=0.9\textwidth]{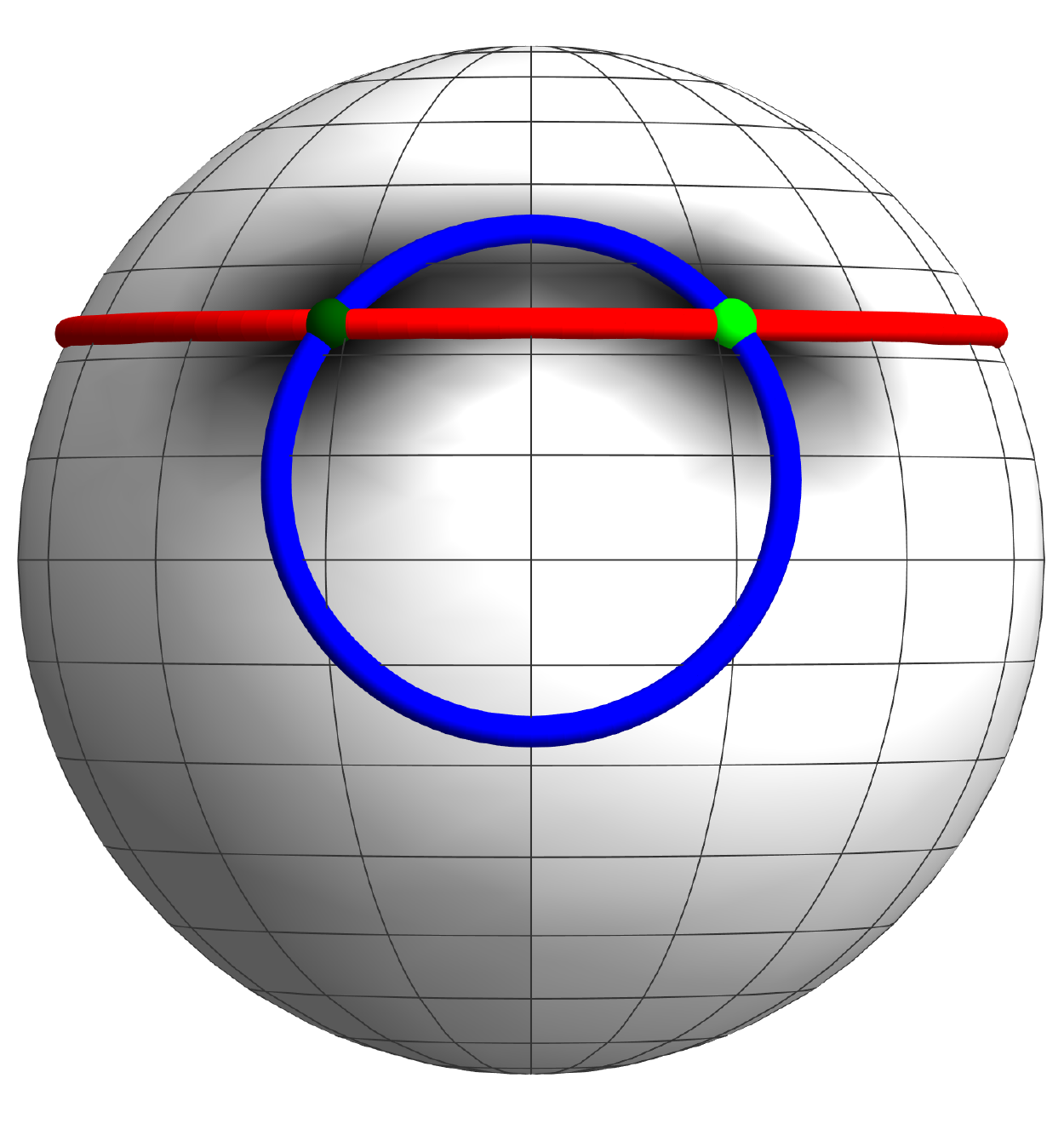}}%
        \raisebox{-0.5\height}{\includegraphics[width=0.1\textwidth]{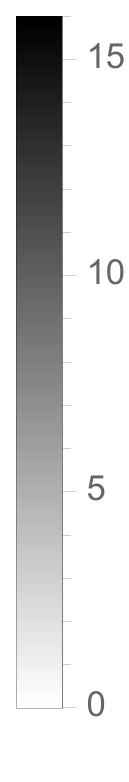}}
        \label{fig:1c}
    \end{minipage}
    \hfill
    \begin{minipage}[c]{0.45\textwidth}
        \centering
        \raisebox{-0.5\height}{\includegraphics[width=0.9\textwidth]{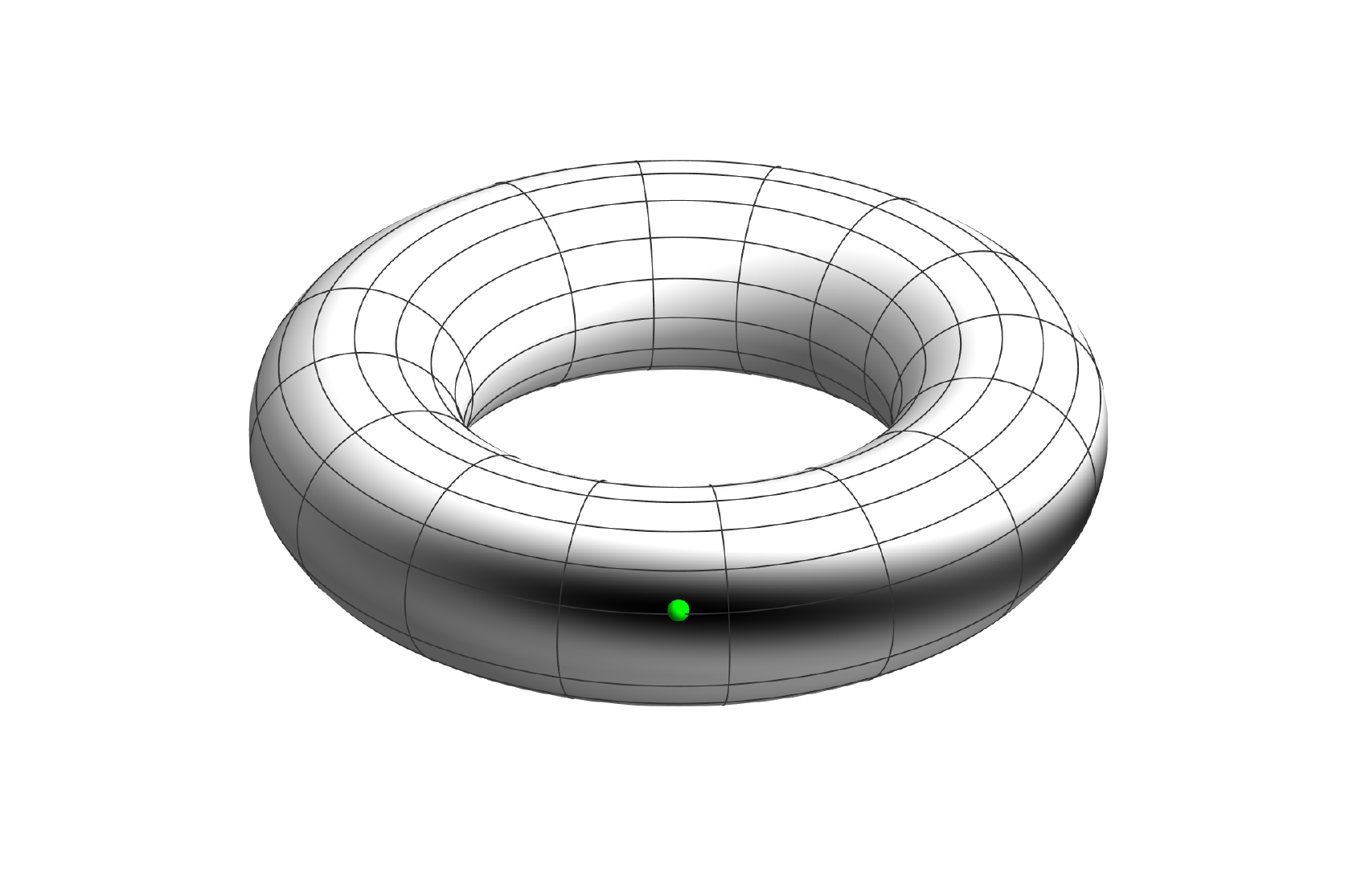}}%
        \raisebox{-0.5\height}{\includegraphics[width=0.1\textwidth]{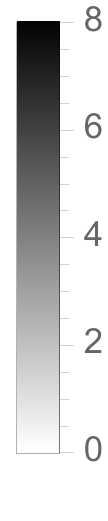}}
        \label{fig:1d}
    \end{minipage}
    
    \vskip\baselineskip
    \centering
    
    \begin{minipage}[c]{0.35\textwidth}
        \centering
        \includegraphics[width=\textwidth]{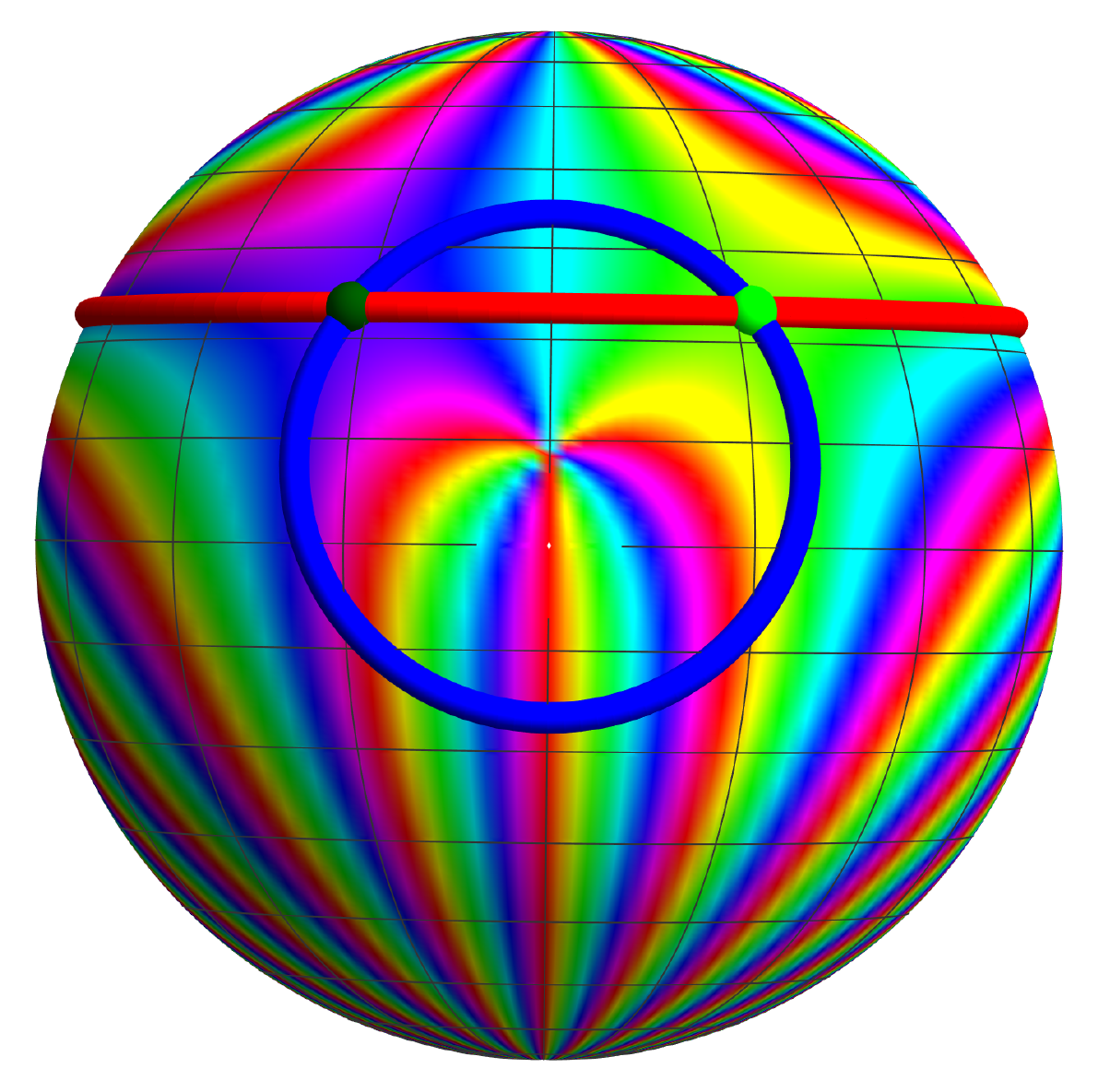}
        \label{fig:1a}
    \end{minipage}
    \begin{minipage}[c]{0.2\textwidth}
        \centering
        \includegraphics[width=\textwidth]{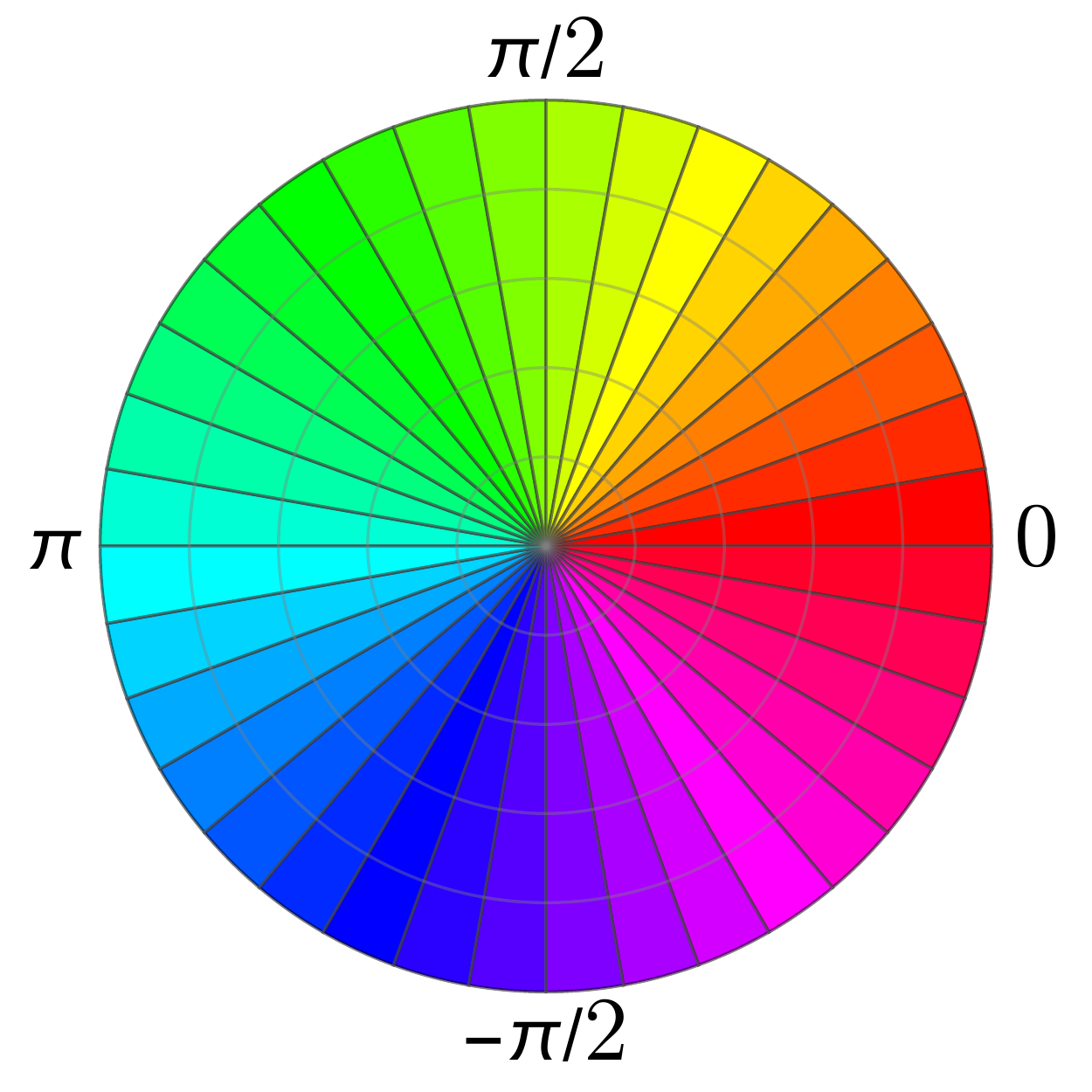}
    \end{minipage}
    \begin{minipage}[c]{0.405\textwidth}
        \centering
        \includegraphics[width=\textwidth]{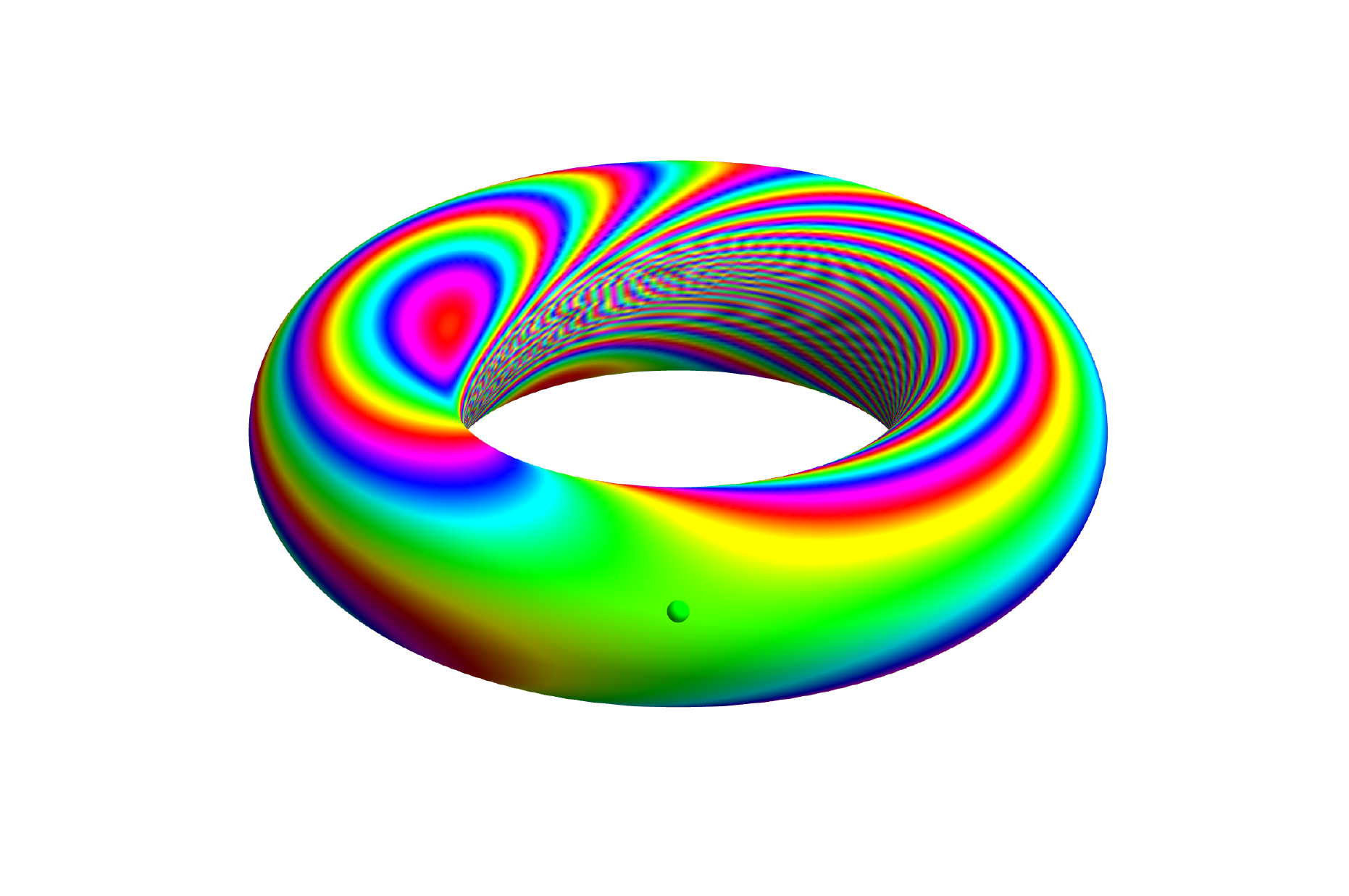}
        \label{fig:1b}
    \end{minipage}
      \caption{\label{torusvssphere} The inner product of coherent loop states  
      for $k=50$ associated to the loops $\gamma$ at height $11/25$ and $\sigma = R_y(\beta)(\rho)$ where $\rho$ has height $22/25$, and $\beta = 1.4$. The left hand plots show the magnitude and phase of $\left( \Psi_\tildegamma (x), \, \Psi_\tildesigma(x) \right)_x$ on $S^2$, while the right hand plots show the magnitude and phase of $\left\langle \psi_{\tildegamma(s)}^{(k)}, \Psi_{\tildesigma(t)}^{(k)} \right\rangle$ on $S^1 \times S^1$. Note that in both cases the critical points correspond to the intersection points of the loops on $S^2$.}
\end{figure}
This is illustrated in Figure \ref{torusvssphere}. The advantage gained is that as an integral over the torus, the integrand $\langle \psi^{(k)}_{\tildegamma(s)}, \psi^{(k)}_{\tildesigma(t)} \rangle$ at $(s,t)$ is a global inner product over $M$ of coherent states localized at $\gamma(t)$ and $\sigma(s)$ in $M$. As we will explain, this is precisely the Bergman kernel $B(x,y)$ on $M \times M$ evaluated at $x = \gamma(s), y=\sigma(t)$. The Bergman kernel is a canonical geometric object whose local behaviour and asymptotics are well understood. Thus the integral on the right hand side of \eqref{exchangeintegrals} is easier to understand and approximate than the integral on the left.

\subsection*{Littlejohn and Yu's spherical area formula}

The {\em Wigner small $d$-matrix} $d^j_{m'm}(\beta)$ is the matrix\footnote{Note that our convention is to use positively oriented, active rotations, so that our $d^j_{m'm}(\beta)$ is equal to Wigner's $d^j_{m'm}(-\beta)$ from \cite{wigner1959}.}of the action of the `rotate counterclockwise by $\beta$ about the positive $y$-axis' operator
\[
 U_y(\beta) = \begin{pmatrix} \cos \beta/2 & \sin \beta/2 \\ - \sin \beta/2 & \cos \beta/2 \end{pmatrix} \in \SU(2)
\]
in the irreducible representation $V_{2j}$ of $SU(2)$ of dimension $2j+1$. Since the rotation operator $U_z(\beta)$ about the $\z$-axis acts diagonally on the standard basis $|j m \rangle$, knowledge of the smal $d$-matrix elements
\[
 d^j_{m'm}(\beta) = \langle jm' | U_y(\beta) | jm \rangle
\]
suffices to compute the matrix elements of arbitrary rotation operators (by expressing such a rotation in terms of $zyz$ Euler angles). The asymptotics of $d^j_{m'm}(\beta)$ for large $j$ has been well studied historically \cite{brussaard1957classical, Ponzano1968, braun1996semiclassics, sokolovski1999semiclassical}, but mainly by using the WKB method to approximate the solution to the Schr\"{o}dinger-style equation in $\beta$ which $d^j_{m'm}(\beta)$ satisfies. Thus, there was something of a disconnect between the geometry of the classical system and the approximation scheme used in the quantum system. 

In \cite{littlejohn2009uniform}, Littlejohn and Yu expressed the asymptotics of $d^j_{m'm}(\beta)$ in a manner intrinsically related to the classical geometry, namely as a cosine function whose frequency is governed by the spherical area of the lunar region enclosed by the loops $\gamma_{m'}$ and $\sigma_m = R_y(\beta) \gamma_{m}$ on $S^2$. However, \cite{littlejohn2009uniform} is written from a physics perspective and they did not give a detailed proof of their formula. Using coherent states, we will give a rigorous proof of a (slightly different) version of their formula in Theorem \ref{lyutheorem}. This is our second main result in this paper. In fact, we first prove a similar `spherical area' cosine formula (Corollary \ref{corsymcurves}) for the inner product of {\em any} two coherent loop states (not just those having constant height), as long as the loops intersect twice transversely at equal angles. From this, we deduce Theorem \ref{lyutheorem} as a special case.

\subsection*{Outline of paper}
In Section \ref{Bergkersec}, we review coherent states, the Bergman kernel, and the construction of coherent loop states in a general setting. We describe how a group of symmetries will act on such states, and we review the complex stationary phase formula of Pemantle and Wilson. In Section \ref{cohstatesS2} we discuss coherent states on $S^2$ in terms of the geometry of the Hopf fibration $S^3 \rightarrow S^2$. In Section \ref{cohloopstS2} we describe coherent loop states on $S^2$, and show that the constant height coherent loop states recover the standard $|j m \rangle$ basis used by physicists. In Section \ref{inprodsec} we study the inner products of coherent loop states on $S^2$, and derive the formula of Littlejohn and Yu.

\section{The Bergman kernel and coherent loop states} \label{Bergkersec}
In this section we review coherent states, the Bergman kernel, and the construction of coherent loop states in a general setting. Our aim is to express these constructions in an invariant geometric way, to facilitate later computations.

\subsection{Coherent states} \label{cohstatessec}
Let $M$ be a compact complex manifold equipped with a volume form, and let $L$ be a holomorphic line bundle over $M$, equipped with a fibrewise Hermitian inner product $( \cdot, \cdot)_{x \in M}$. We will write the fibrewise norm of a section $s$ of $L$ as $|s(x)|$. 

We also have a {\em global} inner product on the vector space of smooth sections $C^\infty(M, L)$, which we write with angular brackets:
\begin{equation} \label{inn_prod}
 \langle s, s' \rangle = \int_{x \in M} (s(x), s'(x))_x \vol_x.
\end{equation}
Note that our inner products are linear in the second slot and antilinear in the first slot. 

In practice, $L$ will be the dual of a given line bundle $\tau$. We write $P \subset \tau$ for the unit circle bundle of $\tau$. Let $s \in C^\infty(M,L)$, $p \in P$ and $x = \pi(p) \in M$. We use round and square brackets to prevent proliferation of brackets and to conveniently distinguish between the evaluation of $s$ (as a section) at $x \in M$, and the evaluation of $s(x)$ (as a linear functional) at $p \in P$:
\begin{equation} \label{shorthand}
 s(x) \in L_x, \quad s[p] := s(x)(p) \in \mathbb{C}.
\end{equation}
Since $M$ is compact, the vector space $V = H^0(X, L)$ of holomorphic sections of $L$ is finite-dimensional, and is equipped with the inner product \eqref{inn_prod}. For every $p \in P$, we have the {\em evaluation map}
\begin{align*}
	\ev_p : V & \rightarrow \mathbb{C} \\
	s & \mapsto s[p] \, .
\end{align*}

\begin{definition} The {\em coherent state $\psi_p$ localized at $x= \pi(p)$ with base vector $p \in P_x$} is the section in $V$ which represents $\ev_p$, i.e. 
\begin{equation}
 \langle \psi_p, s \rangle = s[p] \text{ for all } s \in V. \label{defncohstateeqn}
\end{equation}
\end{definition}
We will occasionally write $\psi_{x,p}$ when we wish to stress the basepoint $x \in M$. Also, we write
\[
 \hat{\psi}_p = \frac{\psi_p}{\sqrt{\langle \psi_p, \psi_p \rangle}}
\]
for the normalized coherent state. Note that 
\begin{equation} \label{equivariance_coherent}
  \psi_{e^{i \theta} p} = e^{-i \theta} \psi_p \, .
\end{equation}

The coherent state $\psi_p$ can be viewed as the projection onto $V$ of a `$\delta$-section' at $x$, i.e. a `smooth section $\delta_p \in C^\infty(M, L)$' such that 
\[
   \langle s, \delta_p \rangle = s[p] \, .
\]
Of course, $\delta_p$ doesn't actually exist as a smooth section, but we can think of it as a {\em limit} of smooth sections $\delta_p^{(n)}$ satisfying
\[
    \lim_{n \rightarrow \infty} \langle s, \delta^{(n)}_p \rangle = s[p] \quad \text{for all } s \in C^\infty(M, L) \, .
\]
Write
\[
 \Pi : C^\infty(M, L) \rightarrow H^0(M, L)
\]
for the orthogonal projection. Then it is easy to see that
\[
 \lim_{n \rightarrow \infty} \Pi(\delta^{(n)}_p) = \psi_p \, .
\]
So, $\psi_p$ is the best holomorphic approximation to the delta section $\psi_p$. 

The following properties of coherent states are well-known and follow straightforwardly from their definition (see eg. \cite{kirwin2007coherent}).
\begin{lemma} \label{cohstatesprop} The coherent states $\psi_p$ satisfy:
\begin{itemize}
 \item[(a)] (Formula in terms of a basis) Let $e_1, \ldots, e_n$ be an orthonormal basis for $V$. Then:
 \[
  \psi_p = \sum_{i=1}^n \overline{e_i[p]} e_i \, .
 \]
 \item[(b)] (Reproducing property) For every $s \in V$, $x \in M$ and $p \in P_x$, 
 \[
  s[p] = \int_{y \in M} \left(\psi_p(y), s(y)\right)_y \vol_y \, .
   \]
  \item[(c)] (Fibrewise norm at basepoint equals square of global norm) We have
  \[
   | \psi_p (x) | = \langle \psi_p, \psi_p \rangle \quad \text{where } p \in P_x \, .
  \]
  \item[(d)] (Maximally peaked at $x$ amongst unit sections) For each $s \in V$ with $\langle s, s \rangle = 1$, we have:
  \[
    |s(x) | \leq | \hat{\psi}_p(x) | \quad \text{where } p \in P_x.
  \]
\end{itemize}
\end{lemma}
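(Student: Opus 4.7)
The plan is to treat the four assertions in order, each following directly from the defining equation $\langle \psi_p, s\rangle = s[p]$ for $s \in V$. Throughout, the only nontrivial geometric input I will need is that since $p \in P_x$ is a unit vector in $\tau_x$ and $L = \tau^\vee$, evaluating any $v \in L_x$ against $p$ gives $|v(p)| = |v|_{L_x}$; equivalently, the fibrewise norm of a section $t \in V$ at $x$ coincides with $|t[p]|$ for any $p \in P_x$.

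For part (a), I would expand $\psi_p = \sum_j c_j e_j$ in the orthonormal basis and test against $e_i$: on one hand $\langle \psi_p, e_i\rangle = \overline{c_i}$ by the conjugate-linearity convention on the first slot, and on the other hand $\langle \psi_p, e_i\rangle = e_i[p]$ by \eqref{defncohstateeqn}. Hence $c_i = \overline{e_i[p]}$. Part (b) is simply the defining equation with the global inner product written out according to \eqref{inn_prod}.

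For part (c), I would specialize the defining property to $s = \psi_p$ to get $\psi_p[p] = \langle \psi_p, \psi_p\rangle$. Because the right-hand side is a nonnegative real number, the complex number $\psi_p[p]$ equals its own modulus $|\psi_p[p]|$, which by the duality remark above coincides with the fibrewise norm $|\psi_p(x)|$. Chaining these equalities yields $|\psi_p(x)| = \langle \psi_p, \psi_p\rangle$.

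For part (d), I would apply Cauchy-Schwarz in $V$ to $\langle \psi_p, s\rangle = s[p]$, giving $|s[p]| \leq \sqrt{\langle \psi_p, \psi_p \rangle}\, \sqrt{\langle s, s\rangle} = \sqrt{\langle \psi_p, \psi_p\rangle}$ when $\langle s, s\rangle = 1$. Translating the left-hand side to the fibrewise norm $|s(x)|$ and using (c) in the form $\sqrt{\langle \psi_p,\psi_p\rangle} = \sqrt{|\psi_p(x)|} \cdot \dots$, or more cleanly computing $|\hat{\psi}_p(x)| = |\psi_p(x)|/\sqrt{\langle \psi_p, \psi_p\rangle} = \sqrt{\langle \psi_p, \psi_p\rangle}$ directly from (c), identifies the upper bound as $|\hat{\psi}_p(x)|$. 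There is no real obstacle here; the only point to be careful about is keeping track of the duality $P \subset \tau$, $L = \tau^\vee$ so that the identification $|s[p]| = |s(x)|_{L_x}$ is used consistently in parts (c) and (d).
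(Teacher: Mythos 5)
Your proof is correct, and it is exactly the ``straightforward from the definition'' argument the paper has in mind: the paper gives no proof of this lemma at all, merely asserting that the properties are well-known and citing Kirwin, so your write-out (reproducing kernel expansion for (a), unwinding the global inner product for (b), $s=\psi_p$ plus the unit-vector duality $|v|_{L_x}=|v(p)|$ for (c), Cauchy--Schwarz for (d)) supplies precisely the omitted details. The only point worth adding is that (d) implicitly assumes $\psi_p\neq 0$ so that $\hat{\psi}_p$ is defined; when $\psi_p=0$ every section vanishes at $x$ and the statement is vacuous.
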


\subsection{Tensor powers}
We can also consider tensor powers $L^k$ of the line bundle $L$. We write $V_k = H^0(X, L^k)$. Given $p \in P$, the coherent state
\[ 
 \psi_p^{(k)} \in V_k
\]
is defined as the unique section in $V_k$ such that
\[
 \langle \psi_p^{(k)}, s \rangle = s[p^{\otimes k}]
\]
for all $s \in V_k$. As $k \rightarrow \infty$, the fibrewise norm of $\psi_{x,p}^{(k)}(y)$ drops off exponentially for $x \neq y$ \cite{seto}:
\[
 |B^{(k)}(x,y)| \leq C e^{- \epsilon \sqrt{k} d(x,y) }.
\]
We will see this explicitly for $M=S^2$ in Corollary \ref{offDiagDropoff}.

\subsection{The Bergman kernel}
\begin{definition} \label{bergman_def} The {\em Bergman kernel} of $L$ is the holomorphic section 
\[
 B \in H^0(\overline{M} \times M, \overline{L} \boxtimes L)
\]
defined as 
\begin{equation} \label{bergman_kernel_defn}
 B(x,y) = \langle \psi_p, \psi_q \rangle \,\overline{p} \otimes q \quad p \in P_x, q \in P_y.
\end{equation}
\end{definition}
Note our convenient notation \eqref{bergman_kernel_defn} for expressing a section of $\overline{L} \boxtimes L$; the formula only depends on $x$ and $y$ (and not the choice of $p$ and $q$) due to the equivariance property \eqref{equivariance_coherent}. 

This definition expresses the Bergman kernel as a global inner product of {\em two} coherent states, but using Lemma \ref{cohstatesprop}(a), we can break symmetry and express it in terms of a {\em single} coherent state
\[
  B(x,y) = \psi_q[p] \, \overline{p} \otimes q \quad p \in P_x, q \in P_y 
\]
or in terms of an orthonormal basis:
\begin{align*}
 B(x,y) &= \sum_{i=1}^n \overline{e_i(x)} \otimes e_i(y) \\
        &= \sum_{i=1}^n \overline{e_i[q]} e_i[p] \, \overline{p} \otimes q \, .
\end{align*}

\subsection{Group action on coherent states} \label{group_act_coh_states}
Suppose that a group $G$ acts on $M$, preserving the volume form, and that this action lifts to a unitary $G$-action on the line bundle $\tau$. Then the vector space $V_k$ is a unitary representation of $G$ via the action
\[
 (g \cdot s)[p] = s[g^{-1} \cdot p]\,.
\]
The action of $G$ on coherent states takes a particularly simple form.
\begin{lemma} \label{groupactionstates} The group $G$ acts on coherent states via $g \cdot \psi^{(k)}_p = \psi_{g \cdot p}$.
\end{lemma}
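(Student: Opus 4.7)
The plan is to verify the claimed identity by showing that $g \cdot \psi^{(k)}_p$ satisfies the characteristic reproducing property that uniquely defines $\psi^{(k)}_{g\cdot p}$. Since coherent states are characterized by the equation $\langle \psi^{(k)}_q, s\rangle = s[q^{\otimes k}]$ for all $s \in V_k$ (and the inner product on $V_k$ is non-degenerate), it suffices to check this identity with $q = g\cdot p$ and $\psi^{(k)}_{g\cdot p}$ replaced by $g \cdot \psi^{(k)}_p$.

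The first step is to observe that the induced action of $G$ on $V_k$ is unitary with respect to $\langle \cdot, \cdot \rangle$. This follows from two assumptions already in place: $G$ acts unitarily on the line bundle $\tau$ (hence on its dual $L$ and on all tensor powers $L^k$), and $G$ preserves the volume form on $M$. Combining these, a change of variables in the defining integral \eqref{inn_prod} shows $\langle g\cdot s, g \cdot s'\rangle = \langle s, s'\rangle$ for all $s, s' \in V_k$.

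Given unitarity, the main computation is then a one-line manipulation. For any $s \in V_k$,
\begin{align*}
\langle g \cdot \psi^{(k)}_p, s\rangle
&= \langle \psi^{(k)}_p, g^{-1} \cdot s \rangle \\
&= (g^{-1} \cdot s)[p^{\otimes k}] \\
&= s[(g\cdot p)^{\otimes k}] \\
&= \langle \psi^{(k)}_{g \cdot p}, s \rangle,
\end{align*}
where the first equality uses unitarity, the second is the defining property of $\psi^{(k)}_p$, the third uses the definition $(h \cdot s)[q] = s[h^{-1}\cdot q]$ of the $G$-action on sections (together with the fact that the action on $L^k$ is induced from that on $\tau$, so $h \cdot q^{\otimes k} = (h \cdot q)^{\otimes k}$), and the last is the defining property of $\psi^{(k)}_{g \cdot p}$.

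There is no real obstacle here; the only subtlety worth flagging is the compatibility of the $G$-action with the tensor-power construction, namely that $g \cdot (p^{\otimes k}) = (g \cdot p)^{\otimes k}$, which is immediate from the fact that $G$ acts diagonally on $\tau^{\otimes k}$. Since the resulting equality $\langle g \cdot \psi^{(k)}_p, s\rangle = \langle \psi^{(k)}_{g\cdot p}, s\rangle$ holds for all $s \in V_k$, non-degeneracy of $\langle \cdot, \cdot\rangle$ forces $g \cdot \psi^{(k)}_p = \psi^{(k)}_{g \cdot p}$.
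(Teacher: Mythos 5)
Your proof is correct and follows essentially the same route as the paper's: use unitarity of the $G$-action to move $g$ to the other slot, apply the defining reproducing property of $\psi^{(k)}_p$, and use the definition of the action on sections to land on $s[(g\cdot p)^{\otimes k}]$. The extra remarks on why the action is unitary and on $g\cdot(p^{\otimes k})=(g\cdot p)^{\otimes k}$ are harmless elaborations of steps the paper leaves implicit.
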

\begin{proof}
From the defining equation \eqref{defncohstateeqn} of coherent states, we must show that
\[
 \langle \psi_{g \cdot p}^{(k)}, s \rangle = \langle g \cdot \psi_p, s \rangle
\]
for all $s \in V_k$. The left hand side is $s[(g \cdot p)^{\otimes k}]$ by definition, while the right hand side is:
\begin{align*}
 \langle g \cdot \psi_p, s \rangle &= \langle \psi^{(k)}_p, g^{-1} \cdot s \rangle & \text{(By unitarity)} \\
 &= (g^{-1} \cdot s)[p^{\otimes k}] & \text{(By definition)}\\
 &= s[(g \cdot p)^{\otimes k}] & \text{(By definition)}.
\end{align*}
\end{proof} 

\subsection{Coherent loop states}
Recall that the fibrewise inner product on the line bundle $L$ induces a unitary connection $\nabla$ on $\tau$ (the {\em Chern} connection) whose connection 1-form $\beta$ relative to a local holomorphic section $s$ is given by
\[
 \beta = \partial \log H
\]
where $H(x) = (s(x), s(x))_x$. The curvature of $\nabla$ is given locally by $d \beta = \overline{\partial} \partial \log H \in \Omega^{1, 1}(M)$, and we say that the fibrewise inner product is {\em positive} if the symmetric bilinear form $g(X,Y) = \omega(X, JY)$ induced by the real 2-form $\omega = i \curv \nabla$ is positive definite, i.e. a Riemannian metric on $M$, which is a K\"{a}hler metric by construction. From now on we assume this to be the case. 

\begin{definition} Let $\gamma : [0, T] \rightarrow M$ be a loop in $M$. We say that $\gamma$ is a {\em Bohr-Sommerfeld} (or {\em rational}) {\em loop} if its holonomy $\Hol(\gamma)$ is a $k$th root of unity for some $k$.
\end{definition}
This condition means that $\psi^{(k)}(\tildegamma(t))$ forms a loop of vectors in $V_k$, where $\tildegamma$ is a parallel-transported lift of $\gamma$ in $P$. In other words, $\psi^{(k)}_{\tildegamma(T)} = \psi^{(k)}_{\tildegamma(0)}$.
\begin{definition} Let $\gamma(t)$ be a Bohr-Sommerfeld loop in $M$ of order $k$, parameterized by arclength, with parallel-transported lift $\tildegamma$. The {\em coherent loop state} $\Psi^{(k)}_\tildegamma$ of order $k$ associated with $\tilde{\gamma}$ is 
\begin{equation}
 \Psi^{(k)}_\tildegamma = \int_{t=0}^T \psi_{\tildegamma(t)}^{(k)} \, dt \, \in V_k\, . \label{int_defn_coh}
\end{equation}
\end{definition}
We can think of $\Psi^{(k)}_\tildegamma$ as the projection $\Pi(\delta_\tildegamma)$ onto $H^0(M, L^k)$ of the `delta section supported on the loop $\tildegamma$', as in Section \ref{cohstatessec}, in the sense that it is the unique vector in $V_k$ satisfying
\[
 \langle \Psi^{(k)}_\tildegamma, s \rangle = \int_{t=0}^T \psi^{(k)}_{\tildegamma(t)} \, dt
\]
for all $s \in V_k$. This is the viewpoint in \cite{borthwick1995legendrian}.
\begin{remark} Every closed loop $\gamma$ in $M$ is arbitrarily close to a Bohr-Sommerfeld loop.
\end{remark}
\subsection{Group action on coherent loop states} \label{groupactionloopstates}
Note that if $G$ is acting on $M$ and compatibly on $L$ as in Section \ref{group_act_coh_states}, then it will act in a very simple way on coherent loop states:
\begin{align}
 g \cdot \Psi^{(k)}_\tildegamma &= g \cdot \int_0^T \psi^{(k)}_{\tildegamma(t)} \, dt \\
  &= \int_0^T g \cdot \psi^{(k)}_{\tildegamma(t)} \\
  &= \int_0^T \psi^{(k)}_{g \cdot \tildegamma(t)}
\end{align}
where we have used Lemma \ref{groupactionstates} in the last step. Therefore, the formalism of coherent loop states is very natural in the equivariant setting, which is one of the main motivations for this paper.

\subsection{Complex stationary phase formula}
Laplace's method for approximating exponential integrals over an $n$-dimensional compact smooth manifold as $k\rightarrow \infty$ is well-known:
\[
\int_M f(x) e^{-k \phi(x)} \, \vol \sim \left(\frac{2 \pi}{k}\right)^{\frac{n}{2}} \sum_x \frac{f(x) e^{-k \phi(x)}}{|\det H_x(\phi)|^{1/2}}.
\]
Here $\phi(x)$ is assumed to be real, the sum runs over minima $x$ of $\phi$, assumed to be finite, and $H_x(\phi)$ is the Hessian matrix of second partial derivatives at $x$, assumed to be nondegenerate. Similarly, the stationary phase formula for oscillating integrals is (see eg. \cite{guillemin1990geometric}):
\begin{equation} \label{original_sp}
 \int_M f(x) e^{i kS(x)} \vol  \sim \left(\frac{2 \pi}{k}\right)^{\frac{n}{2}} \sum_{x} \frac{f(x)e^{ik S(x)}}{|\det H_x(S)|^{1/2}}e^{\frac{i \pi}{4} \sigma_x},
\end{equation}
The sum runs over the critical points of the phase $S$, assumed to be finite and nondegenerate, and $\sigma_x$ is the signature of the Hessian matrix $H_x(S)$ (the number of positive eigenvalues minus the number of negative eigenvalues). 

When the phase is {\em complex}, the formula becomes a combination of these two methods, and can be found in the article of Pemantle and Wilson \cite{pemantle2010asymptotic}. We will only need the simplest version of their result.
\begin{theorem}[\cite{pemantle2010asymptotic}] \label{pmtheorem} Let $M$ be a compact $n$-dimensional smooth manifold embedded in $\mathbb{C}^n$, let $f$ and $\phi$ be analytic functions on a neighborhood of $M$, and suppose $\Rre(\phi) \geq 0$ on $M$. Let $G$ be the subset of points of $\phi$ on $M$ where $d\phi$ vanishes and where $\Rre(\phi)$ is minimized. Assume that $G$ is finite and that at each $x \in G$, the Hessian is nondegenerate. Then as $k\rightarrow \infty$,
\begin{equation} \label{pemantlewilson}
   \int_M f(x) e^{-k \phi(x)} \vol \sim  \left(\frac{2 \pi}{k}\right)^{\frac{n}{2}} \sum_{x \in G} \frac{f(x)e^{-k \phi(x)}}{\sqrt{\det H_x(\phi)}}
\end{equation}
where the square root of the determinant is defined as the product of the principal square roots of the eigenvalues of $H_x(\phi)$.
\end{theorem}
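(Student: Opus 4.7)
The plan is to reduce \eqref{pemantlewilson} to a sum of local Gaussian integrals near the critical points $x \in G$, via a smooth partition of unity together with a complex Morse lemma and a contour deformation onto a steepest-descent thimble in the ambient $\mathbb{C}^n$. First I would fix small disjoint balls $B_x$ around each $x \in G$ and bump functions $\chi_x$ supported in $B_x$ equal to $1$ near $x$, setting $\chi_0 = 1 - \sum_{x \in G} \chi_x$. The integral splits into contributions from the $\chi_x$'s plus a remainder supported on $K = \operatorname{supp}(\chi_0)$. On $K$ the contribution must be shown to be $O(k^{-N})$ for every $N$. I would decompose $K$ into the region where $\Rre(\phi) > \min_G \Rre(\phi) + \epsilon$ for some $\epsilon > 0$ (on which $|e^{-k\phi}|$ is exponentially smaller than the critical-point values $|e^{-k\phi(x)}|$), and the region where $d\phi \neq 0$; on the latter, repeated integration by parts with the first-order operator $L = \overline{\nabla\phi}\cdot\nabla / |\nabla\phi|^2$, which satisfies $L(e^{-k\phi}) = -k\, e^{-k\phi}$, produces an arbitrary negative power of $k$.

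For each local piece near $x \in G$, the complex Morse lemma (valid since $\phi$ is analytic with nondegenerate Hessian at $x$) yields a holomorphic change of coordinates $u \mapsto w(u)$ on a neighborhood of $x$ in $\mathbb{C}^n$ with
\[
\phi(x + u) - \phi(x) = \tfrac{1}{2}\, w(u)^T w(u),
\]
whose Jacobian at the origin is a square root of $\det H_x(\phi)$. I would then deform the real $n$-cycle $M \cap B_x$ inside $\mathbb{C}^n$ onto the quadratic thimble $\{w \in \mathbb{R}^n\}$ via a homotopy that avoids the other critical points and stays in the region $\Rre(\phi) \geq 0$; the global hypothesis $\Rre(\phi) \geq 0$ guarantees such a homotopy exists, and the boundary contributions vanish as $k\to\infty$ by the exponential decay of $|e^{-k\phi}|$. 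On the thimble the integrand reduces to a real Gaussian $f(x)\, e^{-k\phi(x)} e^{-k|w|^2/2}$ plus lower-order terms, whose leading asymptotic is $(2\pi/k)^{n/2} f(x)\, e^{-k\phi(x)}$, and dividing by the Morse Jacobian produces the claimed answer.

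The main obstacle is pinning down the correct branch of $\sqrt{\det H_x(\phi)}$ produced by this deformation. Continuity of the Jacobian along the homotopy fixes the branch uniquely once the initial choice on $T_x M$ is made, but showing that it agrees with the product of principal square roots of the eigenvalues of $H_x(\phi)$ requires an extra argument. I would diagonalize $H_x(\phi)$ as a complex symmetric matrix (via the normal form for nondegenerate complex quadratic forms) and observe that $\Rre(\phi) \geq 0$ with equality at $x$ forces each eigenvalue $\lambda_i$ of $H_x(\phi)$ to lie in the closed right half-plane — otherwise $\Rre(\phi)$ would decrease along some direction tangent to $M$ at $x$, contradicting minimality. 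Thus each $\sqrt{\lambda_i}$ is unambiguously defined on the principal branch, and the product matches the continuously deformed Jacobian, most easily verified by reducing to the diagonal model where the integral factorizes into one-dimensional Fresnel-type integrals $\int_{\mathbb{R}} e^{-k \lambda_i u^2/2}\, du = \sqrt{2\pi/(k\lambda_i)}$ with the principal branch.
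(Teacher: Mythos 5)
First, a point of order: the paper does not prove Theorem \ref{pmtheorem} at all --- it is imported verbatim from Pemantle and Wilson \cite{pemantle2010asymptotic} and used as a black box (the only thing the paper proves in this vicinity is the cosmetic reformulation, Lemma \ref{csplem}). So there is no in-paper argument to compare against, and your sketch must be judged on its own terms. Its architecture --- localization by a partition of unity, non-stationary-phase decay away from $G$ via the operator $L$, a holomorphic Morse lemma, and the observation that minimality of $\Rre\phi$ at $x$ forces the eigenvalues of $H_x(\phi)$ into the closed right half-plane --- is the standard and correct one, and that last observation (best seen by writing $\lambda = v^\ast H_x v/|v|^2$ with $v=a+ib$, so that $\Rre\lambda = \bigl(a^T(\Rre H_x)a + b^T(\Rre H_x)b\bigr)/|v|^2 \geq 0$) really is the crux of why the principal branch is the right one.

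The genuine gap is the contour deformation. Two problems. (i) The integrand near $x$ is $\chi_x f e^{-k\phi}$, and $\chi_x$ is a smooth bump, not holomorphic, so Stokes/Cauchy does not apply to it; you must first retreat to the region where $\chi_x \equiv 1$ and dispose of the annulus where it varies by non-stationary phase. (ii) More seriously, ``the global hypothesis $\Rre(\phi)\geq 0$ guarantees such a homotopy exists'' is an assertion, not an argument: the hypothesis controls $\Rre\phi$ only \emph{on} $M$, while the homotopy sweeps through ambient $\mathbb{C}^n$. Even taking the natural explicit homotopy $w = a+ib \mapsto a + i(1-t)b$ in Morse coordinates (along which $\Rre(w^Tw) = |a|^2 - (1-t)^2|b|^2$ is nondecreasing), the Stokes boundary term swept out over $\partial B_x$ is bounded only by a constant times $e^{-k\min_M\Rre\phi}$, because $\Rre(w^Tw)$ need not be bounded away from zero there --- it vanishes identically in the purely oscillatory case $\Rre\phi\equiv\mathrm{const}$ --- and that is the same order as $e^{-k\phi(x)}$ itself, so it swamps the $k^{-n/2}$ main term rather than being negligible. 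The standard repair is to abandon the deformation altogether and instead prove the model estimate $\int_{\mathbb{R}^n}u(y)\,e^{-\frac{k}{2}y^TAy}\,dy \sim (2\pi/k)^{n/2}\,u(0)/\sqrt{\det A}$ directly for complex symmetric $A$ with $\Rre A$ positive semidefinite and $\det A\neq 0$ (H\"{o}rmander's Theorem 7.6.1, via Plancherel and analytic continuation from the positive definite case), which hands you the principal-branch square root with no monodromy bookkeeping. A last small point: a nondegenerate complex symmetric matrix need not be diagonalizable (e.g.\ $\left(\begin{smallmatrix}2 & i\\ i & 0\end{smallmatrix}\right)$ has $\det=1$, positive semidefinite real part, and a single Jordan block for the eigenvalue $1$), so ``reduce to the diagonal model'' needs a perturbation or continuity argument to be complete.
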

\begin{remark} \label{roberts_error1} In \cite{roberts1999classical}[Section 3.12 formula (6)], Roberts wrote down an incorrect version of this formula, namely
\begin{equation} \label{roberts_error}
\int_M e^{k \psi} \vol \sim \left(\frac{2 \pi}{k}\right)^{\frac{n}{2}} \sum_x \frac{e^{k \psi(x)}}{\sqrt{-\det H_x(\psi)}}
\end{equation}
for even-dimensional manifolds. See Remark \ref{roberts_error2}.
\end{remark}
Formula \eqref{pemantlewilson} can be a bit confusing since the square root of $\det H_x(\phi)$, defined as the product of the principal square roots of the eigenvalues, is not purely a function of $\det H_x(\phi)$. For our purposes it is more convenient to write this formula in the form \eqref{original_sp}, where the square root in the denominator is real, and all the phase information has been made explicit in the numerator.
\begin{lemma}[Complex stationary phase lemma] \label{csplem}  Writing $S = -i\phi$, and under the same assumptions as in Theorem \ref{pemantlewilson}, we have:
\begin{equation} \label{new_sp}
 \int_M f(x) e^{i kS(x)} \, \vol  \sim \left(\frac{2 \pi}{k}\right)^n \sum_{x} \frac{f(x)e^{ik S(x)}}{|\det H_x(S)|^{1/2}}e^{i\sum_j\left(\pi/4 - \alpha_j/2\right)}.
 \end{equation}
Here, $x$ runs over the stationary points (that is, where $dS$ vanishes and $\Rre(\phi)$ is a maximum), and $\alpha_j$ is the principal argument of the $j$th eigenvalue of $H_x(S)$ .
\end{lemma}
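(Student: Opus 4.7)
The plan is to transcribe Pemantle--Wilson's formula (Theorem \ref{pmtheorem}) into the $e^{ikS}$ form by substituting so that $e^{-k\phi} = e^{ikS}$, which forces $\phi = -iS$, and to make the principal-square-root factor in the denominator of \eqref{pemantlewilson} explicit as a positive real modulus times an explicit phase. First I would translate the Pemantle--Wilson hypotheses into conditions on $S$. Since $\Rre(\phi) = \mathrm{Im}(S)$, the sign condition $\Rre(\phi) \geq 0$ becomes $\mathrm{Im}(S) \geq 0$; the stationary set coincides with the critical set of $S$ on which $\mathrm{Im}(S)$ is maximized; and at each such point $H_x(\phi) = -i\, H_x(S)$.

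The heart of the proof is then the product of principal square roots of the eigenvalues of $H_x(\phi) = -iH_x(S)$. If $\lambda_j = |\lambda_j| e^{i\alpha_j}$ is an eigenvalue of $H_x(S)$ written with $\alpha_j$ its principal argument, then the corresponding eigenvalue of $H_x(\phi)$ is $|\lambda_j| e^{i(\alpha_j - \pi/2)}$, whose principal square root equals $|\lambda_j|^{1/2} e^{i(\alpha_j/2 - \pi/4)}$. Taking the product over $j$ yields
\[
\sqrt{\det H_x(\phi)} \;=\; |\det H_x(S)|^{1/2} \, e^{-i \sum_j (\pi/4 - \alpha_j/2)},
\]
and inverting this factor, then folding the resulting phase into the numerator of \eqref{pemantlewilson}, gives exactly the right-hand side of \eqref{new_sp}.

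The only real subtlety, and the one step where care is needed, is branch-cut bookkeeping: the identity $\sqrt{-i\lambda_j} = |\lambda_j|^{1/2} e^{i(\alpha_j/2 - \pi/4)}$ is valid precisely when $\alpha_j - \pi/2$ still lies in the principal branch $(-\pi,\pi]$, i.e.\ when $\alpha_j \in (-\pi/2, 3\pi/2]$. In the situations in which this lemma is later applied in the paper, the Hessians arise from slight imaginary perturbations of real-valued phases and their eigenvalues lie safely inside this range, so the formula applies directly. An eigenvalue approaching the branch cut would require a compensating $2\pi$ shift in $\alpha_j$ together with a sign flip absorbed into the principal-square-root convention, but this situation does not occur in our applications. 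Once the substitution and eigenvalue computation above are performed, the lemma is immediate.
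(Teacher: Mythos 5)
Your proposal is correct and follows essentially the same route as the paper: both proofs reduce to rewriting the product of principal square roots of the eigenvalues of $H_x(\phi)=\mp i H_x(S)$ as $|\det H_x(S)|^{1/2}$ times the explicit phase $e^{-i\sum_j(\pi/4-\alpha_j/2)}$ and inverting. The only difference is in how the branch condition is settled: the paper notes that the Pemantle--Wilson hypotheses force $\Rre(\lambda_i)>0$ for the eigenvalues of $H_x(\phi)$, so that $\alpha_j\in(0,\pi)$ and the principal-branch identity holds automatically, whereas you defer this check to the later applications; it would be cleaner to derive it from the hypotheses as the paper does, but this does not affect correctness.
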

\begin{proof}
Let the principal arguments of the eigenvalues $\lambda_i$ of $H_x(\phi)$ be $\theta_i$. We must have $\Rre(\lambda_i) > 0$, so $-\pi/2 < \theta_i < \pi/2$. Therefore, the eigenvalues of $H_x(S) = iH_x(\phi)$ have principal arguments
\[
 \alpha_i = \theta_i + \pi/2
\]
and so
\[
\frac{1}{\sqrt{\det H_x (\phi) }} = \frac{e^{\sum_i (\pi/4 - \alpha_i/2)}}{|\det H_x(S)|} \,.
\]
\end{proof}
Observe that when $S$ is real, \eqref{new_sp} reduces to \eqref{original_sp}.

\section{Coherent states on $S^2$} \label{cohstatesS2}
In this section we discuss coherent states on $S^2$ from a geometric perspective.

\subsection{The tautological line bundle over $S^2$}
Let $M=\mathbb{CP}^1$, the space of 1d subspaces of $\mathbb{C}^2$, equipped with the volume form coming from the Fubini-Study metric. So,
\[
 \int_{\mathbb{CP}^1} \vol = 2 \pi.
\] 
Let $\tau$ be the tautological line bundle over $\mathbb{CP}^1$, whose fibre at a line $l \in \mathbb{CP}^1$ is the line $l \subset \mathbb{C}^2$ itself. Then $\tau$ inherits a fibrewise inner product from $\mathbb{C}^2$. The unit circle bundle $P \subset \tau$ is the 3-sphere $S^3 \subset \mathbb{C}^2$. 

Since we are interested in three-dimensional geometric formulas for angular momentum, we want to think of $M$ as being $S^2$ instead of $\mathbb{CP}^1$. So, we identify $\mathbb{CP}^1$ with $S^2$ by first identifying $S^2$ with the extended complex plane $\mathbb{C} \cup \{\infty\}$ via stereographic projection from the {\em south} pole (so that the map is orientation-preserving) as in Figure \ref{sphericalproj}, and then identifying the extended complex plane with $\mathbb{CP}^1$ via $z \mapsto [z : 1]$. In terms of latitude and longitude coordinates on $S^2$, this means:
\begin{align}
   S^2 & \xrightarrow{\cong} \mathbb{CP}^1 \\
   (\theta, \phi) & \mapsto [\sin \frac{\theta}{2} e^{i \phi} : \cos \frac{\theta}{2} ] \, .
\end{align}
Then, then the projection map $S^3 \rightarrow S^2$ is the Hopf fibration. Note that the formula 
\begin{equation} \label{usec}
  u(\theta, \phi) = \begin{pmatrix} \sin \frac{\theta}{2} e^{i \phi} \\
   \cos \frac{\theta}{2}
   \end{pmatrix}
\end{equation}
gives a smooth trivialization of the Hopf fibration over $S^2 \setminus \{(0,0,-1)\}$. 

\begin{figure}
\centering
\begin{tikzpicture}[scale=2,>= stealth]
\draw (0,0) circle (1);
\draw[->] (0,0,0) coordinate (O) -- (1.5,0,0) coordinate(X) node[pos=1.1] {Im};
\draw[->]  (O) -- (0,1.5,0) coordinate(Y);
\draw[->]  (O) -- (0,0,2.5) coordinate(Z) node[pos=1.05] {Re};
\filldraw[black] (0,-1) circle (0.027);
\draw[dashed](0,-1)-- (0.5,0.5);
\draw[-](0.5,0.5)-- (1.,1.2);
\draw (1.1,1.3) circle (.15);
\filldraw[black] (1.24,1.35) circle (0.027) node[anchor=west]{$u(\theta,\phi)$};
\filldraw[black] (0.5,0.5) circle (0.027) node[anchor=west] {$x(\theta,\phi)$};
\draw[dashed](0,0)-- (0.5,0.5);
\draw[dashed](0,0)-- (0.26,-0.2);
\filldraw[black] (0.26,-.2) circle (0.027) node[anchor=west] {$z$};
\coordinate (a) at (0,0);
\coordinate (b) at (.2,.2);
\coordinate (c) at (0,.3);
\coordinate (d) at (-.5,-.5);
\coordinate (e) at (.65,-.5);
\draw pic[draw,angle radius=.5cm,"$\theta$" ] {angle=b--a--c};
\draw pic[draw,angle radius=.5cm,"$\phi$" ] {angle=d--a--e};
\end{tikzpicture}

\caption{\label{sphericalproj}Spherical projection from the south pole identifies a point $x(\theta, \phi) \in S^2$ with $z=\tan(\theta/2) e^{i \phi} \in \mathbb{C}$. Also sketched is the fiber $P_x \subset S^3$ of the Hopf fibration at $x$, and the point $u(\theta, \phi) \in P_x$.  }
\end{figure}
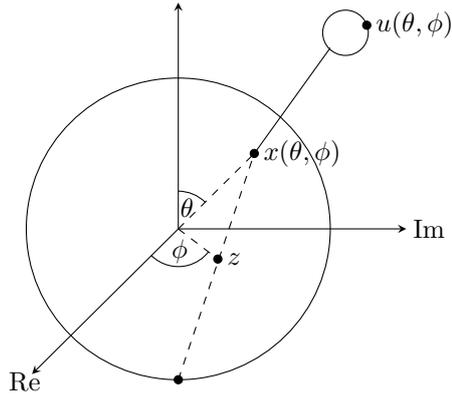
\subsection{The dual of the tautological line bundle}
The line bundle $L:= \tau^\vee$ is defined as the dual of $\tau$. The space of holomorphic sections of $L^k$ canonically identifies with the $(k+1)$-dimensional vector space of homogenous polynomials of degree $k$ in two variables $Q_1,Q_2$ via the evaluation map, where $Q_1(q_1, q_2) = q_1$ and $Q_2(q_1, q_2) = q_2$:
\begin{align}
	\Poly_k (Q_1,Q_2) & \rightarrow V_k \\
	f & \mapsto ( (q_1,q_2)^{\otimes k}  \mapsto f(q_1,q_2) )  .
\end{align}
With respect to this identification, an orthonormal basis for $V_k$ is given by
\begin{equation} \label{orth_basis}
 e_a = \sqrt{\frac{k+1}{2 \pi} \binom{k}{a} }\, Q_1^a Q_2^{k-a}, \quad a=0, 1, \ldots, k.
\end{equation}

\subsection{Coherent states on $S^2$ in terms of $S^3$}
The following formulas express the coherent states and Bergman kernel for $L^k \rightarrow S^2$ in terms of the Hopf fibration $S^3 \rightarrow S^2$. The proof is a simple calculation.
\begin{lemma} \label{coh_state_S2} The coherent state on $S^2$ with base vector $p \in S^3$ is given by:
\[
 \psi^{(k)}_p [q] = \frac{k+1}{2 \pi} \langle p, q \rangle_{\mathbb{C}^2}^k
\]
That is, 
\begin{equation} \label{inprodcohS2}
 \langle \psi^{(k)}_p, \psi^{(k)}_q \rangle = \frac{k+1}{2 \pi} \langle q, p \rangle^k \, ,
\end{equation}
Moreover, the Bergman kernel is given by:
\[
 B^{(k)} (x,y) = \frac{k+1}{2\pi} \langle q, p \rangle^k \, \overline{p} \otimes q
\]
\end{lemma}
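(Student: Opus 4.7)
The plan is to derive all three formulas from a single direct calculation using the orthonormal basis \eqref{orth_basis} and part (a) of Lemma \ref{cohstatesprop}, which expresses the coherent state as $\psi^{(k)}_p = \sum_{a=0}^k \overline{e_a[p]}\, e_a$. Write $p = (p_1, p_2)$ and $q=(q_1,q_2)$ as points in $S^3 \subset \mathbb{C}^2$. The evaluation of a polynomial basis element on such a point is $e_a[p] = \sqrt{\tfrac{k+1}{2\pi}\binom{k}{a}}\, p_1^a p_2^{k-a}$, so
\[
 \psi^{(k)}_p[q] \;=\; \sum_{a=0}^k \overline{e_a[p]}\, e_a[q] \;=\; \frac{k+1}{2\pi}\sum_{a=0}^k \binom{k}{a}(\overline{p_1}q_1)^a(\overline{p_2}q_2)^{k-a}.
\]
The binomial theorem collapses this sum to $\tfrac{k+1}{2\pi}(\overline{p_1}q_1 + \overline{p_2}q_2)^k$, which is exactly $\tfrac{k+1}{2\pi}\langle p,q\rangle_{\mathbb{C}^2}^k$ given the paper's convention that the Hermitian inner product is antilinear in the first slot. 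This establishes the first formula.

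The inner product formula follows immediately from the reproducing/defining property \eqref{defncohstateeqn} of coherent states: applied to $s=\psi^{(k)}_q$, it gives $\langle \psi^{(k)}_p, \psi^{(k)}_q\rangle = \psi^{(k)}_q[p]$, and substituting the already-computed evaluation with the roles of $p$ and $q$ swapped yields $\tfrac{k+1}{2\pi}\langle q,p\rangle^k$. The Bergman kernel formula is then just Definition \ref{bergman_def} applied to what we have: $B^{(k)}(x,y) = \langle \psi^{(k)}_p, \psi^{(k)}_q\rangle\,\overline{p}\otimes q = \tfrac{k+1}{2\pi}\langle q,p\rangle^k\,\overline{p}\otimes q$, and the equivariance \eqref{equivariance_coherent} ensures this expression depends only on $x=\pi(p)$ and $y=\pi(q)$.

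There is no real obstacle; the only point requiring care is the bookkeeping around the paper's sign/slot conventions for the Hermitian inner product and the distinction between $s(x)$ (a section value in $L_x$) and $s[p]$ (its pairing with $p$), as set up in \eqref{shorthand}. Once those conventions are pinned down, the proof is a one-line application of the binomial theorem followed by two immediate specializations. I would present it as a short computation rather than breaking it into subclaims.
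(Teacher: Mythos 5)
Your proof is correct, and since the paper itself omits the argument (it only remarks that ``the proof is a simple calculation''), your computation via Lemma \ref{cohstatesprop}(a), the orthonormal basis \eqref{orth_basis}, and the binomial theorem is exactly the calculation the authors have in mind. The two specializations --- the inner product via the defining property \eqref{defncohstateeqn} and the Bergman kernel via Definition \ref{bergman_def} --- are handled correctly, including the slot conventions.
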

As a result, we have:
\begin{corollary}\label{offDiagDropoff} The magnitude of the off-diagonal Bergman kernel on $S^2$ is given by
\begin{equation} \label{bergmanks2}
 |B^{(k)}(x,y)| = \frac{k+1}{2 \pi} \cos^k(\frac{\theta}{2})
\end{equation}
where $\theta$ is the geodesic distance between $x$ and $y$ on $S^2$.
\end{corollary}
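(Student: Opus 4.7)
Plan. The identity to prove is essentially a direct consequence of Lemma~\ref{coh_state_S2}, combined with a standard fact about the Hopf fibration: the Hermitian inner product of two unit vectors in $\mathbb{C}^2$ controls the great-circle distance of their images in $S^2$ via $|\langle p,q\rangle_{\mathbb{C}^2}| = \cos(\theta/2)$. So the plan is to (i) read off $|B^{(k)}(x,y)|$ from Lemma~\ref{coh_state_S2}, and (ii) establish this single trigonometric identity.

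Step 1. From Lemma~\ref{coh_state_S2}, for any choice of lifts $p \in P_x$ and $q \in P_y$,
\[
B^{(k)}(x,y) = \frac{k+1}{2\pi}\,\langle q,p\rangle_{\mathbb{C}^2}^{\,k}\; \overline{p}\otimes q.
\]
Since $p, q$ are unit vectors, $\overline{p}\otimes q$ has fibrewise norm $1$ in $\overline{L}_x\otimes L_y$, hence
\[
|B^{(k)}(x,y)| = \frac{k+1}{2\pi}\,|\langle p,q\rangle_{\mathbb{C}^2}|^{\,k}.
\]
It therefore suffices to show $|\langle p,q\rangle_{\mathbb{C}^2}| = \cos(\theta/2)$, where $\theta = d_{S^2}(x,y)$.

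Step 2. Both sides of this identity are $\SU(2)$-invariant: the left side because $\SU(2)$ acts unitarily on $\mathbb{C}^2$ and so preserves the Hermitian inner product of lifts, the right side because the $\SU(2)$-action descends to an isometry of $(S^2,$ round metric$)$ via the Hopf fibration. Hence we may rotate $x$ to the north pole and pick $p = u(0,0) = (0,1)^T$ using the trivialization \eqref{usec}. Writing $y$ in latitude--longitude coordinates as $(\theta,\phi)$, where $\theta$ is the polar angle (equivalently the geodesic distance from the north pole on the unit sphere), we take $q = u(\theta,\phi) = (\sin(\theta/2)e^{i\phi},\cos(\theta/2))^T$. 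A one-line computation gives $\langle p, q\rangle_{\mathbb{C}^2} = \cos(\theta/2)$, and in particular $|\langle p,q\rangle_{\mathbb{C}^2}| = \cos(\theta/2)$ as required.

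Main obstacle. There is no real obstacle; the content of the statement is the geometric identity in Step~2, and the only point one must be careful about is that the polar angle $\theta$ used in the trivialization \eqref{usec} agrees with the geodesic distance on $S^2$ induced (via stereographic projection from the south pole) from the Fubini--Study metric on $\mathbb{CP}^1$. This is automatic once one uses $\SU(2)$-equivariance, since the round metric on $S^2$ is the unique $\SU(2)$-invariant Riemannian metric up to scale, and the scale is fixed by the convention $\int_{\mathbb{CP}^1}\vol = 2\pi$ stated in the paper.
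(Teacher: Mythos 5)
Your proposal is correct and follows essentially the same route as the paper: read off $|B^{(k)}(x,y)|=\frac{k+1}{2\pi}|\langle p,q\rangle_{\mathbb{C}^2}|^k$ from Lemma~\ref{coh_state_S2}, then use rotational ($\SU(2)$-) invariance to place $x$ at the north pole and compute $\langle p,q\rangle_{\mathbb{C}^2}=\cos(\theta/2)$ via the trivialization \eqref{usec}. Your extra remarks on the unit fibrewise norm of $\overline{p}\otimes q$ and on identifying the polar angle with the geodesic distance just make explicit what the paper leaves implicit.
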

\begin{proof}
By rotational symmetry, it suffices to compute this when $x$ is the north pole $(0,0,1) \in S^2$ and $y$ is at $\phi = 0, \theta = \theta$. We can use formula \eqref{usec} to give us points $p \in P_x$ and $q \in P_y$ as $p = (0, 1)$ and $q = (\sin\frac{\theta}{2}, \cos\frac{\theta}{2})$. So, $\langle p, q \rangle = \cos \frac{\theta}{2}$ and the formula follows.
\end{proof}
Note that as promised, for $x \neq y$, $|B^{(k)} (x,y)| \rightarrow 0$ exponentially as $k \rightarrow \infty$. 

\begin{remark} The coherent state attached to $(0,1)^T \in S^3$ is proportional to the spin-up state $|j=1/2, m=1/2\rangle$ while the coherent state attached to $(1,0)^T \in S^3$ is proportional to the spin-down state $|j=1/2, m=-1/2\rangle$. See Section \ref{loopsareangs}.
\end{remark}

\subsection{Coherent states on $S^2$ in terms of complex variables}
Instead of thinking of the coherent state geometrically as a function on $S^3$ as in Lemma \eqref{coh_state_S2}, we can also think of it (as is more common, see eg. \cite{kirwin2007coherent}) as a degree $k$ homogenous polynomial $\psi_z^{(k)}$ in $w_1$ and $w_2$, parameterized by $z \in \mathbb{C}$. To do this, we will need to use the local trivialization $u(z)$ of the Hopf fibration from \eqref{usec}: 
\begin{align}
  \psi_z^{(k)}[w] &= \frac{k+1}{2 \pi} \langle u(z), q \rangle^k \\
   &= \frac{k+1}{2 \pi} \frac{1}{\sqrt{1 + |z|^2}} (\bar{z}q_1 + q_2)^k \\
   &= \frac{k+1}{2 \pi} \frac{1}{\sqrt{1 + |z|^2}}\sum_{r=0}^k \left( \begin{array}{c} k \\ r \end{array} \right) \bar{z}^r q_1^r q_2^{k-r}
\end{align}
Therefore, 
\[
 \psi_z^{(k)} = \sqrt{\frac{k+1}{2 \pi}} \frac{1}{\sqrt{1 + |z|^2}}\sum_{r=0}^k \bar{z}^r e_r \, .  
\]

\subsection{Connections and curvature}
Let us compute the induced Chern connection on the Hermitian tautological line bundle $\tau$, whose unit circle bundle $P$ is the Hopf fibration $S^3 \rightarrow S^2$. We take as local unitary section $u(\theta, \phi)$ given in \eqref{usec}. Then it is straightforward to calculate that
\[
  \nabla_X u = i \sin^2 \frac{\theta}{2} \, d\phi(X) u\,.
\]
Therefore, if we define the local connection 1-form $\alpha$ on $\tau$ by
\[
 \nabla_X u = i \alpha(X) u, 
\]
then
\begin{equation} \label{con_form_tau}
 \alpha = \sin^2 \frac{\theta}{2} \, d \phi  \, .
\end{equation}
Therefore, the local connection 1-form on $L = \tau^\vee$ is $-\alpha$, and $\curv (L, \nabla) = d(-i \alpha)$. So, the K\"{a}hler 2-form $\omega$ on $S^2$ is given by
\begin{align*}
 \omega &= i \curv (L, \nabla) \\
 &= i (-i d \alpha) \\
 &= \frac{1}{2} \sin \theta d\theta \wedge d\phi 
\end{align*}
which makes sense, as it is the Fubini-Study area form on $\mathbb{CP}^1$ transplanted to $S^2$. Note that $\omega$ is one half the usual area form on $S^2$. 

From these calculations, we conclude the following facts about parallel transport in the Hopf fibration $S^3 \rightarrow S^2$. The second statement follows naturally from Stokes' theorem.
\begin{lemma} \label{hollemma} If $\gamma(t)$ is a path in $S^2$, then its parallel-transported lift $\tildegamma(t)$ in $S^3$, starting at $\tildegamma(0) = u(\gamma(0))$, is given by
\begin{equation} \label{partransportformula}
 \tildegamma(t) = e^{-i \int\limits_{\gamma} \alpha}  \, u(\gamma(t)) \, .
 \end{equation}
Moreover, if $\gamma$ is an oriented closed loop in $S^2$ without self-intersections, then the holonomy around $\gamma$ is given by
\begin{equation} \label{holonomy}
 \Hol(\gamma) = e^{-i A/2}
\end{equation}
where $A = \partial \gamma$ is the spherical area enclosed by $\gamma$, determined by the standard right-hand-rule (turning counterclockwise from $\gamma'(t)$ leads into $A$).
\end{lemma}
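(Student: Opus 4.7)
The plan is to establish the two assertions in order, the first by integrating the parallel transport ODE in the given trivialization, and the second by invoking Stokes' theorem together with the curvature computation that $d\alpha$ equals half the standard area form on $S^2$.

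For the first assertion, I would write the parallel-transported lift in the ansatz form $\tildegamma(t) = f(t)\, u(\gamma(t))$, where $f : [0,T] \to U(1)$ is to be determined and $f(0) = 1$ so that the initial condition $\tildegamma(0) = u(\gamma(0))$ is satisfied. Differentiating along $\gamma'(t)$ and using the Leibniz rule together with the defining relation $\nabla_X u = i \alpha(X) u$ from \eqref{con_form_tau} gives
\[
  \nabla_{\gamma'(t)} \tildegamma(t) = \bigl(f'(t) + i \alpha(\gamma'(t)) f(t)\bigr)\, u(\gamma(t)).
\]
Parallel transport amounts to setting this expression equal to zero, which is a first-order linear ODE with the explicit unit-modulus solution $f(t) = \exp\bigl(-i \int_0^t \alpha(\gamma'(s))\, ds\bigr)$, giving precisely \eqref{partransportformula}.

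For the holonomy statement, I would specialize to a closed loop $\gamma$ of period $T$. Evaluating the formula from the first part at $t = T$ and using that $u(\gamma(T)) = u(\gamma(0))$ shows that
\[
  \Hol(\gamma) = \exp\!\Bigl(-i \oint_\gamma \alpha\Bigr).
\]
Assuming first that $\gamma$ is a simple closed loop whose enclosed region $A$ (in the sense of the right-hand rule) avoids the south pole where $u$ is not defined, Stokes' theorem applies on a neighborhood of $\overline{A}$ and yields $\oint_\gamma \alpha = \int_A d\alpha$. A direct computation from \eqref{con_form_tau} gives $d\alpha = \tfrac{1}{2}\sin\theta\, d\theta \wedge d\phi$, which is exactly half the standard area form on $S^2$, so $\int_A d\alpha = A/2$ and \eqref{holonomy} follows.

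The only delicate point, and the main obstacle, is that the trivialization $u$ is only defined on $S^2 \setminus \{(0,0,-1)\}$, so the above Stokes argument has to be run on whichever of the two regions bounded by $\gamma$ misses the south pole. If $A$ contains the south pole, I would apply Stokes to the complementary region $A' = S^2 \setminus A$ of standard spherical area $4\pi - A$, observing that $\gamma$ bounds $A'$ with the opposite orientation, so that $-\oint_\gamma \alpha = (4\pi - A)/2$ and hence $\oint_\gamma \alpha = A/2 - 2\pi$. Since $e^{-i(A/2 - 2\pi)} = e^{-iA/2}$, the holonomy formula \eqref{holonomy} holds in either case.
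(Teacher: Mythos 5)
Your proposal is correct and follows essentially the same route the paper intends: the paper offers no written proof beyond asserting that the formula follows from the computation $\nabla_X u = i\alpha(X)u$ and that the holonomy statement ``follows naturally from Stokes' theorem,'' which is exactly the ODE-plus-Stokes argument you carry out. Your explicit treatment of the case where the enclosed region contains the south pole (where the trivialization $u$ fails) is a worthwhile addition that the paper leaves implicit, and it correctly uses the fact that the total area $4\pi$ contributes a trivial phase $e^{2\pi i}$.
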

\begin{remark} Note that we have expressed the holonomy in terms of the {\em standard} area on $S^2$, which is twice the {\em symplectic} area.
\end{remark}

\section{Coherent loop states on $S^2$} \label{cohloopstS2}
From Lemma \ref{holonomy}, we see that a loop $\gamma$ in $S^2$ without self-intersections is a Bohr-Sommerfeld loop of order $k$ precisely when the quantization condition
\[
 e^{ikA/2} = 1
\]
is satisfied, where $A$ is the area enclosed by the loop. Note that since the area of the sphere is $4 \pi$, this condition is not sensitive to which of the two areas `enclosed by $\gamma$' is being used.

\subsection{Constant height loop states}

In particular, this allows us to compute the Bohr-Sommerfeld loops having fixed height on $S^2$.
\begin{lemma} A loop at fixed height $\z = \cos\theta$ on $S^2$ is a Bohr-Sommerfeld loop of order $k$ if and only 
\begin{equation} \label{discrete_thetas}
 \cos \theta = 1- \frac{2a}{k}, \quad a = 0, 1, \ldots, k.
\end{equation}
\end{lemma}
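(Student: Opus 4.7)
The plan is to apply the holonomy formula \eqref{holonomy} from Lemma \ref{hollemma} directly to a loop of constant height. First, I would parameterize the loop $\gamma$ at height $\z=\cos\theta$ as $(\theta,\phi)$ with $\phi\in[0,2\pi)$ at fixed $\theta$, oriented so that the spherical cap lying above the loop (the region toward the north pole) is the enclosed area $A$ in the right-hand-rule sense.

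Next, I would compute $A$ as the usual standard area of this spherical cap. A direct integration of the standard area form $\sin\theta\, d\theta\wedge d\phi$ from the pole down to latitude $\theta$ yields
\[
 A = 2\pi(1-\cos\theta).
\]
Plugging into \eqref{holonomy} gives $\Hol(\gamma) = e^{-i\pi(1-\cos\theta)}$, and the Bohr-Sommerfeld condition $\Hol(\gamma)^k=1$ becomes
\[
 e^{-ik\pi(1-\cos\theta)} = 1 \,\Longleftrightarrow\, k(1-\cos\theta) \in 2\mathbb{Z}.
\]
Writing $k(1-\cos\theta)=2a$ gives $\cos\theta = 1-2a/k$, and the constraint $-1\le \cos\theta\le 1$ forces $a\in\{0,1,\ldots,k\}$, which is exactly \eqref{discrete_thetas}.

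There is essentially no obstacle here beyond being careful about conventions. The one subtle point to flag is the factor-of-two discrepancy between the symplectic area and the standard area on $S^2$ (noted in the remark after Lemma \ref{hollemma}): the formula \eqref{holonomy} is written in terms of the \emph{standard} area $A=2\pi(1-\cos\theta)$, not the symplectic area $\pi(1-\cos\theta)$, and using the wrong one would give the wrong quantization spacing. Also, because the sphere has total area $4\pi$, the alternative choice of the complementary cap with area $4\pi - A$ leads to the same condition $k(1-\cos\theta)\in 2\mathbb{Z}$ (since $4\pi k$ contributes a multiple of $2$ after dividing by $2$ in the exponent when $k$ is even, and more carefully one notes $e^{-ik(4\pi - A)/2} = e^{ikA/2}$ when $k$ is even, while the sensitivity is anyway absorbed into replacing $a$ by $k-a$), confirming the statement is independent of which enclosed region one chooses.
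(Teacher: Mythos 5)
Your proof is correct and follows exactly the route the paper intends (the lemma is stated there without an explicit proof, immediately after the quantization condition $e^{ikA/2}=1$ is derived from Lemma \ref{hollemma}): compute the cap area $A=2\pi(1-\cos\theta)$ and impose that the holonomy be a $k$th root of unity. Your attention to the standard-versus-symplectic area factor and to the independence of the choice of enclosed region matches the paper's own remarks; the only nitpick is that $e^{-2\pi i k}=1$ for every integer $k$, so the ``when $k$ is even'' qualifier in your final aside is unnecessary.
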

\begin{remark} We use $z$ for points in $\mathbb{C}$, and $\z$ for the vertical coordinate in $\mathbb{R}^3$.
\end{remark}
\begin{definition} \label{standardloop} Let 
\[
 \gamma_m(\phi) = (\sin \theta_m \cos \phi, \sin \theta_m \sin \phi, \cos \theta_m)
\]
 be the loop on $S^2$ at the discrete height $\cos \theta_m = 2m/k$, $m \in \{-k/2, -k/2+1, \ldots, k/2\}$. We define its {\em standard parallel-transported lift} to $S^3$ as
\begin{equation} \label{gammaphi}
 \tildegamma_m(\phi) = e^{i\sin^2\left(\frac{\theta_m}{2}\right) \phi} u(\theta_m, \phi) \, .
\end{equation}
To fix possible ambiguities at the north and south poles of $S^2$ ($\z = 1$ and $\z=-1$ respectively) where the loops are constant, we set $\tildegamma_{\z=1} = (0, 1)^T \in S^3$ and $\tildegamma_{\z=-1} = (1, 0)^T \in S^3$ respectively. 
\end{definition}
It is sometimes useful to be able to index these loops as $\tildegamma^a$ using the mathematician's numbering $a= 0, 1, \ldots k$ instead of the physicist's numbering $m=-j, -j+1, \ldots, j$ where $j=k/2$. The correspondence is
\[
  \tildegamma^a \equiv \tildegamma_m \quad \text{where } a=\frac{k}{2} - m \, .
\]

\subsection{Loop states are angular momentum eigenstates} \label{loopsareangs}
In this section, we will identify the coherent loop states\footnote{The prefactor of $\frac{\sin \theta_m}{\sqrt{2}}$ is there to account for the arclength.}
\begin{equation} \label{preciseformcohstate}
 \Psi^{(k)}_{\tildegamma_m} = \frac{\sin \theta_m}{\sqrt{2}} \int_0^{2 \pi} \psi^{(k)}_{\tildegamma_m(\phi)} d \phi
\end{equation}
with the spin angular momentum eigenstates $|j m \rangle$ familiar to physicists, where $j = \frac{k}{2}$:
\begin{equation} \label{prove_eigenstate}
 \Psi^{(k)}_{\tildegamma_{m}} \, \propto \, \big|jm\rangle
\end{equation}
See Figure \ref{local_state}. 

Recall that the state $|jm \rangle$ is defined as an eigenstate of the angular momentum operator $\hat{J}_z$, where
\[
  iJ_z = \frac{1}{2} \left[ \begin{array}{cc} 1 & 0 \\ 0 & -1 \end{array} \right] 
\]
is the element in the Lie algebra $\mathfrak{su}(2)$ generating rotations about the $\z$-axis. That is,
\begin{equation} \label{angstateliealg}
 \hat{J}_z |jm \rangle = m  |jm \rangle
\end{equation}
in units where $\hbar = 1$. In terms of the group element 
\begin{equation} \label{Uzphi}
 U_z(\Delta \phi) = e^{i \Delta\phi J_z} = \left[ 
 \begin{array}{cc} e^{i \frac{\Delta \phi}{2}} & 0 \\
 0 & e^{-i \frac{\Delta \phi}{2}} \end{array} \right] \, \in \SU(2)
\end{equation}
covering the rotation matrix $R_z (\Delta \phi)$ in $\SO(3)$, \eqref{angstateliealg} is equivalent to
\[
 U_z (\Delta \phi) | jm \rangle = e^{im \Delta \phi} | jm \rangle \,.
\]
To prove \eqref{prove_eigenstate}, we need to compute the difference between parallel-transporting in the Hopf fibration $S^3 \rightarrow S^2$ in the $\phi$ direction by $\Delta \phi$ and acting with $U_z (\Delta \phi)$. These are not the same, except at $\z = 0$; the following result records the difference.
\begin{lemma} $U_z(\Delta \phi)  \cdot \tildegamma_m(\phi) = e^{-i \z_m \frac{\Delta\phi}{2}} \tildegamma(\phi + \Delta\phi)$.
\end{lemma}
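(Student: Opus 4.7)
The identity is verified by direct matrix computation, organised in three short steps.

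\emph{Step 1.} Using the explicit forms (\ref{usec}) and (\ref{Uzphi}), matrix multiplication gives
\[
U_z(\Delta\phi)\, u(\theta_m, \phi) = \begin{pmatrix} \sin(\theta_m/2)\, e^{i(\phi + \Delta\phi/2)} \\ \cos(\theta_m/2)\, e^{-i\Delta\phi/2} \end{pmatrix} = e^{-i\Delta\phi/2}\, u(\theta_m, \phi + \Delta\phi),
\]
obtained by factoring $e^{-i\Delta\phi/2}$ out of both components. This is the heart of the calculation: $U_z(\Delta\phi)$ advances the longitude by $\Delta\phi$ while simultaneously introducing a global fibre phase $e^{-i\Delta\phi/2}$.

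\emph{Step 2.} I would then multiply through by the parallel-transport phase appearing in (\ref{gammaphi}) and compare $U_z(\Delta\phi)\, \tildegamma_m(\phi)$ with $\tildegamma_m(\phi + \Delta\phi)$. All the longitude-dependent pieces of the exponents cancel, leaving a single scalar of modulus one depending only on $\theta_m$ and $\Delta\phi$. \emph{Step 3.} Applying the half-angle identity $\sin^2(\theta_m/2) = (1 - \cos\theta_m)/2 = (1 - \z_m)/2$ collapses this remaining scalar to exactly $e^{-i \z_m \Delta\phi/2}$, yielding the claim.

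I expect no genuine obstacle beyond careful sign bookkeeping between the two conventions (the parallel-transport sign from Lemma~\ref{hollemma} and the matrix action of $U_z$); the argument is otherwise elementary. What is conceptually instructive is that the correction phase vanishes precisely when $\z_m = 0$, consistent with the paper's remark that rotation by $U_z$ in the $\phi$ direction and parallel transport in the Hopf fibration coincide on the equator but differ elsewhere by exactly this geometric phase.
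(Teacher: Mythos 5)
Your plan is the same direct substitution the paper itself uses (its proof is literally ``substitute \eqref{Uzphi}, \eqref{gammaphi} and \eqref{usec}''), and your Step 1 is correct: $U_z(\Delta\phi)\,u(\theta_m,\phi)=e^{-i\Delta\phi/2}\,u(\theta_m,\phi+\Delta\phi)$. However, the ``sign bookkeeping'' you defer in Steps 2--3 is exactly where the claim can fail, so you should carry it out explicitly. If you take the phase in \eqref{gammaphi} literally, $\tildegamma_m(\phi)=e^{+i\sin^2(\theta_m/2)\phi}u(\theta_m,\phi)$, then the residual scalar is
\[
e^{i\sin^2(\theta_m/2)\phi}\;e^{-i\Delta\phi/2}\;e^{-i\sin^2(\theta_m/2)(\phi+\Delta\phi)}
= e^{-i\left(\tfrac{1}{2}+\sin^2(\theta_m/2)\right)\Delta\phi}
= e^{-i\left(1-\z_m/2\right)\Delta\phi},
\]
which is \emph{not} $e^{-i\z_m\Delta\phi/2}$. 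The stated phase emerges only if the lift carries the opposite phase $e^{-i\sin^2(\theta_m/2)\phi}$, which is what the parallel-transport formula \eqref{partransportformula} of Lemma \ref{hollemma} actually dictates and what the computation in Lemma \ref{Psi_explicit} uses; in that case the residual scalar is $e^{i\left(\sin^2(\theta_m/2)-1/2\right)\Delta\phi}=e^{-i\z_m\Delta\phi/2}$ by your half-angle identity, and this is also the version needed for Theorem \ref{first_proof} to yield the eigenvalue $e^{im\Delta\phi}$. So your approach is the right one and matches the paper's; the only gap is that Step 3 is asserted rather than computed, and as asserted it is correct only after resolving a sign discrepancy that is internal to the paper's equation \eqref{gammaphi} rather than to your argument. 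Make the sign choice explicit and the proof is complete.
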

\begin{proof} This follows from simply substituting in \eqref{Uzphi},\eqref{gammaphi} and  \eqref{usec} respectively.
\end{proof}
This immediately gives us a geometric proof of the following. 
\begin{theorem}\label{first_proof} The coherent loop state $\Psi^{(k)}_{\tildegamma_m}$ is an eigenstate of the angular momentum operator $\hat{J}_\z$ with eigenvalue $m$:
\[
 U_{\z} (\Delta \phi) \cdot \Psi^{(k)}_{\tildegamma_m} = e^{i m \Delta\phi}\, \Psi^{(k)}_{\tildegamma_m}.
\]
\end{theorem}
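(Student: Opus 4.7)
The plan is to combine the group-action formula for coherent loop states from Section \ref{groupactionloopstates} with the shift lemma immediately preceding the theorem, and then read off the phase.

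First, I would use the group-equivariance of coherent loop states: pulling $U_z(\Delta\phi)$ inside the defining integral \eqref{preciseformcohstate} gives
\[
 U_z(\Delta\phi) \cdot \Psi^{(k)}_{\tildegamma_m} = \frac{\sin\theta_m}{\sqrt{2}} \int_0^{2\pi} \psi^{(k)}_{U_z(\Delta\phi)\cdot \tildegamma_m(\phi)} \, d\phi,
\]
which is justified by the calculation in Section \ref{groupactionloopstates}. Next, I would apply the lemma $U_z(\Delta\phi)\cdot \tildegamma_m(\phi) = e^{-i\z_m \Delta\phi/2}\,\tildegamma_m(\phi+\Delta\phi)$ to substitute inside the integrand, and then use the equivariance property of coherent states \eqref{equivariance_coherent} at level $k$ (i.e.\ $\psi^{(k)}_{e^{i\alpha}p} = e^{-ik\alpha}\psi^{(k)}_p$, which follows from \eqref{equivariance_coherent} because $\psi^{(k)}_p$ is built from $p^{\otimes k}$) to pull the scalar outside:
\[
 \psi^{(k)}_{e^{-i\z_m \Delta\phi/2}\,\tildegamma_m(\phi+\Delta\phi)} = e^{i k \z_m \Delta\phi/2}\, \psi^{(k)}_{\tildegamma_m(\phi+\Delta\phi)}.
\]

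Then I would perform the change of variables $\phi \mapsto \phi + \Delta\phi$ in the integral. The key point here, and the only step that requires any thought, is that the integrand is $2\pi$-periodic in $\phi$: this is precisely the Bohr-Sommerfeld condition $\psi^{(k)}_{\tildegamma_m(2\pi)} = \psi^{(k)}_{\tildegamma_m(0)}$ which holds because $\z_m = 2m/k$ forces the $k$-fold holonomy of $\tildegamma_m$ to be trivial (this is the content of the discussion following the definition of Bohr-Sommerfeld loops). Thus the shifted integral equals the original one, and we are left with
\[
 U_z(\Delta\phi)\cdot \Psi^{(k)}_{\tildegamma_m} = e^{i k \z_m \Delta\phi/2}\, \Psi^{(k)}_{\tildegamma_m}.
\]

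Finally, substituting $\z_m = 2m/k$ collapses the exponent to $e^{im\Delta\phi}$, which is the desired identity. I do not expect any serious obstacle: the only place to be careful is tracking the factor of $k$ in the equivariance formula (so that the $\z_m/2$ from parallel transport combines with the $k$ from the tensor power to produce the integer $m$), and confirming that periodicity of the integrand is exactly the Bohr-Sommerfeld quantization condition rather than a separate assumption.
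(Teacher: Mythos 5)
Your proposal is correct and follows essentially the same route as the paper's own proof: pull $U_z(\Delta\phi)$ inside the integral via Lemma \ref{groupactionstates}, apply the shift lemma, extract the phase using the level-$k$ equivariance of coherent states, and reindex the integral by periodicity. You merely make explicit two steps the paper leaves implicit (the factor of $k$ in the equivariance and the Bohr--Sommerfeld periodicity justifying the change of variables), which is a welcome clarification but not a different argument.
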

\begin{proof} 
Write $U =  U_{\z} (\Delta \phi)$. Then, ignoring the arclength prefactor in \eqref{preciseformcohstate} which plays no role, 
\begin{align*}
 U  \cdot \Psi^{(k)}_{\tildegamma_m}  &=  U  \cdot \int_{\phi=0}^{2 \pi} \psi^{(k)}_{\tildegamma_m(\phi)} \, d \phi \\
&= \int_{\phi=0}^{2 \pi}  U \psi^{(k)}_{\tildegamma_m(\phi)} \, d \phi  \\
&= \int_{\phi=0}^{2 \pi} \psi^{(k)}_{U \tildegamma_m(\phi)} \, d \phi  \\
&= \int_{\phi=0}^{2 \pi} \psi^{(k)}_{e^{-iz_m\frac{\Delta\phi}{2}}  \tildegamma_m(\phi + \Delta\phi)} \, d \phi  \\
&= \int_{\phi=0}^{2 \pi}e^{im \Delta\phi}  \psi^{(k)}_{\tildegamma_m(\phi+\Delta\phi)} \, d \phi  \\
&= e^{im\Delta\phi} \int_{\phi=0}^{2 \pi} \psi^{(k)}_{\tildegamma_m(\phi+\Delta\phi)} \, d \phi  \\
&= e^{im\Delta\phi} \Psi^{(k)}_{\tildegamma_m} \, .
\end{align*}
\end{proof}

We now give a different proof of Theorem \ref{orth_basis}, by performing an explicit calculation in the framework of homogenous polynomials. First we show that the coherent loop states $\Psi_{\tildegamma^a}^{(k)}$ are proportional to the orthonormal basis $e_a$ for $V_k$ from \eqref{orth_basis}.  
\begin{lemma} \label{Psi_explicit} $\Psi_{\tildegamma^a}^{(k)} = c_a e_a$, where $
c_a = \sqrt{\pi(k+1)\binom{k}{a}} \sin^a\left(\frac{\theta_a}{2}\right) \cos^{k-a}\left(\frac{\theta_a}{2}\right) \sin\theta_a.$
\end{lemma}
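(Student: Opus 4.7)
The plan is to expand the integrand $\psi^{(k)}_{\tildegamma_m(\phi)}$ in the orthonormal basis $\{e_r\}_{r=0}^{k}$ of $V_k$ using Lemma~\ref{cohstatesprop}(a), and to then recognize that integration in $\phi$ collapses the sum to a single term by Fourier orthogonality on the circle.

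First I would substitute the explicit formulas \eqref{usec}, \eqref{gammaphi}, and \eqref{orth_basis} to compute the matrix coefficient
\[
 e_r\bigl[\tildegamma_m(\phi)^{\otimes k}\bigr] = \sqrt{\frac{k+1}{2\pi}\binom{k}{r}} \, \sin^r\!\frac{\theta_m}{2}\, \cos^{k-r}\!\frac{\theta_m}{2}\, e^{i(r-a)\phi}.
\]
The combined phase $e^{i(r-a)\phi}$ here comes from the factor $e^{i\phi}$ in the first component of $u(\theta_m,\phi)$ together with the exponential in the parallel-transport lift \eqref{gammaphi}, once one uses the crucial identity
\[
 k\sin^2\!\frac{\theta_m}{2} = \frac{k}{2} - m = a,
\]
which is the translation of the Bohr-Sommerfeld condition \eqref{discrete_thetas} into the mathematician's label $a = k/2 - m$ and, in particular, makes the exponent an integer.

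Next, Lemma~\ref{cohstatesprop}(a) supplies the expansion
\[
 \psi^{(k)}_{\tildegamma_m(\phi)} = \sum_{r=0}^{k} \overline{e_r\bigl[\tildegamma_m(\phi)^{\otimes k}\bigr]}\, e_r,
\]
and termwise integration over $\phi \in [0, 2\pi]$ produces the Fourier factor $\int_0^{2\pi} e^{-i(r-a)\phi}\, d\phi = 2\pi\, \delta_{r,a}$, so only the single basis vector $e_a$ survives. Multiplying by the arclength prefactor $\sin\theta_m/\sqrt{2}$ from \eqref{preciseformcohstate} and simplifying $\frac{2\pi}{\sqrt{2}}\cdot\frac{1}{\sqrt{2\pi}} = \sqrt{\pi}$ then recovers the stated value of $c_a$.

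I do not anticipate any real obstacle: the whole argument is direct algebra once the explicit formulas are in place. The only conceptual content, and the reason the calculation works at all, is the observation that the Bohr-Sommerfeld quantization is exactly what is needed to produce an integer exponent $r-a$, so that Fourier orthogonality on $S^1$ picks out a single basis element; without this, the $\phi$-integral would not annihilate all other basis vectors, and the coherent loop state would not land on a single angular momentum eigenstate.
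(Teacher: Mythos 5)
Your proposal is correct and follows essentially the same route as the paper's proof: the paper evaluates $\Psi^{(k)}_{\tildegamma^a}[q]$ directly via the kernel formula of Lemma~\ref{coh_state_S2} and expands $\langle p,q\rangle^k$ binomially, which is term-for-term the same as your expansion of $\psi^{(k)}_{\tildegamma^a(\phi)}$ in the basis $\{e_r\}$ via Lemma~\ref{cohstatesprop}(a), and both arguments collapse to the single term $r=a$ by the same Fourier orthogonality on $S^1$, with the same bookkeeping of the prefactor. The only point to flag is that your phase $e^{i(r-a)\phi}$ requires the lift $\tildegamma_m(\phi)=e^{-i\sin^2(\theta_m/2)\,\phi}\,u(\theta_m,\phi)$ --- the sign dictated by the parallel-transport formula of Lemma~\ref{hollemma}, and the one the paper's own computation actually uses --- rather than the sign as printed in \eqref{gammaphi}, which would instead give $e^{i(r+a)\phi}$ and break the selection of $r=a$.
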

\begin{proof} We compute the integral \eqref{int_defn_coh}:
\begin{align*}
    \Psi^{(k)}_{\tildegamma^a}[q] &= \frac{\sin\theta_a}{\sqrt{2}} \int_0^{2\pi} \psi^{(k)}_{\tildegamma^a(\phi)} \, d\phi \\
    &= \frac{\sin\theta_m}{\sqrt{2}} \int_0^{2\pi} \psi^{(k)}_{e^{-i \sin^2\left(\frac{\theta_a}{2}\right)\phi} u(\theta_a, \phi)} \, d\phi \\
 &= \frac{k+1}{2 \pi} \frac{\sin\theta_a}{\sqrt{2}} \int_0^{2\pi} e^{k i \sin^2\left(\frac{\theta_a}{2}\right)\phi} \, \left\langle \left(\sin\left(\frac{\theta_m}{2}\right) e^{i \phi}, \cos\left(\frac{\theta_m}{2}\right), \, (q_1, q_2) \right) \right\rangle^k \, d\phi \\    
    &= \frac{k+1}{2 \pi} \frac{\sin\theta_a}{\sqrt{2}} \sum_{r=0}^k \sin^r\left(\frac{\theta_a}{2}\right) \cos^{k-r}\left(\frac{\theta_a}{2}\right) q_1^r q_2^{k-r} \int_0^{2\pi} e^{i(a-r)\phi} \, d\phi \\
    &= \frac{k+1}{\sqrt{2}} \sin\theta_a \sin^a\left(\frac{\theta_a}{2}\right) \cos^{k-a}\left(\frac{\theta_a}{2}\right) \\
    &= c_a e_a [q].
\end{align*}

\end{proof}
\begin{lemma}\label{eaeigenstate} $\hat{J}_z e_a = \left(\frac{k}{2} - a\right) e_a.$
\end{lemma}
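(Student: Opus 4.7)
The plan is to exploit the fact that $\hat{J}_z$ is the infinitesimal generator of $U_z(\Delta\phi)$, and simply compute how $U_z(\Delta\phi)$ acts on the polynomial $e_a = \sqrt{\tfrac{k+1}{2\pi}\binom{k}{a}}\, Q_1^a Q_2^{k-a}$. The action of $U_z(\Delta\phi)$ on sections, from Section \ref{group_act_coh_states}, is
\[
 (U_z(\Delta\phi) \cdot e_a)[q] = e_a[U_z(-\Delta\phi) q] = e_a\!\left[\,e^{-i\Delta\phi/2} q_1,\, e^{i\Delta\phi/2} q_2\,\right].
\]
Since $e_a$ is proportional to $Q_1^a Q_2^{k-a}$, and each $Q_i$ is homogenous of degree $1$, plugging in gives a scalar factor of $e^{-i a \Delta\phi/2} \cdot e^{i(k-a)\Delta\phi/2} = e^{i(k/2 - a)\Delta\phi}$ in front of $e_a[q]$. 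Hence
\[
 U_z(\Delta\phi) \cdot e_a = e^{i(k/2 - a)\Delta\phi}\, e_a.
\]

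Differentiating both sides at $\Delta\phi = 0$ and recalling that $U_z(\Delta\phi) = e^{i \Delta\phi\, \hat{J}_z}$ yields
\[
 i\hat{J}_z e_a = i\left(\tfrac{k}{2} - a\right) e_a,
\]
which gives the claimed eigenvalue equation.

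There is really no technical obstacle here: the lemma is essentially a bookkeeping statement about the weights of the standard basis of the symmetric power representation $\mathrm{Sym}^k(\mathbb{C}^2)$. The only subtlety worth flagging in the write-up is making sure the inverse appears in the correct place in the defining formula $(g\cdot s)[p] = s[g^{-1} p]$, so that the sign of the eigenvalue comes out right. As a sanity check, one may compare the result with Theorem \ref{first_proof} via $\Psi^{(k)}_{\tildegamma_m} \propto e_a$ (Lemma \ref{Psi_explicit}) and $a = k/2 - m$: the eigenvalue $k/2 - a = m$ recovers exactly the eigenvalue computed there.
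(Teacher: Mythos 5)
Your proof is correct and follows essentially the same route as the paper: both compute the action of the one-parameter subgroup $U_z$ on the monomial $Q_1^a Q_2^{k-a}$ via $(g\cdot s)[q]=s[g^{-1}q]$ and differentiate at the identity to read off the weight $\tfrac{k}{2}-a$. Your version merely makes the finite group action $U_z(\Delta\phi)\cdot e_a = e^{i(k/2-a)\Delta\phi}e_a$ explicit before differentiating, which is a presentational difference only.
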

\begin{proof} To compute the action of $\xi = i\hat{J}_z$ on a state in $V_k$, 
\begin{equation} \label{Jzstate}
 (\xi \cdot s)(p) = \frac{d}{dt}\Bigr|_{t=0} U_z(t) s \left( R_z(-t) p\right)\,
\end{equation}
one might use the quantization formula of Kostant and Souriau \cite{dynamiques170b, souriau1997structure},
\[
 \xi \cdot s = -\nabla_{\phi} s + \mu(J_z) s \, ,
\]
where $\mu$ is the moment map. Instead, we will simply use the definition \eqref{Jzstate} directly:
\begin{align*}
	\frac{d}{dt}\Bigr|_{t=0} U_z(t) Q_1^aQ_2^{k-a} \left( R_z(-t) p\right)[q] &= \frac{d}{dt}\Bigr|_{t=0} Q_1^a Q_2^{k-a} \left( e^{-i\frac{t}{2}} q_1, \, e^{-\frac{t}{2}} q_2 \right) \\
	&= \frac{d}{dt}\Bigr|_{t=0} e^{i(k-2a) \frac{t}{2}} q_1^k q_2^{k-a} \\
	&= i(k-2a) q_1^k q_2^{k-a}.
\end{align*}
\end{proof}
\begin{proof}[Second proof of Theorem \ref{first_proof}]
Lemma \ref{Psi_explicit} says that $\Psi_{\tildegamma^a}$ is a multiple of $e_a$, while Lemma \ref{eaeigenstate} says that $e_a$ can be identified with the eigenstate $|jm \rangle$ where $m = \frac{k}{2} - a$.
\end{proof}

\section{Inner products of loop states on $S^2$} \label{inprodsec}
In this section we give our own self-contained proof (in our current setting) of Borthwick, Paul and Uribe's result \cite{borthwick1995legendrian} (Theorem \ref{BPUTheorem}) on the asymptotics of the inner products of coherent loop states. We use this to write down a spherical area formula for the asymptotics of very general coherent loop states on $S^2$, and apply this to derive Littlejohn and Yu's formula for the asymptotics of the Wigner matrix elements.

\subsection{Asymptotics of norm of coherent loop states}
The following formula recovers \cite[Theorem 4.4a]{borthwick1995legendrian} in our setting.
\begin{lemma} As $k \rightarrow \infty$, and keeping $\cos \theta_m = \frac{2m}{k}$ fixed, we have
\begin{equation} \label{norm_lemma}
 \left\langle \Psi^{(k)}_{\tildegamma_m}, \Psi^{(k)}_{\tildegamma_m} \right\rangle \sim \sqrt{\frac{k}{\pi}}T
\end{equation}
where $T=\frac{2\pi\sin\theta_m}{\sqrt{2}}$ is the arclength of $\gamma_m$.
\end{lemma}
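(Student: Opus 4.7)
The plan is to apply the exchange-of-integrals identity \eqref{exchangeintegrals} to the case $\gamma = \sigma = \gamma_m$. Substituting the integral definition \eqref{preciseformcohstate} of $\Psi^{(k)}_{\tildegamma_m}$ into both slots reduces the global inner product over $S^2$ to a double integral of pairwise coherent state inner products over $[0,2\pi]^2$. Using the explicit formula in Lemma \ref{coh_state_S2} together with \eqref{gammaphi} and \eqref{usec}, a direct computation in $\mathbb{C}^2$ shows that $\langle \tildegamma_m(t), \tildegamma_m(s)\rangle$ depends only on $\tau := s - t$ and factors as $e^{\mp i\alpha \tau}\bigl(\alpha e^{i\tau}+\beta\bigr)$ with $\alpha := \sin^2(\theta_m/2)$, $\beta := \cos^2(\theta_m/2)$ (the sign coming from the parallel-transport phase). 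Exploiting the $2\pi$-periodicity of the integrand in $\tau$ to collapse one direction gives
\[
\left\langle \Psi^{(k)}_{\tildegamma_m}, \Psi^{(k)}_{\tildegamma_m}\right\rangle \;=\; \tfrac{(k+1)\sin^2\theta_m}{2}\int_0^{2\pi} e^{\mp ik\alpha\tau}\bigl(\alpha e^{i\tau}+\beta\bigr)^k\,d\tau.
\]

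Next, I would estimate this one-dimensional integral using the complex stationary phase lemma (Lemma \ref{csplem}). The key point is that a Taylor expansion at $\tau = 0$ yields
\[
e^{\mp i\alpha\tau}\bigl(\alpha e^{i\tau}+\beta\bigr) \;=\; 1 - \tfrac{\alpha\beta}{2}\tau^2 + O(\tau^3),
\]
i.e.\ the linear terms in $\tau$ cancel exactly --- this cancellation is the infinitesimal shadow of the Bohr-Sommerfeld/closed-loop condition, and it is precisely what the parallel-transport phase in \eqref{gammaphi} exists to enforce. Hence $\tau = 0$ is a nondegenerate stationary point with a purely imaginary Hessian of positive imaginary part (so the correction $e^{i(\pi/4-\alpha_j/2)}$ in \eqref{new_sp} is trivial), and Lemma \ref{csplem} collapses to a classical real Gaussian integral of value $\sqrt{2\pi/(k\alpha\beta)}$. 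Away from $\tau = 0 \pmod{2\pi}$, the modulus $|\alpha e^{i\tau}+\beta|^k = (1-4\alpha\beta\sin^2(\tau/2))^{k/2}$ is exponentially suppressed, so the usual localization to a shrinking neighborhood of the critical point is immediate.

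Substituting $\alpha\beta = \tfrac14\sin^2\theta_m$ into $\sqrt{2\pi/(k\alpha\beta)}$ and combining with the prefactor $\tfrac{(k+1)\sin^2\theta_m}{2}$ simplifies to $\sqrt{2\pi k}\sin\theta_m$, which equals $\sqrt{k/\pi}\cdot T$ after inserting $T = 2\pi\sin\theta_m/\sqrt{2}$. The main obstacle is bookkeeping --- tracking phase conventions through the parallel-transport formula \eqref{partransportformula} and making sure the sign in the exponent is consistent so that the linear Taylor term really does cancel. As a sanity check, one can compare against the exact closed form $\|\Psi^{(k)}_{\tildegamma^a}\|^2 = |c_a|^2$ obtainable from Lemma \ref{Psi_explicit}, to which a Stirling approximation gives the same leading asymptotic.
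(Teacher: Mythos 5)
Your proof is correct, but it takes a genuinely different route from the paper's. The paper's argument is a two-line computation: it invokes the exact closed form $\Psi^{(k)}_{\tildegamma^a}=c_a e_a$ from Lemma \ref{Psi_explicit} and applies Stirling's approximation to the binomial coefficient in $|c_a|^2$ --- which is precisely the ``sanity check'' you relegate to your final sentence. You instead run the exchange-of-integrals machinery of \eqref{exchangeintegrals} in the degenerate case $\gamma=\sigma$, where the critical set of the phase on the torus is the whole diagonal rather than isolated points: the $2\pi$-periodicity of $f(s-t)^k$ (which holds exactly because $k\sin^2(\theta_m/2)=a\in\mathbb{Z}$, i.e.\ the Bohr--Sommerfeld condition) collapses the double integral to a one-dimensional Laplace integral, and the cancellation of the linear Taylor term (which is the parallel-transport condition, as you note) makes $\tau=0$ a nondegenerate critical point with $H(S)=i\alpha\beta$ purely imaginary of positive imaginary part, so the phase correction in \eqref{new_sp} is trivial and the integral contributes $\sqrt{2\pi/(k\alpha\beta)}$. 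The numerology checks out: $\alpha\beta=\tfrac14\sin^2\theta_m$ combined with the prefactor gives $\sqrt{2\pi k}\,\sin\theta_m=\sqrt{k/\pi}\,T$. What your approach buys is generality and conceptual unity --- it is the clean-intersection (diagonal) case of the Borthwick--Paul--Uribe asymptotics and would work verbatim for any Bohr--Sommerfeld loop, not just constant-height circles, and it explains \emph{why} the answer is $\sqrt{k/\pi}$ times arclength; what the paper's route buys is brevity and an exact pre-asymptotic formula, at the cost of relying on the special closed form available only for the $|jm\rangle$ states. One remark on your sign hedging: the paper's \eqref{gammaphi} carries $e^{+i\sin^2(\theta_m/2)\phi}$ while the parallel-transport formula \eqref{partransportformula} (and the computation inside Lemma \ref{Psi_explicit}) gives $e^{-i\sin^2(\theta_m/2)\phi}$; the latter is the lift for which your linear term cancels, so your instinct to fix the sign by demanding that cancellation is exactly right.
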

\begin{proof} From Lemma \ref{Psi_explicit}, and using the relation $m = k/2 - a$, we have
\begin{align*}
\langle \Psi^{(k)}_{\tildegamma_m},\Psi^{(k)}_{\tildegamma_m}\rangle &= \pi(k+1)\binom{k}{a} \sin^{2a}\left(\frac{\theta_a}{2}\right) \cos^{2(k-a)}\left(\frac{\theta_a}{2}\right) \sin^2\theta_a. \\
&\sim \pi(k+1)\sin^{2}\left(\theta_a\right) \sqrt{\frac{k}{2\pi a(k-a)}} \\
&\sim \pi k \sin^{2}\left(\theta_a\right)\frac{1}{\sqrt{2 \pi}} \frac{2}{\sin \theta_a} \\
&= \sqrt{\frac{k}{\pi}}T
\end{align*}
where the second line follows from Stirling's approximation $n!\sim\sqrt{2\pi n}(\frac{n}{e})^{n}$.
\end{proof}

\subsection{Warm-up example}
For instructive purposes, let us apply the technology of coherent loop states to the same warm-up example as in \cite[Theorem 8]{roberts1999classical}. We will calculate the asymptotics of the small Wigner $d$-matrix, for integral $j$ (i.e. $k = 2j$ is even), on the `equator states':
\begin{equation} \label{innprodwex}
 d^j_{00}(\beta) = \langle j0 | \, U_y(\beta) \, | j0 \rangle .
\end{equation}
Note that this matrix element can be computed exactly \cite{biedenharn1984angular} as 
\[
d^j_{00}(\beta) = \frac{2}{\sqrt{2 \pi j \sin \beta}} P_j(\cos \beta)
\]
where $P_n (x)$ is the $n$th Legendre polynomial.

As explained in the Introduction, {\em without} the technology of coherent loop states, in the framework of geometric quantization, one would write $|j 0 \rangle$ as a multiple of the holomorphic section
\[
s = Q_1^{j} Q_2^{j}
\]
as in \eqref{orth_basis}, and then one would express the inner product $\langle s, U_y(\beta) s \rangle$ as an integral over $S^2$, 
\[
 \frac{\langle s, U_y(\beta)s\rangle}{\langle s, s\rangle} = \frac{1}{\langle s, s\rangle} \int_{x \in S^2} \left(s(x), (U_y(\beta)(s))(x)\right)_x \vol_x
\]
whose asymptotics can then be computed using the stationary phase principle. This was the approach in \cite[Theorem 8]{roberts1999classical}.

{\em With} the technology of coherent loop states, we rather express $|j0 \rangle$ as a multiple of the equator coherent loop state $\Psi^{(k)}_\tildegamma$, where $k=2j$ is even. This gives two advantages. Firstly, $SU(2)$ acts very naturally on coherent loop states, as we saw in Section \ref{groupactionloopstates},
\[
 U_y(\beta) \Psi^{(k)}_{\tildegamma} = \Psi^{(k)}_{\tildesigma}
\]
where $\tildesigma(t) = U_y(\beta) \tildegamma(t)$. Secondly, it means that when we compute the normalized inner product
\begin{equation} \label{numdenom}
\langle j0 | \, U_y(\beta) \, | j0 \rangle = \frac{\langle \Psi_\tildegamma^{(k)}, \Psi^{(k)}_\tildesigma\rangle}{\langle \Psi^{(k)}_\tildegamma, \Psi^{(k)}_\tildegamma \rangle} 
\end{equation}
we can exchange integrals and express the inner product of coherent {\em loop} states as an integral over the {\em torus} $S^1 \times S^1$ (not over $S^2$) of the global inner product of {\em coherent states}, for which we have an elegant geometric formula \eqref{inprodcohS2}:
\begin{align}
 \langle \Psi_\tildegamma^{(k)}, \Psi^{(k)}_\tildesigma\rangle &= \frac{1}{2} \int_{0}^{2 \pi} \int_{0}^{2 \pi} \bigl\langle \psi^{(k)}_{\tildegamma(s)}, \psi^{(k)}_{\tildesigma(t)} \bigr\rangle \, ds dt \\
 &= \frac{k+1}{4 \pi} \int_{0}^{2\pi} \int_0^{2\pi}  \langle \tildesigma(t), \tildegamma(s) \rangle^k_{\mathbb{C}^2}   \label{secondstop}
\end{align}
(The factor of $\frac{1}{2}$ is there to account for arc length on the equator). This integral is very amenable to calculation.

\begin{proposition} \label{warmupprop} For $\beta \in (0, \pi)$, the small Wigner $d$-matrix for equatorial states at large even $k$ has the following asymptotics:
\[
 d^{\frac{k}{2}}_{00}(\beta) \sim 
\frac{2}{\sqrt{\pi k\sin\beta}}\cos\left((k+1) \beta/2-\pi/4\right) .
\]
\end{proposition}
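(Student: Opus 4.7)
The plan is to compute the normalized inner product in \eqref{numdenom} by applying complex stationary phase (Lemma \ref{csplem}) to the torus integral \eqref{secondstop}, then dividing by the norm asymptotic \eqref{norm_lemma}.

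First I would set up the computation. By the equator identification ($k=2j$ even gives $m=0$ on the equator), Lemma \ref{Psi_explicit} yields $|j0\rangle \propto \Psi^{(k)}_{\tildegamma}$ where $\gamma$ is the equator loop, so
\[
 d^{k/2}_{00}(\beta) = \frac{\langle \Psi^{(k)}_{\tildegamma}, \Psi^{(k)}_{\tildesigma}\rangle}{\langle \Psi^{(k)}_{\tildegamma}, \Psi^{(k)}_{\tildegamma}\rangle}
\]
with $\tildesigma = U_y(\beta)\cdot \tildegamma$ by Lemma \ref{groupactionstates} and the calculation of Section \ref{groupactionloopstates}. The denominator is $\sim \sqrt{k/\pi}\,\pi\sqrt{2}$ by Lemma \ref{norm_lemma} with $\sin\theta_m = 1$. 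For the numerator I would use \eqref{secondstop} and write the phase in exponential form as $e^{-k\phi(s,t)}$ with $\phi(s,t) = -\log\langle\tildesigma(t),\tildegamma(s)\rangle_{\mathbb{C}^2}$, which is analytic in a neighborhood of the torus and satisfies $\Rre(\phi)\geq 0$ since $|\langle\tildesigma(t),\tildegamma(s)\rangle|\leq 1$.

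Next I would identify the critical set. By Corollary \ref{offDiagDropoff}, $|\langle\tildesigma(t),\tildegamma(s)\rangle|=\cos(d(\gamma(s),\sigma(t))/2)$, so $\Rre(\phi)$ is minimized precisely where $\gamma(s)=\sigma(t)$. For $\beta\in(0,\pi)$ the equator and its $R_y(\beta)$-rotate are two distinct great circles meeting transversely in exactly two points on the $\hat y$-axis, giving two critical points $(s_\pm,t_\pm)$. At each such point I would parametrize $\gamma'(s)$ and $\sigma'(t)$ as unit tangent vectors meeting at angle $\beta$ with opposite orientations, compute the Hessian of $\phi$ directly from the explicit formulas for $\tildegamma(s)$ and $\tildesigma(t)$, and check that $\det H = \tfrac{1}{4}\sin^2\beta$ (the off-diagonal entry encoding the angle $\beta$), giving $|\det H|^{1/2} = \tfrac{1}{2}\sin\beta$. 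The principal arguments of the eigenvalues of the associated $H(S)$ (with $S=-i\phi$) can be computed by diagonalizing this $2\times 2$ complex symmetric matrix; the two critical points contribute complex-conjugate phases because their orientations are opposite.

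Then I would evaluate the phase $e^{ikS(s_\pm,t_\pm)}$ at the two critical points. By Lemma \ref{hollemma}, the value $\langle\tildesigma(t_\pm),\tildegamma(s_\pm)\rangle^k$ at the intersection points is a unit complex number whose argument is a holonomy/parallel-transport factor along the appropriate arcs of $\gamma$ and $\sigma$ from their standard basepoints. A direct substitution using \eqref{usec} and $U_y(\beta)$ should give phases $e^{\pm i(k+1)\beta/2}$ at the two points (the $(k+1)/2$ rather than $k/2$ is the arclength-advanced contribution together with the $\tfrac12$-spin holonomy of the Hopf fibration, which is the subtle bookkeeping). Combined with the Hessian signature phase of $\mp\pi/4$ from Lemma \ref{csplem}, the two critical contributions conjugate-sum to
\[
 \frac{k+1}{4\pi}\cdot\frac{2\pi}{k}\cdot\frac{2\cos((k+1)\beta/2 - \pi/4)}{\tfrac12\sin\beta} \cdot 2,
\]
after which dividing by the denominator $\sqrt{k/\pi}\,\pi\sqrt{2}$ yields the stated asymptotic.

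The main obstacle is step three: getting the phase inside the cosine exactly right, i.e. confirming that the combined holonomy/arclength contribution is $(k+1)\beta/2$ and that the Hessian signature adds $-\pi/4$ rather than $+\pi/4$. The Hessian computation itself is a routine $2\times 2$ calculation and the critical point geometry is forced by transversality of the two great circles, so the bulk of the work is careful tracking of the $\mathrm{U}(1)$ phase conventions in \eqref{gammaphi} and the equivariance formula, and verifying that the principal-square-root prescription of Lemma \ref{csplem} gives the advertised sign.
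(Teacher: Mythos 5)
Your overall strategy is exactly the paper's: exchange integrals, apply complex stationary phase on the torus via Lemma \ref{csplem}, and divide by the norm from \eqref{norm_lemma}. However, two of your concrete computational claims are wrong, and as written your pieces do not assemble into the stated asymptotic. First, the Hessian determinant. The phase here is genuinely complex: at $(s,t)=(\pi/2,\pi/2)$ the Hessian of $S$ is $\frac{i}{4}\left(\begin{smallmatrix} 1 & -e^{-i\beta} \\ -e^{-i\beta} & 1 \end{smallmatrix}\right)$, whose determinant is $-\frac{1}{16}(1-e^{-2i\beta})$ and hence has magnitude $\frac{1}{8}\sin\beta$ --- first power of $\sin\beta$, not $\frac14\sin^2\beta$ as you assert. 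Your value is the real-phase intuition ($1-\cos^2\beta$), and it would put $\sin\beta$ rather than $\sqrt{\sin\beta}$ in the denominator of the final answer, contradicting both the proposition and the exact Legendre-polynomial expression for $d^j_{00}$. Indeed, multiplying out your displayed product and dividing by $\sqrt{2\pi k}$ gives $\frac{4}{\sin\beta\sqrt{2\pi k}}\cos\left((k+1)\beta/2-\pi/4\right)$, which differs from the claimed asymptotic by a factor of $\sqrt{2/\sin\beta}$.

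Second, the phase bookkeeping. The critical value of the integrand is the $k$-th power of the unit complex number $\langle\tildesigma(t_\pm),\tildegamma(s_\pm)\rangle = e^{\pm i\beta/2}$, so it equals $e^{\pm ik\beta/2}$; no amount of holonomy accounting can make a $k$-th power of this inner product equal $e^{\pm i(k+1)\beta/2}$. The extra $e^{\pm i\beta/2}$ that produces the $(k+1)$ in the cosine comes from the Hessian, not from the critical value: the eigenvalues of $H(S)$ have principal arguments $\alpha_1=\pi-\beta/2$ and $\alpha_2=\pi/2-\beta/2$, so Lemma \ref{csplem} contributes $e^{i\sum_j(\pi/4-\alpha_j/2)} = e^{i(\beta/2-\pi/4)}$, not merely $e^{-i\pi/4}$. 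This is precisely where the complex stationary phase lemma departs from the real one --- the eigenvalue arguments are not restricted to $0$ and $\pi$, and the $\beta$-dependent part of the resulting phase is what your accounting misses. Once these two points are corrected, your argument coincides with the paper's proof.
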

\begin{proof}
We need the formula for the equatorial Bohr-Sommerfeld loop from \eqref{gammaphi},
\[
 \tildegamma(t) = \frac{1}{\sqrt{2}} \begin{pmatrix} e^{i t/2} \\ e^{-i t/2} \end{pmatrix}
\]
as well as the rotation matrix $U_y(\beta) \in SU(2)$:
\[
 U_y(\beta) = \begin{pmatrix} \cos \beta/2 & \sin\beta/2 \\ -\sin\beta/2 & \cos \beta/2 \end{pmatrix}
\]
Substituting these in, \eqref{secondstop} becomes:
\begin{align}
 \langle \Psi_\tildegamma^{(k)}, \Psi^{(k)}_\tildesigma\rangle &=\frac{k+1}{4\pi}\int_{0}^{2\pi}\int_{0}^{2\pi}\left(\cos\beta/2\cos(s-t)/2 + i\sin\beta/2 \sin(s+t)/2)\right)^{k} ds dt \nonumber \\
 &=\frac{k+1}{4 \pi} \int_0^{2\pi} \int_0^{2 \pi} e^{ikS} \, ds dt \label{firststatph}
\end{align}
where
$$
S = -i \log\left(\cos\beta/2\cos(s-t)/2 + i\sin\beta/2 \sin(s+t)/2\right).
$$
Let us apply the complex stationary phase principle (Lemma \ref{csplem})  to the integral \eqref{firststatph}. Since $|\langle \tildesigma(t), \tildegamma(s) \rangle| \leq 1$, the integrand satisfies the prerequisites of Theorem \ref{csplem}. The stationary points of $S$ occur at $(s,t) = (\frac{\pi}{2}, \frac{\pi}{2})$ and $(s,t) = (\frac{3\pi}{2}, \frac{3\pi}{2})$, which are precisely the two points of intersection of the loop $\gamma$ and $\sigma$. The value of $e^{ikS}$ at each critical point equals $e^{ik \beta/2}$ and $e^{-ik \beta/2}$ respectively. At $x = (\frac{\pi}{2}, \frac{\pi}{2})$, the Hessian computes as:
\[
H_x = \begin{pmatrix}
S_{ss} & S_{st} \\
S_{ts} & S_{tt}
\end{pmatrix}= \frac{i}{4}
\begin{pmatrix}
1 & -e^{-i\beta}\\
-e^{-i\beta} & 1
\end{pmatrix} .
\]
The eigenvalues are
\[
 \lambda_1 = \frac{1}{2} \sin(\beta/2)e^{i(\pi- \beta/2)}, 
 \quad \lambda_2 = \frac{1}{2} \cos(\beta/2)e^{i\left(\pi/2 - \beta/2\right)},
 \]
so that $|\det H_x| = 1/8 \sin \beta$ and the principal angles are $\alpha_1 = \pi - \beta/2$ and $\alpha_2 = \pi/2 -\beta/2$. So, the various ingredients of the stationary phase contribution from the critical point are:
\begin{align}
 \frac{2\pi}{k} f(x) e^{ikS}\frac{e^{i \sum_{j}\left( \pi/4 - \alpha_j/2\right)}}{\sqrt{|\det H_x|}} &= \frac{2\pi}{k} \frac{(k+1)e^{ik\beta/2}}{4\pi} \frac{e^{i(\beta/2 - \pi/4)}}{\sqrt{\sin(\beta)/8}} \\
 &\sim \sqrt{2} e^{ik \beta/2} \frac{e^{i( \beta/2 - \pi/4)}}{\sqrt{\sin \beta}} \label{statc1}.
 \end{align}
A similar calculation shows that the other critical point delivers the conjugate  contribution
\begin{equation}
 \sqrt{2} e^{-ik \beta/2} \frac{e^{-i(\beta/2 - \pi/4)}}{\sqrt{\sin \beta}} \, . \label{statc2}
\end{equation}
Putting it all together, the asymptotics of \eqref{firststatph} is
$$
\sqrt{\frac{2}{\sin\beta}} 2 \cos\{(k+1)\beta/2-\pi/4\}.
$$
Dividing by the norm squared of the equator state from \eqref{norm_lemma},
\[
\langle \Psi^{(k)}_\tildegamma, \Psi^{(k)}_\tildegamma \rangle \sim \sqrt{2 \pi k},
\]
gives the final result.  
\end{proof}

\begin{remark} \label{roberts_error2} In \cite[Theorem 8]{roberts1999classical}, the asymptotic formula was computed incorrectly as 
\[
\frac{2}{\sqrt{\pi k\sin\beta}} \cos\left((k+1)\beta/2 + \pi/4 \right) \,.
\]
The source of the error was the error in the complex stationary phase formula \eqref{roberts_error}. This only affected the warmup example and not the main results of \cite{roberts1999classical}.
\end{remark}

\subsection{The general case}
In fact, the stationary phase calculation we did in the warm-up example is all we need to derive the theorem of Borthwick, Paul and Uribe \cite[Theorem 4.4b]{borthwick1995legendrian}in our setting. 
\begin{theorem} \label{mainthm1} Let $\gamma$ and $\sigma$ be curves in $S^2$ intersecting in finitely many points, with parallel transported lifts $\tildegamma$ and $\tildesigma$ in $S^3$ respectively. Then, when $k \rightarrow \infty$ through Bohr-Sommerfeld values,  
\[
 \langle \Psi^{(k)}_{\tildegamma}, \Psi^{(k)}_{\tildesigma} \rangle \sim \sqrt{2}\sum_{x\in\gamma\cap\sigma}\frac{\omega_x^{k}e^{i \orr_x (\theta_x/2-\pi/4)}}{\sqrt{\sin\theta_x}}
\]
where:
\begin{itemize}
   \item At each intersection point $x$, $\omega_x = \tildegamma_x/\tildesigma_x \in U(1)$, and
   \item $\theta_x \in (0, \pi)$ is the angle between $\gamma'$ and $\sigma'$ in $T_x S^2$ (always positive), and
   \item $\orr_x \in \{+1, -1\}$ is the {\em orientation} of $\theta_x$ (it is $+1$ if rotating $\gamma'$ to $\sigma'$ agrees with the orientation of $S^2$, i.e. counterclockwise when viewed from outside $S^2$, and negative otherwise).
\end{itemize}
\end{theorem}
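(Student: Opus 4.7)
My plan is to extend the warm-up calculation (Proposition \ref{warmupprop}) to general Bohr--Sommerfeld loops, tracking the intersection angles $\theta_x$ and orientations $\orr_x$ instead of the single parameter $\beta$ of the equator example. Exchanging integrals as in \eqref{exchangeintegrals} and applying \eqref{inprodcohS2}, I first rewrite the inner product as an integral over the torus:
\[
\langle \Psi^{(k)}_\tildegamma, \Psi^{(k)}_\tildesigma\rangle \;=\; \frac{k+1}{2\pi}\int_0^S\!\!\int_0^T e^{ik\Phi(s,t)}\, ds\, dt, \qquad e^{i\Phi(s,t)} := \langle \tildesigma(t), \tildegamma(s)\rangle_{\mathbb{C}^2}.
\]
By Cauchy--Schwarz $|e^{i\Phi}|\le 1$, so the hypotheses of Lemma \ref{csplem} are satisfied. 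Stationary points occur precisely where $|\langle \tildesigma(t_0),\tildegamma(s_0)\rangle|=1$, and the equality case of Cauchy--Schwarz forces $\tildegamma(s_0) = \omega_x\,\tildesigma(t_0)$ for some $\omega_x\in U(1)$; equivalently, $\gamma(s_0) = \sigma(t_0) = x$ is an intersection point of the two loops with $\omega_x = \tildegamma_x/\tildesigma_x$ as in the statement. At such a point the integrand equals $\omega_x^k$, accounting for that factor in the desired formula.

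The key step is the Hessian of $\Phi$ at a critical point. Differentiating $\Phi = -i\log\langle\tildesigma,\tildegamma\rangle$ (using that the Hermitian inner product is antilinear in its first slot) and exploiting the horizontality identity $\langle\tildegamma,\tildegamma'\rangle\equiv 0$ together with its derivative $\langle\tildegamma,\tildegamma''\rangle = -|\tildegamma'|_{\mathbb{C}^2}^2 = -\tfrac{1}{2}$ --- the value $\tfrac{1}{2}$ coming from the Riemannian submersion $S^3\to S^2$ in the paper's normalization $\int_{\mathbb{CP}^1}\omega = 2\pi$, since arclength on our $S^2$ is $\sqrt{2}$ times standard $S^3$-arclength on the horizontal lift --- both diagonal entries of the Hessian evaluate to $i/2$. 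For the off-diagonal entry, the $\omega_x$ factors conveniently cancel and one obtains $-i\langle \hat\sigma,\hat\gamma\rangle$, where $\hat\gamma,\hat\sigma\in\mathbb{C}^2$ are the horizontal lifts at the common basepoint $p_0 = \tildesigma(t_0)$ of $\gamma'(s_0)$ and $\sigma'(t_0)$. Now $H_{p_0}$ is a complex line in $\mathbb{C}^2$, and $d\pi_{p_0}$ intertwines multiplication by $i$ on $H_{p_0}$ with the K\"{a}hler complex structure $J$ on $T_xS^2$ (which is rotation by $+\pi/2$ in the orientation of $S^2$). Hence $\hat\sigma = e^{i\orr_x\theta_x}\hat\gamma$ and $\langle \hat\sigma,\hat\gamma\rangle = \tfrac{1}{2}e^{-i\orr_x\theta_x}$. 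Assembling,
\[
H_{(s_0,t_0)}(\Phi) \;=\; \frac{i}{2}\begin{pmatrix} 1 & -e^{-i\orr_x\theta_x} \\ -e^{-i\orr_x\theta_x} & 1\end{pmatrix},
\]
which is exactly the warm-up Hessian with $\beta$ replaced by $\orr_x\theta_x$ (and transversality $\theta_x\in(0,\pi)$ ensures $\det H \ne 0$).

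The rest of the argument is mechanical. The eigenvalues $\tfrac{i}{2}(1\mp e^{-i\orr_x\theta_x})$ factor as $i\cos(\orr_x\theta_x/2)\,e^{-i\orr_x\theta_x/2}$ and $-\sin(\orr_x\theta_x/2)\,e^{-i\orr_x\theta_x/2}$, yielding $|\det H| = \tfrac{1}{2}\sin\theta_x$, and a short case check for $\orr_x = \pm 1$ shows that the phase factor $e^{i\sum_j(\pi/4 - \alpha_j/2)}$ of Lemma \ref{csplem} collapses to $e^{i\orr_x(\theta_x/2 - \pi/4)}$ in both cases. Combining the prefactor $(k+1)/(2\pi)$ with the stationary-phase $2\pi/k$ and $|\det H|^{-1/2} = \sqrt{2/\sin\theta_x}$, each intersection point contributes $\sqrt{2}\,\omega_x^k e^{i\orr_x(\theta_x/2 - \pi/4)}/\sqrt{\sin\theta_x}$, and summation produces the theorem. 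I expect the chief obstacle to be the Hessian bookkeeping: keeping the Hermitian conventions consistent when differentiating the complex log, and correctly tracking the orientation sign $\orr_x$ through the K\"{a}hler identification of $H_{p_0}$ with $T_xS^2$; once those are in hand the rest is trigonometry, as in the warm-up.
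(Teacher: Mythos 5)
Your proof is correct, and it takes a genuinely different route from the paper's. The paper proves Theorem \ref{mainthm1} by a symmetry-and-locality transfer: having done the equator computation in the warm-up example, it argues that critical points and their stationary-phase contributions depend only on the tangent data at an intersection point, that every tangent configuration is realized by a pair of geodesics, and that all geodesic pairs are $\SU(2)$-equivalent to the warm-up configuration; the general formula is then asserted to follow. You instead perform the local computation directly at an arbitrary transverse intersection: you identify the contributing set of Lemma \ref{csplem} via the equality case of Cauchy--Schwarz (which also gives $d\Phi=0$ there, since $\tildegamma'\perp\tildegamma\propto\tildesigma$), and you compute the Hessian from the horizontality identities $\langle\tildegamma,\tildegamma'\rangle=0$ and $\langle\tildegamma,\tildegamma''\rangle=-|\tildegamma'|^2=-\tfrac12$ together with the K\"ahler identification $\hat\sigma=e^{i\orr_x\theta_x}\hat\gamma$. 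Your Hessian $\tfrac{i}{2}\bigl(\begin{smallmatrix}1&-e^{-i\orr_x\theta_x}\\-e^{-i\orr_x\theta_x}&1\end{smallmatrix}\bigr)$ is consistent with the warm-up's $\tfrac{i}{4}$ version (the factor of $4$ being the Jacobian of passing from the angle to the arclength parametrization, since $ds=d\phi/\sqrt2$ on the equator in the paper's normalization), and your eigenvalue and principal-argument bookkeeping correctly reproduces $|\det H|=\tfrac12\sin\theta_x$ and the phase $e^{i\orr_x(\theta_x/2-\pi/4)}$ in both orientation cases. What your route buys is a rigorous justification of the step the paper leaves implicit: a priori the Hessian involves the accelerations $\tildegamma''$, $\tildesigma''$, so the paper's claim that the contribution ``only depends on the tangent vectors'' (and hence can be read off from the geodesic case) requires an argument; your identity $\langle\tildegamma,\tildegamma''\rangle=-|\tildegamma'|^2$ shows that only first-order data survives at the critical point, which is precisely what licenses the paper's reduction. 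The paper's route is shorter and recycles the warm-up wholesale; yours is self-contained and, as you anticipated, the only delicate points are the Hermitian-convention and orientation signs, which you have handled correctly.
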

\begin{proof}
We need to calculate the asymptotics of the following integral:
\begin{align*}
  \langle \Psi^{(k)}_{\tildegamma}, \Psi^{(k)}_{\tildesigma} \rangle &= \frac{k+1}{2\pi} \int_0^S \int_0^T \langle \tildesigma(t), \tildegamma(s) \rangle_{\mathbb{C}^2}^k \, ds dt \\
 &=  \frac{k+1}{2\pi} \int_0^S \int_0^T e^{ikS} ds dt.
\end{align*}
To do this, we simply need to consider the calculation of the warm-up example in Proposition \eqref{warmupprop} in a more geometrically invariant way. Now, in that example, $\gamma$ was the equator loop, $\tildegamma$ was a parallel transported lift of it in $S^3$, $\tildesigma = U_y(\beta) \tildegamma$, and  $\beta \in (0, \pi)$. We found that the critical points occurred at the points of intersection of the shadow curves $\gamma$ and $\sigma$ on $S^2$. 

In that example, $\gamma$ and $\sigma$ were geodesics. So, by symmetry, we know that for {\em any} Bohr-Sommerfeld loops $\tildegamma$ and $\tildesigma$ which are lifts of geodesics, the critical points of the phase of $\langle \tildesigma(t), \tildegamma(s) \rangle$ will occur at the points of intersection of $\gamma$ and $\sigma$. But, a critical point is a purely local question, and only depends on the tangent vectors $\tildesigma'(t)$ and $\tildegamma'(s)$. And, every tangent vector can be exponentiated to a geodesic. We conclude that for {\em any} Bohr-Sommerfeld loops such that $\gamma$ and $\sigma$ intersect transversely, the critical points of the phase of $\langle \tildesigma(t), \tildegamma(s) \rangle$ will occur at the points of intersection of $\gamma$ and $\sigma$.

Now that we have computed the critical points, let us turn our attention to their stationary phase contributions. In the case of the warm-up example, these were
\[
\sqrt{2} e^{ik\beta/2} \frac{e^{i(\beta/2 - \pi/4)}}{\sqrt{\sin \beta}} \quad \text{and} \quad \sqrt{2} e^{-ik \beta/2} \frac{e^{-i(\beta/2 - \pi/4)}}{\sqrt{\sin \beta}}
\]
respectively. At each point $x$, the first phase factor is clearly just $\omega_x^k$, where $\omega_x \in U(1)$ is defined by $\tildegamma_x = \omega_x \tildesigma_x$. The sign in the argument in the second phase factor can be expressed in a geometric way as the {\em orientation} $\orr_x \in \{+1, -1\}$ of the angle from $\gamma'$ to $\sigma'$ at $x$ ($+1$ if rotating $\gamma'$ to $\sigma'$ agrees with the standard counterclockwise orientation of $S^2$, and $-1$ otherwise). In other words, the contribution at each critical point can be expressed in terms of the local geometry as
\[
     \sqrt{2} \omega_x^k \frac{e^{i \orr_x \left(\theta/2 - \pi/4\right)}}{\sqrt{\sin \theta_x}} \,.
\]
By the same symmetry argument as before, this must hold for {\em general} Bohr-Sommerfeld curves, which proves the theorem.
\end{proof}
\begin{remark} Comparing our formula to \cite[Theorem 4.4b]{borthwick1995legendrian}, there are two main differences. Firstly, in \cite{borthwick1995legendrian}, the angle is allowed to be negative and the square root in the denominator can sometimes therefore be imaginary, whereas our $\theta_x$ and hence our denominator is always positive real. We have isolated all the `sign' information into the $\orr_x$ factor. Also, our formula is the complex conjugate of that in \cite{borthwick1995legendrian}, since our inner product is conjugate linear in the first factor.
\end{remark}
\begin{remark}
Note that the inner product of coherent loop states is $O(1)$ in terms of $k$, as opposed to the inner product of {\em normalized} coherent loop states which is $O(\frac{1}{\sqrt{k}})$, as in Proposition \ref{warmupprop}. 
\end{remark}

We can simplify Theorem \ref{mainthm1} in the case when the loops intersect in precisely two points, having equal angles of intersection at each intersection point, as in Figure \ref{symloops}.

\begin{figure}
\centering
\includegraphics[width=0.4\textwidth]{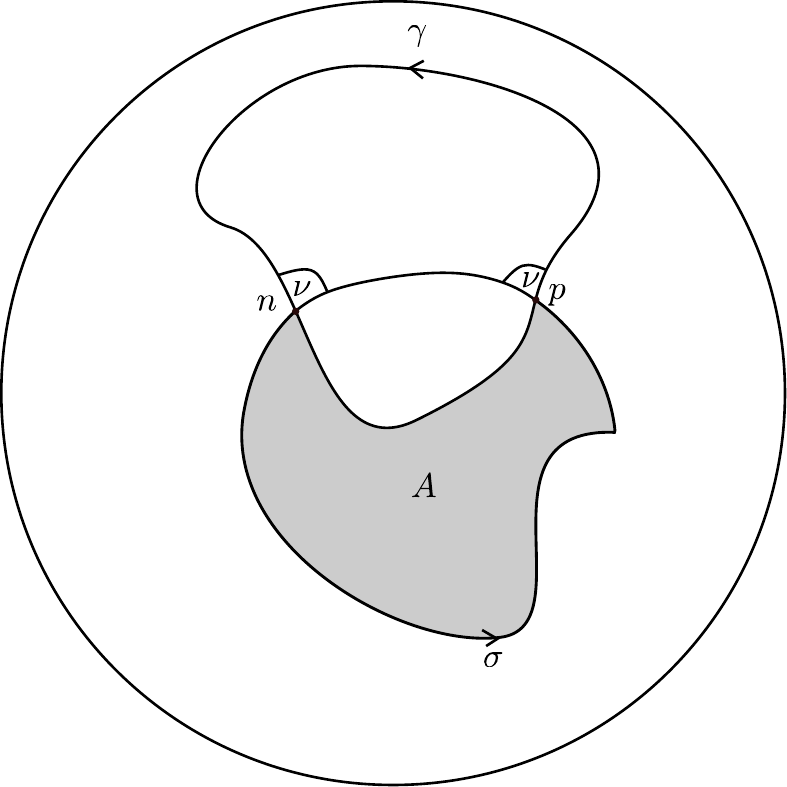}

\caption{\label{symloops} Two oriented loops $\gamma$ and $\sigma$ intersecting twice on $S^2$ with equal angle $\nu$ at the intersection points. The area $A$ is the region to the right of $\gamma$ and to the left of $\sigma$. The two intersection points are distinguished by the orientation of their angles --- at $p$, $\orr_p = +1$ while at $n$, $\orr_n = -1$.  }
\end{figure}

\begin{corollary} \label{corsymcurves} Let $\gamma$ and $\sigma$ be oriented Bohr-Sommerfeld loops on $S^2$ which intersect twice transversely, having equal angle of intersection $\nu$ at each intersection point. Then as $k \rightarrow \infty$ through joint Bohr-Sommerfeld values,
	$$
   \langle \Psi^{(k)}_\tildegamma, \Psi^{(k)}_\tildesigma \rangle \sim \sqrt{\frac{8}{\sin \nu}}\cos(kA/4 + \nu/2 - \pi/4)
	$$
where:
\begin{itemize}
	\item $A$ is the spherical area of the shaded region in Figure \ref{symloops} (the region to the right of $\gamma$ and to the left of $\sigma$),
	\item the relative phase between the lifts $\tildegamma$ and $\tildesigma$ is fixed by the convention that at the positively oriented point $p$, $\tildegamma_p = e^{iA/4} \tildesigma_p$.
\end{itemize}
\end{corollary}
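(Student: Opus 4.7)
The plan is to apply Theorem \ref{mainthm1} directly to the present setup, where $\gamma\cap\sigma = \{p, n\}$ with common intersection angle $\nu$ but opposite orientations, and then to collapse the resulting two-term sum into a single cosine by using the lift convention together with Lemma \ref{hollemma}.

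First, since $\theta_p = \theta_n = \nu$ and $\orr_p = +1$, $\orr_n = -1$ (read off from Figure \ref{symloops}), Theorem \ref{mainthm1} specialises to
\begin{equation*}
\langle \Psi^{(k)}_\tildegamma, \Psi^{(k)}_\tildesigma \rangle \sim \frac{\sqrt{2}}{\sqrt{\sin \nu}}\Bigl[\omega_p^{k}\, e^{i(\nu/2 - \pi/4)} + \omega_n^{k}\, e^{-i(\nu/2 - \pi/4)}\Bigr],
\end{equation*}
so the work reduces to identifying the two phases $\omega_p^{k}$ and $\omega_n^{k}$. The stated convention $\tildegamma_p = e^{iA/4}\tildesigma_p$ yields $\omega_p = e^{iA/4}$, hence $\omega_p^{k} = e^{ikA/4}$ immediately.

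Next, I would compute $\omega_n$ by expressing the ratio $\omega_n/\omega_p$ as a holonomy. Unwinding the definitions, one obtains $\omega_n/\omega_p = (\tildegamma_n/\tildegamma_p)(\tildesigma_p/\tildesigma_n)$, which is precisely the holonomy of the Hopf fibration around the closed loop formed by the segment of $\gamma$ from $p$ to $n$, concatenated with the reversed segment of $\sigma$ from $n$ to $p$. Because $A$ lies to the right of $\gamma$ and to the left of $\sigma$, this closed loop bounds exactly the shaded region of area $A$ with the right-hand-rule orientation, so by Lemma \ref{hollemma} its holonomy equals $e^{-iA/2}$. Hence $\omega_n = e^{iA/4}\cdot e^{-iA/2} = e^{-iA/4}$, and $\omega_n^{k} = e^{-ikA/4}$. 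Substituting back, the bracketed sum becomes $e^{i(kA/4 + \nu/2 - \pi/4)} + e^{-i(kA/4 + \nu/2 - \pi/4)} = 2\cos(kA/4 + \nu/2 - \pi/4)$, and combining with the prefactor $\sqrt{2}$ delivers the advertised $\sqrt{8/\sin\nu}\,\cos(kA/4 + \nu/2 - \pi/4)$.

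The main obstacle is the sign bookkeeping in the holonomy step: one has to verify that the concatenated loop indeed bounds the shaded region $A$ (as opposed to the complementary region of area $4\pi - A$) with the correct right-hand-rule orientation. Happily, the Bohr-Sommerfeld quantisation $e^{ikA/2}=1$ makes both choices yield the same value of $\omega_n^{k}$, so the formula is insensitive to this ambiguity. One should also double-check that the assignment $\orr_p = +1$, $\orr_n = -1$ really is forced by the geometric convention that $A$ lies to the right of $\gamma$ and to the left of $\sigma$, but this is an elementary local observation about how the two tangent frames rotate into each other at the two intersection points.
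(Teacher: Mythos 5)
Your proposal follows the same route as the paper's own (much terser) proof: specialise Theorem \ref{mainthm1} to the two intersection points with $\theta_p=\theta_n=\nu$, $\orr_p=+1$, $\orr_n=-1$, read off $\omega_p=e^{iA/4}$ from the stated convention, deduce $\omega_n^k=e^{-ikA/4}$ from the holonomy formula of Lemma \ref{hollemma}, and combine the two terms into a cosine. The specialisation and the final algebra are correct, and the conclusion is right.

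However, the detailed holonomy bookkeeping you supply is flawed in two places. First, the concatenated loop you describe (forward along $\gamma$ from $p$ to $n$, then backward along $\sigma$ from $n$ to $p$) does \emph{not} bound the shaded region with the right-hand-rule orientation of Lemma \ref{hollemma}: since $A$ lies to the right of $\gamma$ and to the left of $\sigma$ (hence to the right of $\sigma$ reversed), $A$ sits on the \emph{right} of your loop, and in fact the arc of $\gamma$ bordering $A$ is the one running from $n$ to $p$, not the one you use. The region your loop encloses on its left is the complementary lune (left of $\gamma$, right of $\sigma$), whose area differs from $A$ by the difference of the areas enclosed by the two full loops. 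Second, your safety net is based on a false premise: the Bohr--Sommerfeld condition constrains the areas enclosed by the closed loops $\gamma$ and $\sigma$, not the lunar area $A$; already in the equatorial warm-up $A=2\beta$ and $e^{ikA/2}=e^{ik\beta}\neq 1$ in general, so the two ``choices'' you contemplate genuinely give different holonomies. The correct repair is this: the loop whose holonomy literally equals $e^{-iA/2}$ is $\partial A$ traversed with $A$ on its left, namely backward along $\gamma$ from $p$ to $n$ and then forward along $\sigma$ from $n$ to $p$; switching to the other arcs (as you do) multiplies the holonomy by factors of $\Hol(\gamma)^{\pm 1}\Hol(\sigma)^{\pm 1}$, which \emph{are} $k$-th roots of unity by the Bohr--Sommerfeld condition and therefore drop out of $\omega_n^{k}$. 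With that correction your argument closes and matches the paper's.
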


\begin{proof}
From Theorem \ref{mainthm1}, we have
\[
 \langle \Psi^{(k)}_\tildegamma, \Psi^{(k)}_\tildesigma \rangle \sim \sqrt{\frac{2}{\sin \nu}}\left( \omega_p^k e^{i(\nu/2 - \pi/4)} + \omega_n^k e^{-i(\nu/2 - \pi/4)} \right)\, .
\]
From our convention on the relative phase, $\omega_p = e^{iA/4}$. On the other hand, from the holonomy formula in Lemma \ref{hollemma} we can write
\begin{align*}
  \omega_n &= e^{-iA/2} \omega_p \\
   		   &= e^{-iA/4}\,.
\end{align*}
The result follows.
\end{proof}

\subsection{Proof of Littlejohn and Yu's formula}
\begin{figure}
	\centering
    \includegraphics[height=0.28\textheight]{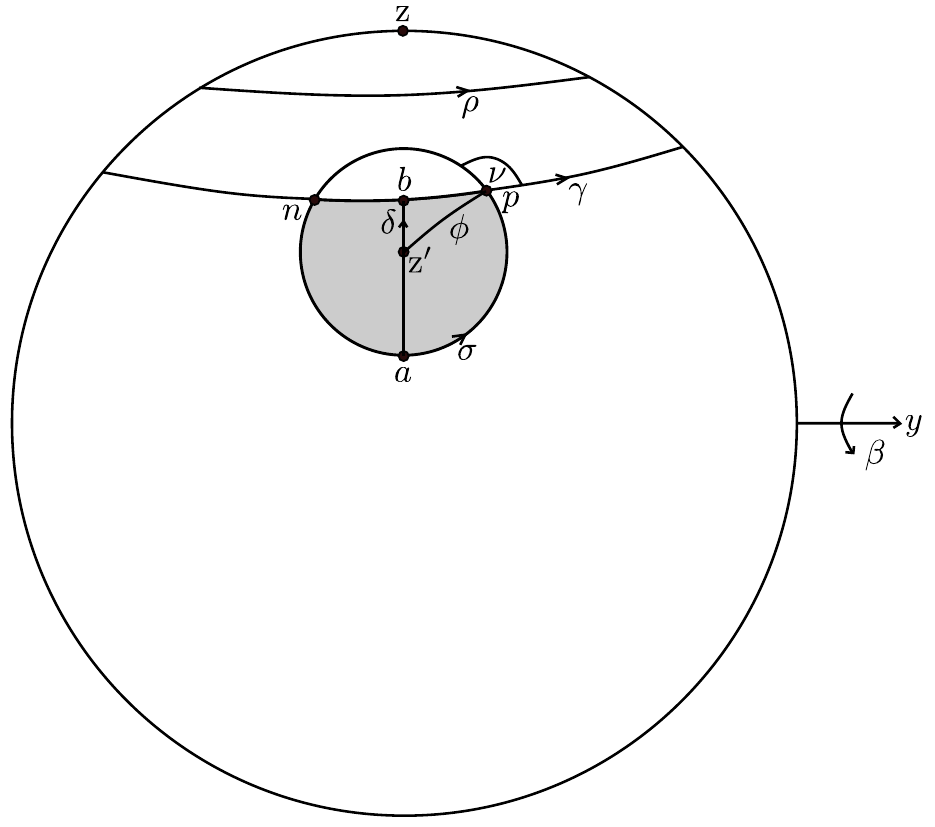}
    \caption{\label{wignermatfig}Acting with $R_y(\beta)$ rotates the loop $\rho$ to $\sigma = R_y(\beta) \rho$. The lunar area $A$ is the area to the right of $\gamma$ and to the left of $\sigma$.}
\end{figure}

Consider a general Wigner small-d matrix element,
\[
 d^j_{m_2 m_1} (\beta) = \langle jm_2 | U_y(\beta) | jm_1 \rangle .
\]
Let us, for brevity, write $\tildegamma \equiv \tildegamma_{m_2}$ and $\tilde{\rho} \equiv \tildegamma_{m_1}$ for the standard lifts of the loops on $S^2$ at constant heights $\cos \theta_1 = m_1/j$ and $\cos \theta_2 = m_2/j$ respectively, as in Definition \ref{standardloop}. We saw in Section \ref{loopsareangs} that 
\[
 | j m_1 \rangle = c \Psi^{k}_{\tilde{\rho}} \quad \text{and } |j m_2 \rangle = c' \Psi^{(k)}_{\tildegamma}
\]
for some positive real constants $c,c'$. Applying $U_y(\beta)$ to the coherent loop state $\Psi^{(k)}_{\tilde{\rho}}$ has a simple geometric formula, as we saw in Section \ref{groupactionloopstates}:
\begin{equation}
  U_y(\beta) \Psi^{(k)}_{\tilde{\rho}} = \int_0^T \psi^{(k)}_{U_y(\beta) \tilde{\rho}(t)} dt
\end{equation}
Now, acting with $SU(2)$ on $\tilde{\rho}(t)$ will produce a parallel-transported lift $\tildesigma(t)$ in $S^3$ of the rotated base curve $\sigma(t)=R_y(\beta)\rho(t)$ in $S^2$.
Therefore, the Wigner matrix element computes as the normalized inner product of coherent loop states, as in the warm-up example from Proposition \ref{warmupprop}:
\begin{equation} \label{wignermatcohloop}
 d^j_{m_2 m_1} (\beta)  = \frac{\langle \Psi^{(k)}_{\tildegamma}, \Psi^{(k)}_{\tildesigma} \rangle}{\sqrt{\langle \Psi^{(k)}_{\tildegamma}, \Psi^{(k)}_{\tildegamma} \rangle \langle \Psi^{(k)}_\tildesigma, \Psi^{(k)}_\tildesigma \rangle}}.
\end{equation}
We say that $\beta$ is {\em classically allowed} if $\gamma$ intersects $\sigma$ transversely, as in Figure \ref{wignermatfig}.

\begin{theorem} \label{lyutheorem}For fixed classically allowed $\beta$, as $j \rightarrow \infty$, 
\[
 d^j_{m' m}(\beta) \sim \sqrt{\frac{2}{j \pi V}} \cos \left( jA/2 + \nu/2 - \pi/4\right)
\]
where $A$ is the area to the right of $\gamma$ and to the left of $\sigma$, $\nu$ is the angle between these curves at their points of intersection, and $V$ is the volume of the parallelopiped spanned by $\z=(0,0,1)$, $\z' = R_y(\beta)(\z)$, and the point of intersection $p$.
\end{theorem}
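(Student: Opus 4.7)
The plan is to apply formula \eqref{wignermatcohloop} and reduce the computation of $d^j_{m'm}(\beta)$ to Corollary \ref{corsymcurves} in the numerator and Lemma \ref{norm_lemma} in each denominator factor. First I would verify that, for classically allowed $\beta$, the loops $\gamma = \gamma_{m'}$ and $\sigma = R_y(\beta)\rho$ with $\rho = \gamma_m$ satisfy the hypotheses of Corollary \ref{corsymcurves}: they intersect transversely at exactly two points, with equal intersection angles $\nu$ at each. The equal-angles claim follows from the reflection symmetry of the configuration across the plane spanned by $\hat{\z}$ and $\hat{\z}' = R_y(\beta)\hat{\z}$: this reflection swaps the two intersection points while preserving each circle setwise, so the two intersection angles coincide.

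Applying Corollary \ref{corsymcurves} then gives
\[
 \langle \Psi^{(k)}_\tildegamma, \Psi^{(k)}_\tildesigma \rangle \sim \sqrt{\frac{8}{\sin\nu}}\cos\bigl(kA/4 + \nu/2 - \pi/4\bigr),
\]
and Lemma \ref{norm_lemma} gives
\[
 \sqrt{\langle \Psi^{(k)}_\tildegamma, \Psi^{(k)}_\tildegamma \rangle \langle \Psi^{(k)}_\tildesigma, \Psi^{(k)}_\tildesigma \rangle} \sim \sqrt{\frac{k}{\pi}}\sqrt{T_\gamma T_\sigma}, \qquad T_\gamma = \sqrt{2}\pi\sin\theta_1,\ T_\sigma = \sqrt{2}\pi\sin\theta_2,
\]
with $\cos\theta_1 = m'/j$ and $\cos\theta_2 = m/j$. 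One small item to verify is that the natural lifts ($\tildegamma$ from Definition \ref{standardloop} and $\tildesigma = U_y(\beta)\tilde\rho$) satisfy the convention $\omega_p = e^{iA/4}$ of Corollary \ref{corsymcurves}; this is forced by the reality of $d^j_{m'm}(\beta)$ together with the positivity of the constants $c,c'$ from Lemma \ref{Psi_explicit}, and can also be checked directly from the parallel-transport formula \eqref{partransportformula}.

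The key geometric step is the identity $V = \sin\theta_1 \sin\theta_2 \sin\nu$. At an intersection point $p\in S^2$, the unit tangents to the two latitude loops are (up to sign) $\gamma'(p) = (\hat{\z}\times p)/\sin\theta_1$ and $\sigma'(p) = (\hat{\z}'\times p)/\sin\theta_2$. The standard vector identity
\[
 (\hat{\z}\times p)\times(\hat{\z}'\times p) = \bigl(\hat{\z}\cdot(\hat{\z}'\times p)\bigr)\,p - \bigl(p\cdot(\hat{\z}'\times p)\bigr)\,\hat{\z} = V\,p,
\]
together with $|p|=1$, gives $|\gamma'(p)\times\sigma'(p)| = V/(\sin\theta_1\sin\theta_2)$, which also equals $\sin\nu$ by definition of the intersection angle.

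Putting it all together, with $k=2j$ so that $kA/4 = jA/2$, and $T_\gamma T_\sigma = 2\pi^2\sin\theta_1\sin\theta_2$, the ratio \eqref{wignermatcohloop} simplifies to
\[
 d^j_{m'm}(\beta) \sim \frac{\sqrt{8/\sin\nu}}{\sqrt{k/\pi}\,\sqrt{2}\,\pi\,\sqrt{\sin\theta_1\sin\theta_2}}\cos\bigl(jA/2 + \nu/2 - \pi/4\bigr) = \sqrt{\frac{2}{\pi j V}}\cos\bigl(jA/2 + \nu/2 - \pi/4\bigr),
\]
as claimed. The main obstacle is the careful handling of the equal-angles symmetry and the phase convention; after that, the proof is a short synthesis of Corollary \ref{corsymcurves}, Lemma \ref{norm_lemma}, and the three-dimensional vector identity. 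As a sanity check, specializing to $m=m'=0$ yields $\theta_1 = \theta_2 = \pi/2$, $\nu = \beta$, $A = 2\beta$ and $V = \sin\beta$, which recovers Proposition \ref{warmupprop} exactly.
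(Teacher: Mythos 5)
Your overall route is the same as the paper's: reduce \eqref{wignermatcohloop} to Corollary \ref{corsymcurves} in the numerator and Lemma \ref{norm_lemma} in the denominator, then convert $\sin\theta_1\sin\theta_2\sin\nu$ into the parallelepiped volume $V$. Two of your additions are genuinely worth having: the reflection-symmetry argument for the equal-angles hypothesis (reflection in the plane of $\hat{\z}$ and $\hat{\z}'$ preserves each latitude circle, swaps the intersection points, and reverses orientation, which also explains why one point has $\orr=+1$ and the other $-1$), and the triple-product identity $(\hat{\z}\times p)\times(\hat{\z}'\times p)=V p$, which is cleaner and more self-contained than the paper's appeal to the spherical law of sines. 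The final arithmetic and the $m=m'=0$ sanity check are correct.

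The gap is in the one step you wave at rather than prove: the phase convention $\omega_p=\tildegamma_p/\tildesigma_p=e^{iA/4}$. Reality of $d^j_{m'm}(\beta)$ does not force this. Writing $\omega_p=e^{iA/4}\epsilon$, the holonomy relation $\omega_n=e^{-iA/2}\omega_p$ turns the two-point sum into $2\epsilon^k\cos(kA/4+\nu/2-\pi/4)$, so reality only forces $\epsilon^k=\pm1$; the positivity of $c,c'$ cannot resolve the residual sign because the cosine factor itself changes sign as $\beta$ varies. This sign is not a pedantic worry --- it is exactly the kind of term that produces the $(-1)^{j-m'}$ in Littlejohn and Yu's own version \eqref{lyformula} under a different area convention, and a priori it could depend on $m$ and $m'$, so the $m=m'=0$ check does not settle it. You must carry out the direct verification you allude to: the paper does this by writing $e^{-iA/4}$ as the holonomy of the closed path $p \xleftarrow{\sigma} a \xleftarrow{\delta^{-1}} b \xleftarrow{\gamma^{-1}} p$ bounding half the lune (where $\delta$ is the $\phi=0$ longitude arc), which reduces \eqref{thingtocheck} to the statement that parallel transport of $\tildesigma_a=U_y(\beta)\tilde{\rho}_a$ along $\delta$ lands on $\tildegamma_b$; this holds because along $\phi=0$ parallel transport of the section $u(\theta,0)$ coincides with the $U_y$ action. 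With that check supplied, your proof is complete and matches the paper's.
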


\begin{figure}
\adjustbox{valign=m}{\includegraphics[width=0.35\textwidth]{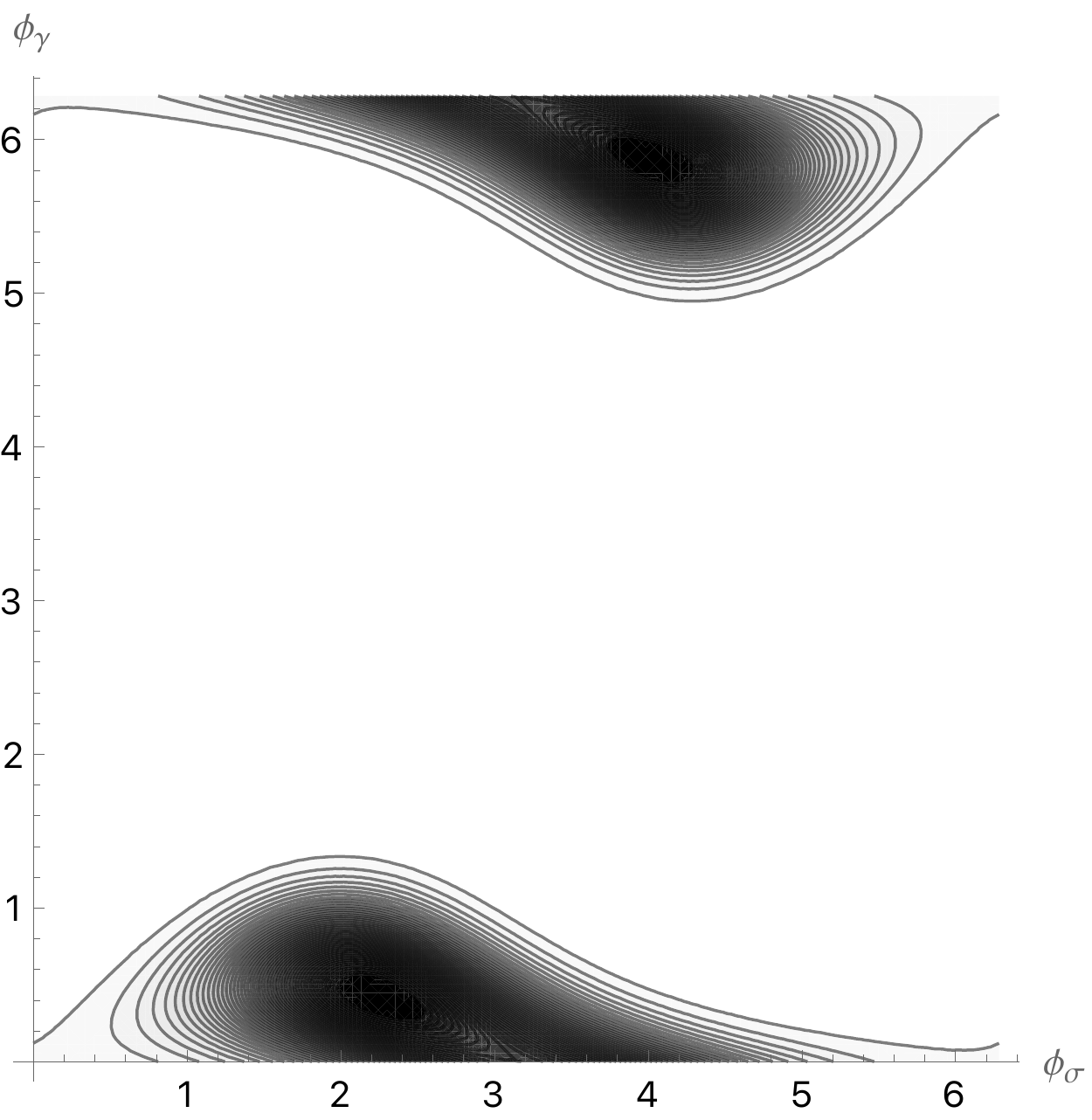}}
\,
\adjustbox{valign=m}{\includegraphics[width=0.05\textwidth]{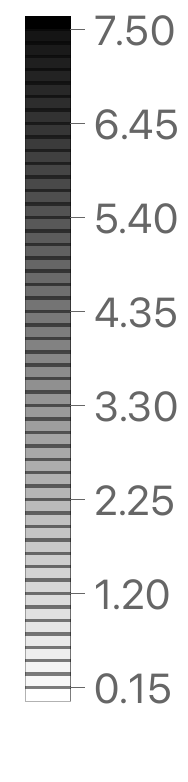}}
\hfill
\adjustbox{valign=m}{\includegraphics[width=0.35\textwidth]{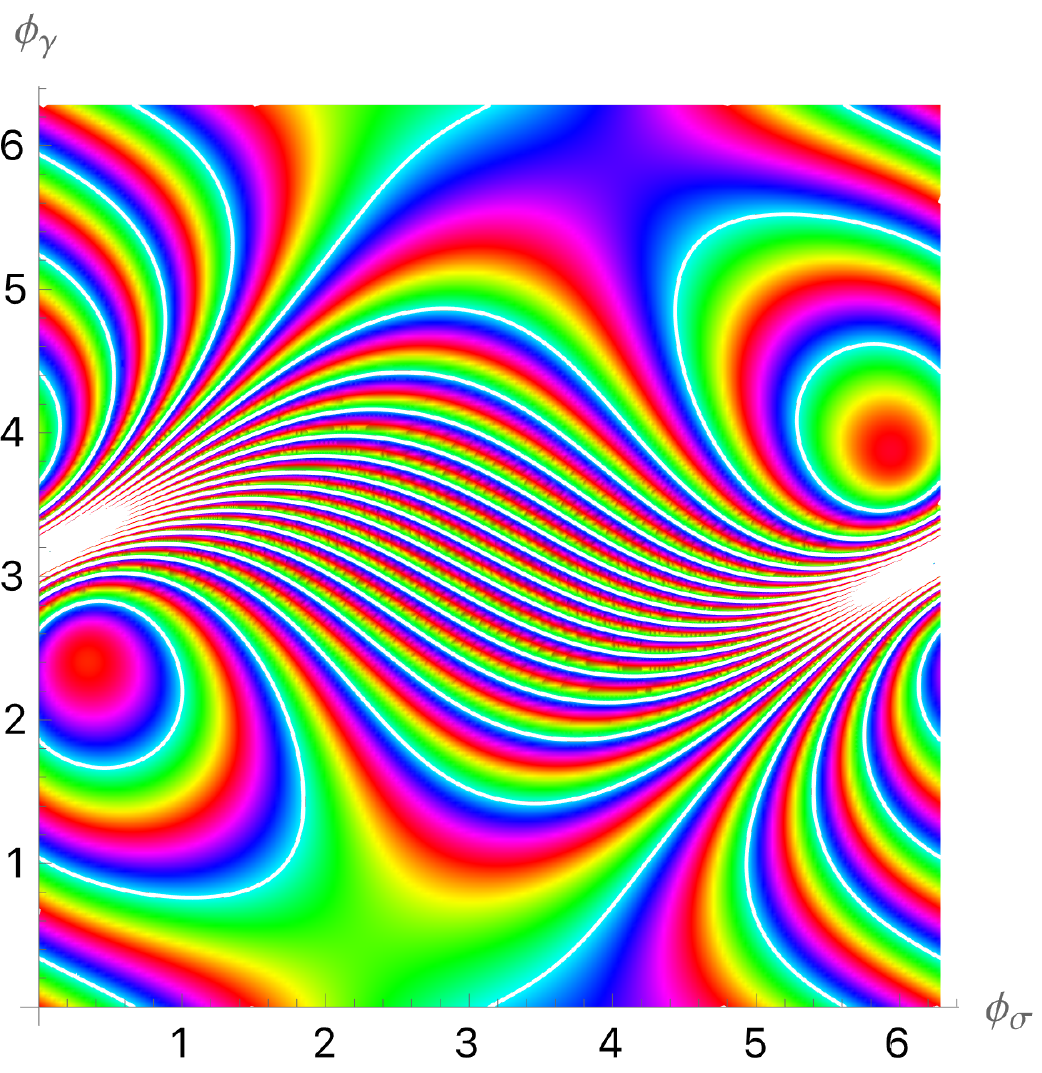}}
\,
\adjustbox{valign=m}{\includegraphics[width=0.15\textwidth]{phasewheel.pdf}}
\caption{\label{rectfig}A continuation of Figures \ref{torusvssphere} and \ref{wignercomp}. The magnitude and phase of $\langle \psi^{(k)}_{\gamma(\phi_\gamma)}, \psi^{(k)}_{\sigma(\phi_\sigma)}\rangle$ are shown, in the case $\beta = 1.4$. Note that the phase diagram has 4 critical points, but only the 2 saddle points at approximately $\pm (2.24, 0.43)$ are simultaneously critical points of the magnitude and hence contribute to the complex stationary phase approximation.}
\end{figure}

\begin{proof}
We can calculate the numerator of \eqref{wignermatcohloop} using Corollary \ref{corsymcurves}, but first we need to check that the standard lifts $\tildegamma_{m'}$ and $\tildegamma_{m}$ we have chosen satisfy
\begin{equation} \label{thingtocheck}
  \tildegamma_{p} = e^{iA/4} \tilde\sigma_{p} \, .
\end{equation}
Now we can identify $e^{-iA/4} = \Hol(p \xleftarrow{\sigma} a \xleftarrow{\delta^{-1}} b \xleftarrow{\gamma^{-1}} p)$, where $\delta$ is the geodesic arc along the line of longitude at $\phi=0$ as shown in Figure \ref{wignermatfig}. So checking \eqref{thingtocheck} is equivalent to checking that $\tildegamma_b$ is equal to the parallel transport of $\tildesigma_a$ along $\delta$ to $b$. This is indeed true, since along the line of longitude $\phi=0$, parallel transporting $u(\theta, 0)$ by $\Delta \theta$ in the $\theta$ direction, which is the same as acting with $U_y(\Delta \theta)$, simply changes $u(\theta, 0)$ to $u(\theta + \Delta \theta, 0)$. Therefore, by Corollary \ref{corsymcurves},
\[
\langle \Psi^{(k)}_{\tildegamma}, \Psi^{(k)}_{\tildesigma} \rangle \sim \sqrt{\frac{8}{\sin \nu}} \cos \left(A/4 + \nu/2 - \pi/4\right) \, .
\]
On the other hand, we have
\[
\langle \Psi^{(k)}_{\tildesigma}, \Psi^{(k)}_{\tildegamma} \rangle \sim \sqrt{\frac{2k}{\pi}} \sin\theta_1, \quad \langle \Psi^{(k)}_{\tildesigma}, \Psi^{(k)}_{\tildegamma} \rangle \sim \sqrt{\frac{2k}{\pi}} \sin\theta_2
\]
from Lemma \ref{norm_lemma}. The result follows from using the law of sines to observe that $V = \sin \beta \sin \theta \sin \phi = \sin \theta \sin \theta' \sin \nu$ (see \cite{littlejohn2009uniform}).
\end{proof}

\begin{figure}
\centering
\adjustbox{valign=m}{\includegraphics[width=0.45\textwidth]{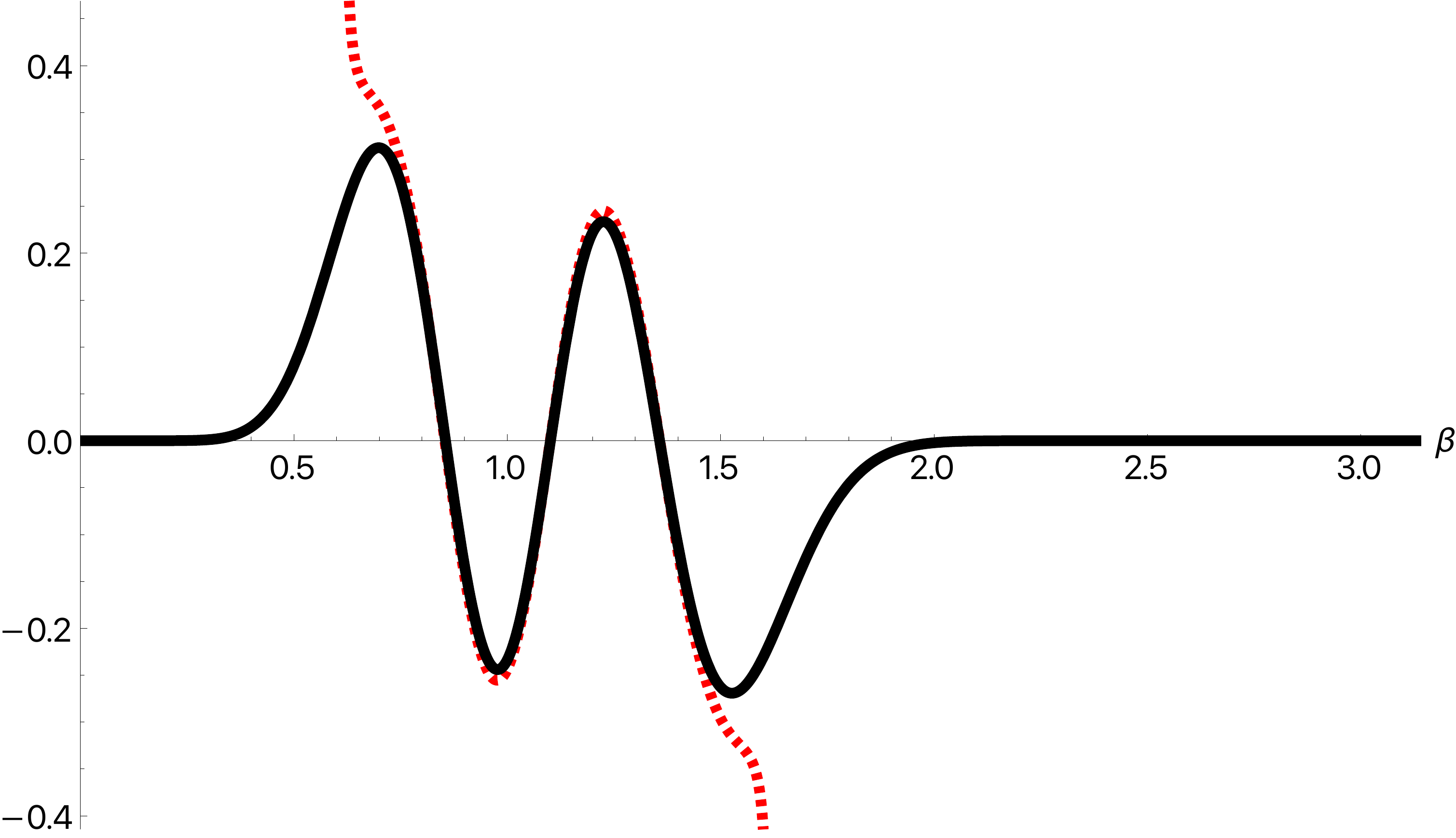}}
\adjustbox{valign=m}{\includegraphics[width=0.45\textwidth]{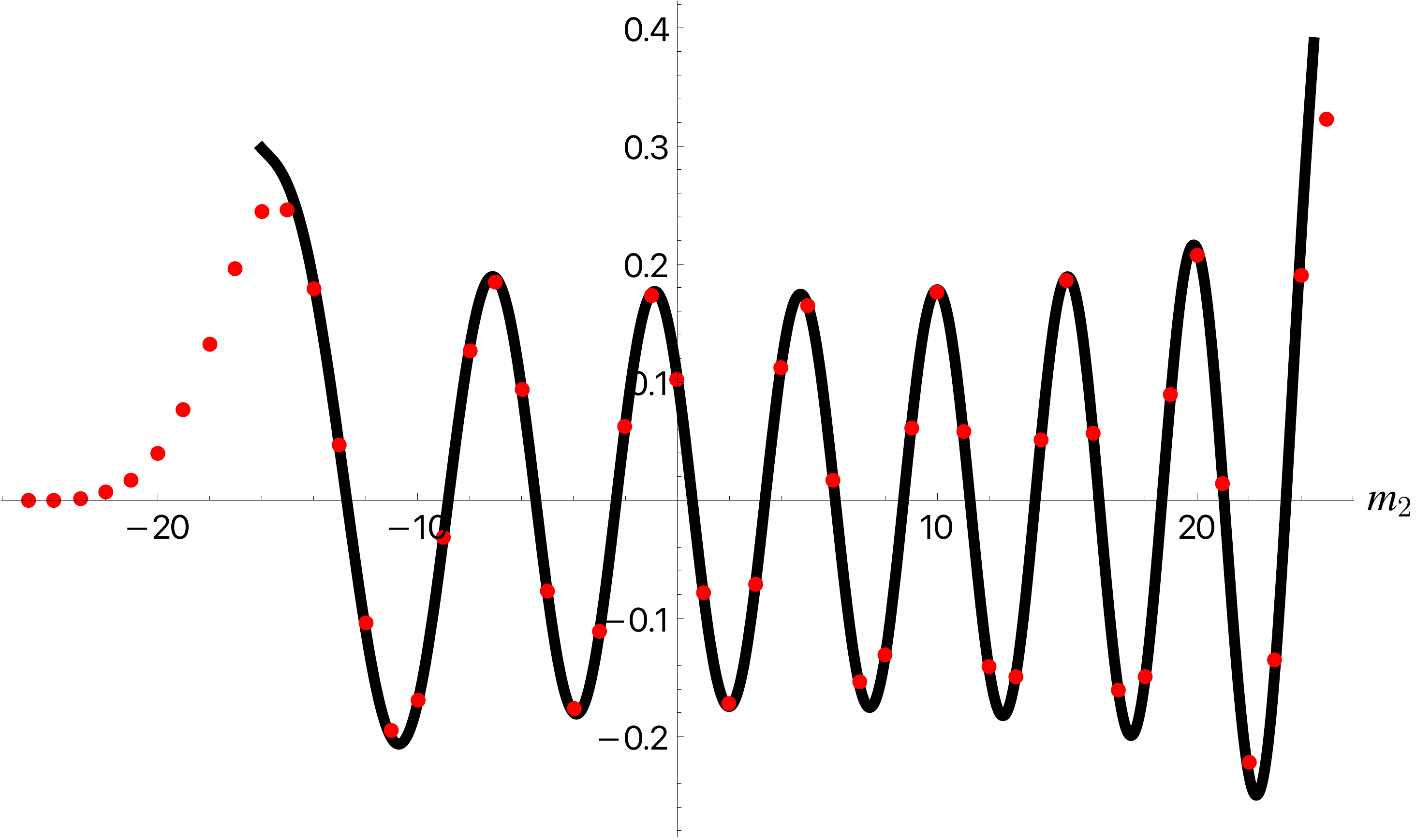}}

\caption{\label{wignercomp} The Wigner matrix element $d^j_{m_2 m_1} (\beta)$ (solid line) versus the `spherical area' approximation given in Theorem \ref{lyutheorem} (dashed line), for $j=25$ and $m_1 = 11$. In (a), $m_2 = 22$ and $\beta$ varies, while in (b), $\beta=1.2$ and $m_2$ varies from $-j$ to $j$. Note that the approximation is only valid in the classically allowed region where $\sigma$ intersects $\gamma$ and breaks down at the boundary of this region. }
\end{figure}

\begin{remark} Littlejohn and Yu had a slightly different (but equivalent) formula \cite[Equation 59]{littlejohn2009uniform}:
\begin{equation} \label{lyformula}
d^j_{mm'} (\beta) \sim \frac{(-1)^{j-m'}}{\sqrt{\pi/2J V}} \cos(\Phi - \pi/4)
\end{equation}
In their setup, the angular momentum sphere had radius $J=j+1/2$, and the Bohr-Sommerfeld loops occurred at the discrete angles given by $\cos\theta = m/J$, $\cos \theta' = m'/J$, which is slightly different to our setup. Moreover, for them, $\Phi$ was the dual area, namely to the {\em left} of $\gamma$ and to the {\em right} of $\sigma$, which leads to the $(-1)^{j-m'}$ term in \eqref{lyformula}. 
\end{remark}

\bibliography{references}

\end{document}